\DeclarePairedDelimiter\ceil{\lceil}{\rceil}
\DeclarePairedDelimiter\floor{\lfloor}{\rfloor}
\DeclarePairedDelimiter\round{[}{]}
\DeclarePairedDelimiter\curly{\{}{\}}
\DeclarePairedDelimiter\card{\vert}{\vert}
\title{Ensemble computation approach to the Hough transform}
\author{Timur M. Khanipov}
\affil{\slshape Institute for Information Transmission Problems\\ \slshape of the Russian Academy of Sciences (IITP RAS)}
\date{February 19, 2018}
\newtheorem{theorem}{Theorem}[section]
\newtheorem{corollary}[theorem]{Corollary}
\newtheorem{proposition}[theorem]{Proposition}
\newtheorem{remark}[theorem]{Remark}
\newtheorem{definition}[theorem]{Definition}
\newtheorem{lemma}[theorem]{Lemma}
\newcommand\defstyle[1]{{\slshape #1}}
\newcommand\supp[1]{\mathrm{supp}\:#1}
\newcommand\orbit[1]{\braket{#1}}
\newcommand\shiftspan[1]{\round{#1}}
\def\assign{\coloneqq}
\newcommand\compchain[1]{\mathfrak{#1}}    % computation chain
\newcommand\computesby[1]{\stackrel{#1}{\computes}} 
\newcommand\computes{\leadsto}
\newcommand\depth{d}
\newcommand\combws[1]{\card{#1}}   % combination weight single
\newcommand\combw[2]{\combws{{#1}\combines{#2}}}  % combination weight
\newcommand\ens[1]{\mathcal{#1}}   % ensemble
\newcommand\Lens{\mathcal L} % L-ensemble
\newcommand\comref{\vee}     % common refinement
\newcommand\finer{\preccurlyeq}
\newcommand\finerarr{\rightarrow}   % finer arrow 
\newcommand\combines{\vartriangleright}   % or \smt ?
\newcommand\eqset[3]{X^{{#1}, {#2}}_{#3}}         % equality set
\newcommand\eqsetens[2]{\ens X^{{#1}, {#2}}}      % eqset ensemble
\newcommand\pset[2]{\mathcal{P}^{#1}_{#2}}
\newcommand\lset[2]{\mathcal{L}^{#1}_{#2}}
\newcommand\lsets[1]{\mathcal{L}_{#1}}    % lset single
\newcommand\lcount[1]{E_{#1}}   % level count
\newcommand\glcount[1]{E_{#1}^*}   % good (two children) level count
\newcommand\ecount{E}    % elevations count
\newcommand\lpat[2]{L^{#1}_{#2}}
\renewcommand\restriction{\big\vert}
\newcommand\dind[2]{\sigma(#1, #2)}  % decomposition index set
\newcommand\dinds{\sigma}  % dind simple
\newcommand\lfuncs[1]{l^{#1}}   % lfunc single
\newcommand\eqdef{\stackrel{\mathrm{def}}{=}}
\newcommand\image{\mathbb I}
\newcommand\w{w}
\newcommand\ww{\w-1}
\newcommand\hh{h-1}
\newcommand\wrange{\curly{0,1,...,\ww}}
\newcommand\X{X}
\newcommand\Y{Y}
\newcommand\modfunc[2][h]{\mathrm{mod}_{#1}(#2)}
\newcommand\modfuncf[2][h]{\mathrm{mod}_{#1}\:#2}   % modfunc free (of brackets)
\newcommand\grf[1]{\hat{#1}}
\begin{document}

\maketitle

\begin{abstract}
It is demonstrated that the classical Hough transform with shift-elevation parametrization of digital straight lines has additive complexity of at most $\mathcal{O}(n^3 / \log n)$ on a $n\times n$ image. The proof is constructive and uses ensemble computation approach to build summation circuits. The proposed method has similarities with the fast Hough transform (FHT) and may be considered a form of the ``divide and conquer'' technique. It is based on the fact that lines with close slopes can be decomposed into common components, allowing generalization for other pattern families. When applied to FHT patterns, the algorithm yields exactly the $\Theta(n^2\log n)$ FHT asymptotics which might suggest that the actual classical Hough transform circuits could smaller size than $\Theta(n^3/ \log n)$.
\end{abstract}

{\bf Keywords:} Hough transform (HT), fast Hough transform (FHT), additive complexity, ensemble computation, partition tree, summation circuit, digital straight line.

\section{Introduction}

The Hough transform is a well-known procedure in the area of image processing. It is one of discrete analogues of the integral Radon transform and is widely used for solving numerous tasks, the obvious one being line detection. A good (albeit incomplete) review is given in~\cite{priyanka}, it shows that a variety of techniques may be meant under this term. Here we concentrate on its simplest and most straightforward form which we shall call the classical Hough transform (in \cite{priyanka} it would be named the ``standard Hough transform'' (SHT) 
for the case of straight lines). Supposing that an $n\times n$ image is a numerical function $f=f_{x,y}$ on $\mathbb Z^2$ with bounded support $\curly{0, 1, ..., n{-}1}\times\curly{0, 1, ..., n{-}1}$, the classical Hough transform $HT(f)$ maps {\slshape quantizations} $L$ of continuous lines to sums 
\begin{equation}\label{intro_ht_eq}
HT(f)\colon L\mapsto\sum\limits_{(x,y)\in L} f_{x,y}.
\end{equation}
\defstyle{Patterns} $L$, also called \defstyle{digital straight lines} (DSL)~\cite{klette}, are taken from some specific family $\Lens(n)$, the choice of which defines the particular classical Hough transform type. The total number of all possible DSLs is $\Theta(n^4)$~\cite{koplowitz} but in practice $\card{\Lens(n)} = \Theta(n^2)$ is sufficient providing dense enough covering of continuous lines. Since the case of {\slshape mostly vertical} lines is symmetrical to the case of {\slshape mostly horizontal} lines, w.l.o.g. one can assume that line slopes belong to the interval $[0,1]$ (the $[-1,0]$ case is also symmetrical) and consider only the lines which are ``mostly horizontal inclined to the right'' (i.e. all lines are split into four symmetrical ``quadrants''). In this case one of possible $\Lens(n)$ choices would be to take all lines of form
\begin{equation}\label{intro_lines_eq}
\lpat es\colon\;\;\;\;\;\;\;  y = \round*{\frac{e}{n-1}x} + s, \;\;\;\;\; e=0,1,...,n-1, \;\;\; s\in\mathbb Z,
\end{equation}
\begin{wrapfigure}[5]{hR}{0.2\textwidth}
\begin{tikzpicture}
[scale=0.3]
\def\w{11}
\def\e{4}
\pgfmathsetmacro{\q}{\w-1}
\pgfmathsetmacro{\h}{\e+1}

\draw[help lines,step=1] (0,0) grid (\w,\h);

\begin{scope}
[xshift=0.5cm, yshift=0.5cm]

\def\sqpix[#1](#2,#3) {
\draw[color=black,fill=#1] (#2,#3) ++(-0.4,-0.4) rectangle +(0.8,0.8);
}

\def\cipix[#1](#2,#3) {
\fill[color=#1] (#2,#3) circle (0.17);
}

\foreach \x in {0,...,\q}
{
  \pgfmathtruncatemacro{\y}{round(\e * \x / \q) }
  \sqpix[gray!30](\x, \y)
}

\draw (0,0) -- (\q, \e);
\fill (0,0) circle (0.1);
\fill (\q,\e) circle (0.1);
\end{scope}
\end{tikzpicture}
\end{wrapfigure}
i.e. all lines which pass through pairs of integer points $(0,s)$ and $(n-1, s+e)$, lying on continuations of the left and right image borders. For otherwise line~(\ref{intro_lines_eq}) does not intersect with the image, \defstyle{intercept} (or \defstyle{shift}) $s$ can be assumed to be in $(-n,n)\cap\mathbb Z, \;\;\;e$ is called \defstyle{elevation}. This elevation-intercept parametrization produces $\Theta(n^2)$ digital straight lines. In fact such splitting of lines into two families is one way of overcoming the issue of slope unboundedness in~(\ref{intro_lines_eq}) when the $x$ axis intersection angle approaches $\frac{\pi}2$~\cite[section 2.1]{priyanka}. Another possible approach is to use normal line parametrization~\cite{duda}.

Computing the Hough transform~(\ref{intro_ht_eq}) for all patterns~(\ref{intro_lines_eq}) would require $\Theta(n^3)$ additions when a straightforward independent summation along all lines is used. For performance-heavy applications this complexity might be a critical limitation, raising a natural question of reducing the number of binary operations by a careful choice of the summation order. In general form this task is known as the  {\slshape ensemble computation problem}~\cite{gary}, it is more familiar with a different formulation in the boolean circuits theory~\cite{CLB}. The ensemble computation problem is NP-complete \cite{gary} making it extremely hard to devise an optimal algorithm.

One possible workaround is to use a specific approximation to the Hough transform by replacing digital straight lines~(\ref{intro_lines_eq}) with a different ensemble of patterns which would allow recursive computation (see fig.~\ref{fht_fig}). The algorithm constructed in this manner is known (ambiguously) as the fast Hough transform (FHT) and was reinvented several times (\cite{gotz_dis}, \cite{vuillemin}, \cite{gotz}, 
\cite{brady}, \cite{dpn_hough}). Being extremely convenient for computation, it requires only $\Theta(n^2 \log n)$ summations, so certain lack of attention to this method is surprising (for example, published in 2015 survey~\cite{priyanka} does not mention it).

Using a specific result about boolean linear operators complexity~\cite[theorem 3.12]{CLB} it has recently been proved~\cite{rad_lower} that neither the classical~(\ref{intro_lines_eq}) nor the fast Hough transform can be computed in less than $\Theta(n^2\log n)$ additions, but non-trivial ($o(n^3))$ upper bounds were unknown for the classical transform. In this paper we make an improvement in this direction by suggesting a method for building a computational circuit computing the Hough transform in $O(\frac{n^3}{\log n})$ additions. The key idea of the complexity estimation is to combine lines with consequent slopes (elevations) from their common subpatterns, then repeat this for lines with slopes differing by $2$ and so forth finally arriving to the ultimate single elevation. It was inspired by the FHT algorithm but is more sophisticated. The method constructs a tree consisting of image partitions with each parent node being a common refinement of its children.
An interesting fact is that when applied to the FHT patterns, this tree produces exactly the $\Theta(n^2\log n)$ circuit with optimal size, suggesting that the complexity of the classical Hough transform computational circuits produces by the proposed algorithm might be smaller than the proven upper bound.

As in the case of the FHT, we consider {\slshape ``cyclic''} lines from a single quadrant (mostly horizontal inclined to the right). Their ``wrapping over image border'' property is convenient because it provides a fixed line length and guarantees that every pattern is a function graph defined on the whole domain $\curly{0,1,...,n-1}$. The general case is of course reduced to this one, see the discussion section.

The rest of the paper is organized as follows. Section~\ref{facts_section} introduces basic notations and reproduces several useful common facts, section~\ref{big_ensembles_section} establishes the ensembles framework (not connected with images) and provides the method for constructing computational circuits, section~\ref{image_section} introduces an important concept of span partitions and investigates its properties, section~\ref{hough_section} formally defines the Hough ensemble and proves the main complexity estimation theorem, while section~\ref{discussion_section} analyzes the obtained results and suggests a few directions for further research.

\section{Notations and useful common facts}\label{facts_section}

We use the following notations for operating with numbers and sets. For $t\in\mathbb R$ symbols $\floor t$ and $\ceil t$ denote the usual floor and ceiling operations, $\round t\eqdef\floor{t+\frac12}$ is rounding to the nearest integer. For $p\in\mathbb N = \curly{1,2,3,...}$ and $n\in\mathbb Z$ we denote by $\modfunc[p]n$ or $\modfuncf[p]n$ the remainder of dividing $n$ by $p$, satisfying condition $0\leqslant\modfunc[p]n<p$. Symbols $\subset$ and $\supset$ mean {\slshape proper} set inclusion, while $\subseteq$ and $\supseteq$ allow set equality. $A\sqcup B$ means disjoint union, i.e. $C = A\sqcup B$ if $C = A\cup B$ and $A\cap B = \varnothing$. Set cardinality is denoted as $\card A$ and $2^A$ is the set of all $A$ subsets.

Function $f\colon U\to V$ is an \defstyle{injection} if $f(x_1) = f(x_2)$ yields $x_1 = x_2$. Injections have the following important easily verifiable property:
\begin{proposition}\label{injection_prop}
Injective function $f\colon U\to V$ preserves set structure on $U$, i.e. for $R,S\subseteq U$ the following statements are true:
\begin{enumerate}
\item $f(R) \square f(S) = f(R \square S)$, where $\square\in\curly{\cap, \cup, \sqcup, \setminus}$.
\item $f(R) \bigcirc f(S) \Leftrightarrow R \bigcirc S$, where $\bigcirc\in\curly{=, \subset, \supset, \subseteq, \supseteq}$.
\item $\card{f(R)} = \card R$.
\end{enumerate}
\end{proposition}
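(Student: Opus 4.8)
The plan is to prove each of the three statements directly from the definition of injectivity, exploiting that $f$ restricted to the relevant subset of $U$ is a bijection onto its image. First I would establish a convenient lemma (or just observe inline) that for any $R \subseteq U$ and any $v \in V$, one has $v \in f(R)$ if and only if there is a \emph{unique} $u \in R$ with $f(u) = v$; more to the point, if $f(u) \in f(R)$ for some $u \in U$, then $u \in R$. This last implication is exactly where injectivity is used: from $f(u) = f(u')$ with $u' \in R$ we get $u = u' \in R$. Without injectivity this fails, and indeed all three claims fail, so this observation is the crux even though it is elementary.

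For part 1, I would handle the four operators $\cap, \cup, \sqcup, \setminus$ as follows. The inclusion $f(R \cup S) = f(R) \cup f(S)$ holds for arbitrary functions (image of a union is the union of images), so nothing is needed there. For intersection, the inclusion $f(R \cap S) \subseteq f(R) \cap f(S)$ is again automatic; for the reverse, take $v \in f(R) \cap f(S)$, write $v = f(r)$ with $r \in R$ and $v = f(s)$ with $s \in S$, and conclude $r = s$ by injectivity, hence $v \in f(R \cap S)$. For set difference, $f(R \setminus S) \subseteq f(R)$ is clear; to see $f(R \setminus S) \cap f(S) = \varnothing$, suppose $v$ lies in both, so $v = f(u)$ with $u \in R \setminus S$ and $v = f(s)$ with $s \in S$, forcing $u = s$, a contradiction — this gives $f(R \setminus S) \subseteq f(R) \setminus f(S)$, and combined with $f(R) = f((R\setminus S) \sqcup (R \cap S)) = f(R \setminus S) \cup f(R \cap S) \subseteq f(R\setminus S) \cup f(S)$ yields the reverse inclusion $f(R) \setminus f(S) \subseteq f(R \setminus S)$. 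Finally $\sqcup$ is just $\cup$ together with the observation that $R \cap S = \varnothing$ implies $f(R) \cap f(S) = f(R \cap S) = f(\varnothing) = \varnothing$, using the intersection case already proved.

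For part 2, I would reduce the relations to part 1. Equality: $R = S \iff R \setminus S = \varnothing$ and $S \setminus R = \varnothing$, and by part 1 (difference) together with the fact that $f(T) = \varnothing \iff T = \varnothing$ (injectivity not even needed here, just that $f$ is a function on a set), this is equivalent to $f(R)\setminus f(S) = \varnothing$ and $f(S) \setminus f(R) = \varnothing$, i.e. $f(R) = f(S)$. Containment: $R \subseteq S \iff R \setminus S = \varnothing \iff f(R \setminus S) = f(R)\setminus f(S) = \varnothing \iff f(R) \subseteq f(S)$. Proper containment $\subset$ is the conjunction of $\subseteq$ and $\neq$, both already handled; $\supseteq$ and $\supset$ are symmetric. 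For part 3, finiteness need not be assumed if cardinality is read as equinumerosity: the corestriction $R \to f(R)$ is surjective by definition of $f(R)$ and injective because $f$ is, hence a bijection, so $\card{f(R)} = \card R$.

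I do not anticipate a genuine obstacle here — the statement is flagged as ``easily verifiable'' and the work is bookkeeping. The only mild care needed is to organize the argument so that later parts cite earlier ones rather than re-deriving everything from scratch, in particular routing $\sqcup$ through $\cap$, routing all of part 2 through the $\setminus$ case of part 1, and making sure the one true use of injectivity (``$f(u)$ in the image of $R$ forces $u \in R$'') is isolated so it is visibly the hypothesis that makes everything go.
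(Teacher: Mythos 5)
Your proof is correct and complete. The paper itself offers no proof of this proposition --- it is stated as ``easily verifiable'' and left to the reader --- so there is nothing to compare against; your elementary verification, which correctly isolates the single use of injectivity (that $f(u)\in f(R)$ forces $u\in R$) and routes the $\sqcup$ case through $\cap$ and all of part~2 through the $\setminus$ identity, supplies exactly the bookkeeping the author omitted.
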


Any function $f\colon U\to V$ is naturally extended to a function $f_1\colon 2^U\to 2^V$ by rules $f_1(P) \eqdef \curly{f(x)\mid x\in P}\subseteq 2^V$ for $\varnothing\neq P\subseteq U$ and $f_1(\varnothing) \eqdef \varnothing$. It is usually clear which set the argument belongs to, so by standard practice we use the same symbol $f$ for both cases. This extension can further be performed for $2^{2^U}$ and so forth, if $f\colon U\to V$ is injective then all such extensions are also injective.

We actively use the concept of a function graph, so recall the necessary terms. Any function $f\colon U\to V$ induces injective embedding $\grf {f}\colon U\to U\times V$ by the rule $$\grf{f}(x) \eqdef (x, f(x)).$$ A \defstyle{graph} of function $f$ on a subset $A\subseteq U$ is the set $G = \grf{f}(A)\subseteq A\times V$. \defstyle{Projection} $\pi\colon U\times V\to U$ is defined as $\pi((u,v)) \eqdef u$. Restriction $\grf f_0\eqdef \grf f\restriction_A\colon A\to G$ is a bijection with $\grf f_0^{-1} = \pi_0\eqdef\pi\restriction_{G}$. Where it does not lead to confusion, symbols $\grf f$ and $\pi$ are used in place of $\grf f_0$ and $\pi_0$. The following property is obvious yet so useful that we formulate it separately:
\begin{proposition}\label{common_proj_prop}
For two functions $f,g\colon U\to V$ and sets $A, B\subseteq U$ let $G=\grf{f}(A)\cap\grf{g}(B)$. Then the following statements are true:
\begin{enumerate}
\item $f\restriction_{\pi(G)}=g\restriction_{\pi(G)}$.
\item $G = \grf{f}(\pi(G)) = \grf{g}(\pi(G)).$
\item $\pi(G) = A\cap B\cap X^{f,g}$, where $X^{f,g} = \pi(\grf f(U)\cap\grf g(U)) = \curly{x\in U\mid f(x) = g(x)}$.
\end{enumerate}
\end{proposition}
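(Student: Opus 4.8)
The plan is to verify each of the three claims directly from the definitions of $\grf f$, $\grf g$, and $\pi$, working pointwise on elements of the ambient product $U\times V$. Throughout write a generic element of $G$ as $(x,v)$ with $x\in U$ and $v\in V$; since $G\subseteq\grf f(A)$ we have $x\in A$ and $v=f(x)$, and since $G\subseteq\grf g(B)$ we have $x\in B$ and $v=g(x)$. This single observation — that $(x,v)\in G$ forces $x\in A\cap B$ together with $f(x)=g(x)=v$ — is really the whole content; the three items are just repackagings of it.

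For item 1, take any $x\in\pi(G)$. By definition of $\pi$ there is some $v$ with $(x,v)\in G$, and by the observation above $f(x)=v=g(x)$, so $f$ and $g$ agree on $\pi(G)$. For item 2, I would show $G\subseteq\grf f(\pi(G))$ and the reverse inclusion separately. If $(x,v)\in G$ then $x\in\pi(G)$ and $v=f(x)$, so $(x,v)=\grf f(x)\in\grf f(\pi(G))$. Conversely, if $x\in\pi(G)$ pick $v$ with $(x,v)\in G$; then $v=f(x)$, so $\grf f(x)=(x,f(x))=(x,v)\in G$. This gives $G=\grf f(\pi(G))$, and the identical argument with $g$ in place of $f$ (using item 1 to know $g(x)=f(x)=v$, or just rerunning the same reasoning) gives $G=\grf g(\pi(G))$. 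For item 3, one inclusion: if $x\in\pi(G)$ then as above $x\in A$, $x\in B$, and $f(x)=g(x)$, so $x\in X^{f,g}$, hence $x\in A\cap B\cap X^{f,g}$. For the other inclusion, suppose $x\in A\cap B\cap X^{f,g}$; then $f(x)=g(x)$, call this common value $v$, and $\grf f(x)=(x,v)\in\grf f(A)$ while $\grf g(x)=(x,v)\in\grf g(B)$, so $(x,v)\in G$ and therefore $x\in\pi(G)$. The characterization $X^{f,g}=\pi(\grf f(U)\cap\grf g(U))=\{x\in U\mid f(x)=g(x)\}$ itself follows by the very same pointwise bookkeeping applied with $A=B=U$, so it can either be cited as the special case or re-derived in one line.

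I do not expect any genuine obstacle here — the proposition is labelled ``obvious'' in the text, and indeed every step is a one-line unwinding of what $\grf f(x)=(x,f(x))$ means. The only thing to be careful about is the mild abuse of notation flagged in the preceding paragraph: $\grf f$ denotes both the map $U\to U\times V$ and its restriction-corestriction to a bijection onto a given graph. In this proof I would consistently use the non-restricted reading $\grf f\colon U\to U\times V$ (and its set-level extension to $2^U\to 2^{U\times V}$), since $A$, $B$, and $\pi(G)$ are all subsets of $U$ and $G$ is a subset of $U\times V$; with that convention fixed, all the manipulations above are literally set-theoretic identities and need no further justification.
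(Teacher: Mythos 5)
Your proof is correct: the single observation that $(x,v)\in G$ forces $x\in A\cap B$ and $f(x)=g(x)=v$ does indeed yield all three items, and each inclusion you check goes through. The paper offers no proof at all (it declares the proposition ``obvious''), and your elementwise unwinding is exactly the intended justification, so there is nothing to compare beyond noting that you have supplied the omitted details, including the sensible resolution of the $\grf f$ versus $\grf f_0$ notational overloading.
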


\section{Ensemble computation}\label{big_ensembles_section}
\subsection{Ensembles, partitions and combinations}

Consider some finite set $U$ which we shall call \defstyle{domain}, its subsets will be called \defstyle{patterns}. An \defstyle{ensemble} on $U$ is a {\slshape non-empty} collection $\ens A\subseteq2^U\setminus\curly{\varnothing}$ of {\slshape non-empty} patterns. Pattern $C$ is \defstyle{composed (combined)} of patterns $A$ and $B$ when $C=A\sqcup B$.   \defstyle{Support} $\supp\ens A\eqdef \bigcup\limits_{A\in\ens A}A$. Ensemble $\ens A$ is a $U$-\defstyle{partition}, \defstyle{domain partition} or simply \defstyle{partition}, if $U = \bigsqcup\limits_{A\in\ens A}A$. Partitions have the following obvious property:

\begin{proposition}\label{partition_unique_prop}
If $\ens A$ is a partition and $B = \bigcup\limits_{A\in\sigma} A,\; \sigma\subseteq\ens A$ then $B = \bigsqcup\limits_{A\in\sigma} A$ and this presentation is unique.
\end{proposition}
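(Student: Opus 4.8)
The plan is to read off both assertions directly from the definition of a $U$-partition, namely that the members of $\ens A$ are pairwise disjoint and that their union is $U$. The non-emptiness of patterns in an ensemble (part of the definition of an ensemble) will also be needed for the uniqueness half, and I expect that to be the only point requiring the slightest care; otherwise the argument is entirely routine and uses none of Propositions~\ref{injection_prop} or~\ref{common_proj_prop}.

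First I would dispatch the disjoint-union claim. Since $U=\bigsqcup_{A\in\ens A}A$, any two distinct members of $\ens A$ are disjoint; a fortiori the members of the subcollection $\sigma\subseteq\ens A$ are pairwise disjoint, so $\bigcup_{A\in\sigma}A$ is in fact a disjoint union, i.e. $B=\bigsqcup_{A\in\sigma}A$.

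For uniqueness I would show that $\sigma$ is determined by $B$ alone, by proving $\sigma=\curly{A\in\ens A\mid A\subseteq B}$. The inclusion ``$\subseteq$'' is immediate from $B=\bigcup_{A\in\sigma}A$. For ``$\supseteq$'', suppose $A'\in\ens A\setminus\sigma$; then $A'$ is disjoint from every $A\in\sigma$, so $A'\cap B=\bigcup_{A\in\sigma}(A'\cap A)=\varnothing$, and since $A'\neq\varnothing$ this forces $A'\not\subseteq B$. Hence any $\sigma'\subseteq\ens A$ satisfying $\bigcup_{A\in\sigma'}A=B$ must coincide with $\curly{A\in\ens A\mid A\subseteq B}$, which equals $\sigma$; this is exactly the uniqueness of the presentation.

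The only genuinely non-formal ingredient, as noted, is that patterns are non-empty: this is what prevents a ``stray'' element of $\ens A\setminus\sigma$ — which is automatically disjoint from $B$ — from nonetheless being a (necessarily empty, hence excluded) subset of $B$. I would state this explicitly so the reader sees where the ensemble hypothesis, as opposed to mere pairwise disjointness, is invoked.
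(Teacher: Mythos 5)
Your proof is correct; the paper states this proposition as ``obvious'' and supplies no proof of its own, so there is nothing to diverge from. Your argument is the standard one, and your observation that the non-emptiness of patterns (part of the paper's definition of an ensemble) is what makes the presentation unique --- otherwise $\sigma$ and $\sigma\cup\{\varnothing\}$ would both work --- is exactly the one point worth making explicit.
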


Partition $\ens A$ is \defstyle{finer} than partition $\ens B$ (and $\ens B$ is \defstyle{coarser} than $\ens A$) if any pattern $A\in\ens A$ is contained in some pattern $B\in\ens B$ (such $B$ is {\slshape unique} because $\ens B$ patterns do not intersect). We also say that $\ens A$ is a \defstyle{refinement} of $\ens B$ and denote this partial order relation between partitions as $\ens A\finer\ens B$ or $\ens A\finerarr\ens B$ (the latter variant for diagrams). This notation tacitly implies that both ensembles $\ens A$ and $\ens B$ are partitions of the appropriate domain (which in this section is $U$). Of course, $\supp\ens A=U$ for any partition $\ens A$ and $\ens A\finer\ens B$ yields $\card A\geqslant\card B$. There is one \defstyle{finest} partition 
\begin{equation}\label{ustar_def}
\ens U = U^*\eqdef \curly{\curly{u}\mid u\in U},
\end{equation}
i.e. $\ens U\finer\ens A$ for any partition $\ens A$, so always 
\begin{equation*}
\card{\ens A}\leqslant\card U=\card{\ens U}.
\end{equation*}
Partition refinement is a \defstyle{fragmentation} and vice versa:

\begin{proposition}\label{finer_partition_decomp_prop}
Let $\ens A$ be a partition. Then ensemble $\ens B$ is a partition and $\ens A\finer\ens B$ iff for any pattern $B\in\ens B$
\begin{equation}\label{finer_partition_decomp_equation}
  B = \bigsqcup\limits_{A\in\alpha(B)} A,
\end{equation}
where $\alpha(B)\eqdef\curly{A\in\ens A\mid A\subseteq B}\subseteq\ens A$ and
\begin{equation}\label{finer_A_decomp_equation}
\ens A = \bigsqcup\limits_{B\in\ens B}\alpha(B).
\end{equation}
\end{proposition}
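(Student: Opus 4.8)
The plan is to prove the two implications of the equivalence separately, using in each direction only two elementary facts: that in a partition every point of the domain lies in exactly one pattern, and Proposition~\ref{partition_unique_prop}, which promotes an ordinary union of partition members to a disjoint one.

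\emph{From ``$\ens B$ a partition and $\ens A\finer\ens B$'' to~(\ref{finer_partition_decomp_equation}) and~(\ref{finer_A_decomp_equation}).} Fix $B\in\ens B$. The inclusion $\bigcup_{A\in\alpha(B)}A\subseteq B$ is immediate from the definition of $\alpha(B)$. For the reverse inclusion I would take $x\in B$; since $\ens A$ is a $U$-partition there is a unique $A\in\ens A$ with $x\in A$, and since $\ens A\finer\ens B$ this $A$ is contained in a unique member of $\ens B$, which must be $B$ because the patterns of $\ens B$ are pairwise disjoint while $x\in A\cap B$. Hence $A\in\alpha(B)$ and $x\in\bigcup_{A\in\alpha(B)}A$, so $B=\bigcup_{A\in\alpha(B)}A$, and Proposition~\ref{partition_unique_prop} turns this into the disjoint decomposition~(\ref{finer_partition_decomp_equation}). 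For~(\ref{finer_A_decomp_equation}) I would argue that the assignment sending each $A\in\ens A$ to the unique $B\in\ens B$ with $A\subseteq B$ witnesses that every $A\in\ens A$ lies in exactly one of the subcollections $\alpha(B)$: in at least one because $\ens A\finer\ens B$, and in at most one because $A\neq\varnothing$ and distinct patterns of $\ens B$ are disjoint.

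\emph{From~(\ref{finer_partition_decomp_equation}) and~(\ref{finer_A_decomp_equation}) back to ``$\ens B$ a partition and $\ens A\finer\ens B$''.} The relation $\ens A\finer\ens B$ is essentially built in: by~(\ref{finer_A_decomp_equation}) every $A\in\ens A$ lies in some $\alpha(B)$, i.e.\ $A\subseteq B$ for some (then necessarily unique) $B\in\ens B$. To see that $\ens B$ is a $U$-partition I would compute its support, $\bigcup_{B\in\ens B}B=\bigcup_{B\in\ens B}\bigcup_{A\in\alpha(B)}A=\bigcup_{A\in\ens A}A=\supp\ens A=U$, where the middle equality is precisely~(\ref{finer_A_decomp_equation}) and the last one uses that $\ens A$ is a partition; and for disjointness I would take $x\in B_1\cap B_2$ with $B_1,B_2\in\ens B$, pick $A_1\in\alpha(B_1)$ and $A_2\in\alpha(B_2)$ with $x\in A_1$ and $x\in A_2$, and note that uniqueness of the $\ens A$-pattern containing $x$ forces $A_1=A_2$, so this common pattern lies in $\alpha(B_1)\cap\alpha(B_2)$; since~(\ref{finer_A_decomp_equation}) makes the collections $\alpha(B)$ pairwise disjoint, $B_1=B_2$.

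I do not expect a genuine obstacle — the statement is bookkeeping about partitions. The one point requiring care is not to confuse the two levels at which ``disjointness'' is invoked: disjointness of the subcollections $\alpha(B)\subseteq\ens A$, and disjointness of the patterns $B\subseteq U$; the bridge between them is always the fact that a point of $U$ determines a unique pattern of the partition $\ens A$. A minor side remark is that each $\alpha(B)$ is non-empty — otherwise~(\ref{finer_partition_decomp_equation}) would give $B=\varnothing$, contradicting that $\ens B$ is an ensemble — which confirms that the decomposition genuinely exhibits $\ens B$ as a partition into non-empty patterns.
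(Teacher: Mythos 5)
Your proof is correct and follows essentially the same route as the paper's: the forward direction is the same bookkeeping (the paper gets the decomposition of $B$ in one line via $B=B\cap\bigsqcup_{A\in\ens A}A=\bigsqcup_{A\in\ens A}(B\cap A)$ where you element-chase, but both rest on the same two facts you isolate), and the converse, which the paper dismisses as ``even easier,'' you simply write out in full. No gaps.
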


\begin{proof}
Suppose $\ens B$ is a partition and $\ens A\finer\ens B$. Then for any $A\in\ens A$ there is a unique $B_A\in\ens B$ such that $A\subseteq B_A$ and for all other $B\in\ens B \;\;A\cap B=\varnothing$, so $\alpha(B_1)\cap\alpha(B_2)=\varnothing$ for $B_1\neq B_2$. Any $B\in\ens B$ is decomposed as $B = B\cap U = B\cap\bigsqcup\limits_{A\in\ens A}A = \bigsqcup\limits_{A\in\ens A}(B\cap A) = \bigsqcup\limits_{A\in\alpha(B)} A$
and since any $A\in\ens A$ is contained in some $B$,
$\ens A = \bigcup\limits_{B\in\ens B}\alpha(B) = \bigsqcup\limits_{B\in\ens B}\alpha(B).$
The converse implication proof is even easier.
\end{proof}

% TODO picture

Partition $\ens C$ is a \defstyle{common refinement} of partitions $\ens A$ and $\ens B$ if $\ens C\finer\ens A$ and $\ens C\finer\ens B$. Of course, $\ens U$ is always a common refinement for any two partitions but there is always a unique {\slshape coarsest} common refinement 

\begin{equation*}
\ens A\comref\ens B \eqdef \curly{A\cap B\mid A\in\ens A, B\in\ens B}\setminus\curly{\varnothing}\finer\ens A, \ens B.
\end{equation*}

The $\comref$ operation is obviously associative and commutative and $\ens A\finer\ens B$ yields $\ens A\comref\ens B = \ens A$. Directly from the definition, 
\begin{equation}\label{comref_card_ineq}
\card{\ens A\comref\ens B}\leqslant \card{\ens A}\cdot\card{\ens B}.
\end{equation}

We say that ensemble $\ens A$ \defstyle{combines} ensemble $\ens{B}$ ($\ens A, \ens B$ need not necessarily be partitions) if any pattern $B\in\ens B$ can be combined (possibly in multiple ways) using {\slshape some} patterns from $\ens A$: $B = \bigsqcup\limits_{A\in\sigma(B)}A,\; \sigma(B)\subseteq\ens A$. We denote this relation between ensembles as $\ens A\combines\ens B$. Obviously, $\ens U\combines\ens A$ for any ensemble $\ens A$. The combination relation is reflexive and transitive: $\ens A\combines\ens A$ and from $\ens A\combines\ens B,\;\ens B\combines\ens C$ ($\ens A\combines\ens B\combines\ens C$ for short) follows $\ens A\combines\ens C$. However, it is not antisymmetric and thus is not a partial order: $\ens A=\curly{\curly u, \curly v, \curly{u, v}},\;\ens B=\curly{\curly u, \curly v}$ is an example of two ensembles which combine each other yet differ.

% TODO picture

If $\ens A\combines\ens B$ then for every pattern $B\in\ens B$ we define its \defstyle{combination weight} $\omega_{\ens A}(B)$ with respect to $\ens A$ as the minimal number of binary $\sqcup$ operations needed to construct $B$ from $\ens A$ patterns:
\begin{equation}\label{set_weight_def}
\omega_{\ens A}(B) \:\eqdef \!\!\!\!\min_{\substack{\sigma\subseteq\ens A:\\
\bigsqcup\limits_{A\in\sigma} A = B}}{\card{\sigma}}-1.
\end{equation}

Consider a \defstyle{binary combination tree} of pattern $B\in\ens B$ with its $n=\omega_{\ens A}(B)+1$ leaves being patterns from one of $\sigma_0\subseteq\ens A$ minimizing the expression above and all levels full except maybe the last one. The depth of such tree is obviously $\ceil{\log_2 n}$. If all its non-leaf nodes are associated with the binary $\sqcup$ operation, such tree represents a possible way of building $B$ using binary disjoint unions and has the minimal depth among all binary trees combining $B$. This justifies defining pattern \defstyle{combination depth} as
\begin{equation}\label{pattern_depth_def_eq}
  \depth_{\ens A}(B) \eqdef \ceil{\log_2 ( \omega_{\ens A}(B) + 1)}.
\end{equation}

\defstyle{Combination weight} $\omega_{\ens A}(\ens B)$ and \defstyle{depth} $\depth_{\ens A}(\ens B)$ of ensemble $\ens B$ with respect to $\ens A$ are
$$\omega_{\ens A}(\ens B) = \combw{\ens A}{\ens B} \eqdef \sum\limits_{B\in\ens B}\omega_{\ens A}(B),$$

$$\depth_{\ens A}(\ens B) \:\eqdef \max_{B\in\ens B}\depth_{\ens A}(B).$$

$\omega_{\ens A}(\ens B)$ is the minimal number of binary $\sqcup$ operations needed to assemble all $\ens B$ patterns from ensemble $\ens A$ patterns {\slshape directly}, i.e. without composing and reusing intermediate patterns. From combination relation and weight and depth definitions directly follow:

\begin{proposition}
Any ensemble $\ens A$ combines itself with $\combw{\ens A}{\ens A} = \depth_{\ens A}(\ens A) = 0.$
\end{proposition}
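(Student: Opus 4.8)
The plan is to unwind the definitions directly; there is essentially no content here beyond verifying that the trivial singleton decomposition is already optimal. First I would note that $\ens A\combines\ens A$ is nothing but the reflexivity of the combination relation already recorded in the text: every pattern $B\in\ens A$ is combined from the one-element subfamily $\sigma(B)=\curly B\subseteq\ens A$ through the degenerate disjoint union $B=\bigsqcup_{A\in\curly B}A$.

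Next I would compute $\omega_{\ens A}(B)$ for an arbitrary $B\in\ens A$ from definition~(\ref{set_weight_def}). Among the competing subsets $\sigma\subseteq\ens A$ with $\bigsqcup_{A\in\sigma}A=B$ we have $\sigma=\curly B$, for which $\card\sigma=1$; since $B\neq\varnothing$ (ensembles, and hence their patterns, are non-empty by convention) no admissible $\sigma$ can be empty, so $\card\sigma\geqslant1$ in all cases and the minimum equals exactly $1$. Therefore $\omega_{\ens A}(B)=1-1=0$ for every $B\in\ens A$, and summing over $B\in\ens A$ gives $\combw{\ens A}{\ens A}=\sum_{B\in\ens A}\omega_{\ens A}(B)=0$.

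Finally, substituting $\omega_{\ens A}(B)=0$ into~(\ref{pattern_depth_def_eq}) yields $\depth_{\ens A}(B)=\ceil{\log_2(0+1)}=\ceil{\log_2 1}=0$, whence $\depth_{\ens A}(\ens A)=\max_{B\in\ens A}\depth_{\ens A}(B)=0$. I do not expect any genuine obstacle; the only step deserving a moment's care is the lower bound $\card\sigma\geqslant1$ in the weight computation, which hinges on the standing assumption that the patterns of an ensemble are non-empty, so that the empty union cannot equal any $B\in\ens A$.
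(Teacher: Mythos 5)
Your proposal is correct and matches the paper's treatment: the paper states this proposition as following directly from the definitions of combination, weight, and depth, and your argument is exactly that unwinding (the singleton $\sigma=\curly B$ attains the minimum, the empty $\sigma$ is excluded because patterns are non-empty, and $\ceil{\log_2 1}=0$). No gaps.
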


\begin{proposition}\label{combw_depth_generic_bound_prop}
If $\ens A\combines\ens B$ then
\begin{enumerate}
    \item $\combw{\ens A}{\ens B} \leqslant \card{\ens B}\cdot(\card{\ens A} - 1)$.
    \item For any $B\in\ens B$ its combination weight $\omega_{\ens A}(B)\leqslant \min(\card{\ens A} - 1, \combw{\ens A}{\ens B}).$
    \item $\depth_{\ens A}(\ens B) \leqslant \ceil{\log_2\min(\card{\ens A}, \combw{\ens A}{\ens B}+1)}.$
\end{enumerate}
\end{proposition}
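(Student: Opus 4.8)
The plan is to deduce all three items from the definitions essentially by monotonicity and summation, proving the pieces in a convenient order: first the ``width'' half of item~2, then item~1, then the ``total weight'' half of item~2, and finally item~3.

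First I would fix $B\in\ens B$. Since $\ens A\combines\ens B$, the set of $\sigma\subseteq\ens A$ with $\bigsqcup_{A\in\sigma}A = B$ is non-empty, so the minimum in~(\ref{set_weight_def}) is attained by some $\sigma_0\subseteq\ens A$. Because $\sigma_0\subseteq\ens A$ we have $\card{\sigma_0}\leqslant\card{\ens A}$, hence $\omega_{\ens A}(B) = \card{\sigma_0} - 1\leqslant\card{\ens A} - 1$. Summing this inequality over all $B\in\ens B$ gives $\combw{\ens A}{\ens B} = \sum_{B\in\ens B}\omega_{\ens A}(B)\leqslant\card{\ens B}\cdot(\card{\ens A}-1)$, which is item~1.

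Next, note that for every $B'\in\ens B$ the pattern $B'$ is non-empty, so any admissible $\sigma$ has $\card\sigma\geqslant1$ and therefore $\omega_{\ens A}(B')\geqslant0$. Consequently, for the fixed $B$ we get $\omega_{\ens A}(B)\leqslant\sum_{B'\in\ens B}\omega_{\ens A}(B') = \combw{\ens A}{\ens B}$. Combining this with the bound $\omega_{\ens A}(B)\leqslant\card{\ens A}-1$ from the first step yields $\omega_{\ens A}(B)\leqslant\min(\card{\ens A}-1,\combw{\ens A}{\ens B})$, i.e.\ item~2.

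Finally, for item~3 I would start from the definition $\depth_{\ens A}(B) = \ceil{\log_2(\omega_{\ens A}(B)+1)}$. Item~2 gives $\omega_{\ens A}(B)+1\leqslant\min(\card{\ens A},\combw{\ens A}{\ens B}+1)$, and since $t\mapsto\ceil{\log_2 t}$ is non-decreasing on the positive integers, $\depth_{\ens A}(B)\leqslant\ceil{\log_2\min(\card{\ens A},\combw{\ens A}{\ens B}+1)}$; taking the maximum over $B\in\ens B$ produces the claimed bound on $\depth_{\ens A}(\ens B)$. There is no real obstacle here — the only points requiring a moment of care are checking that the minimum in~(\ref{set_weight_def}) is actually attained (guaranteed by $\ens A\combines\ens B$) and that each $\omega_{\ens A}(B')$ is non-negative (guaranteed by patterns being non-empty), so that dropping the other summands only decreases the right-hand side.
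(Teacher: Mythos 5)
Your proof is correct and is exactly the direct verification the paper intends: the paper states this proposition without proof, remarking only that it follows "from combination relation and weight and depth definitions directly," and your argument — bounding $\card{\sigma_0}$ by $\card{\ens A}$, summing over $\ens B$, using non-negativity of each $\omega_{\ens A}(B')$ to drop summands, and applying monotonicity of $t\mapsto\ceil{\log_2 t}$ — is precisely that direct derivation, with the attainment of the minimum and the non-negativity of the weights properly justified.
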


\begin{proposition}\label{two_unions_decomp_prop}
If $\ens A\combines\ens B, \;\ens C\combines\ens D$ and $\ens P = \ens A\cup\ens C, \;\ens Q = \ens B\cup\ens D$ then $\ens P\combines\ens Q$ and

\begin{enumerate}
    \item $\combw{\ens P}{\ens Q}\leqslant \combw{\ens A}{\ens B} + \combw{\ens C}{\ens D}$.
    \item $\depth_{\ens P}(\ens Q)\leqslant \max(\depth_{\ens A}(\ens B), \depth_{\ens C}(\ens D)).$
\end{enumerate}
If (a) $\supp\ens A \cap \supp\ens C = \varnothing$ or (b) $\ens A = \ens C$ and $\ens B\cap\ens D=\varnothing$, then these inequalities transform to equalities.
\end{proposition}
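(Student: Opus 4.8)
The plan is to prove the combination relation $\ens P\combines\ens Q$ first, then handle the weight inequality (a), then the depth inequality (b), and finally verify the two equality cases. The combination relation is almost immediate: given $Q\in\ens Q=\ens B\cup\ens D$, either $Q\in\ens B$ or $Q\in\ens D$. In the first case $\ens A\combines\ens B$ gives $Q=\bigsqcup_{A\in\sigma(Q)}A$ with $\sigma(Q)\subseteq\ens A\subseteq\ens P$; in the second case $\ens C\combines\ens D$ gives the analogous decomposition with patterns from $\ens C\subseteq\ens P$. Either way $Q$ is combined from $\ens P$ patterns, so $\ens P\combines\ens Q$.

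For the weight inequality, the key observation is that $\omega_{\ens P}$ is computed by minimizing $\card\sigma-1$ over \emph{all} $\sigma\subseteq\ens P$ with $\bigsqcup_{A\in\sigma}A=Q$, and this minimum is over a \emph{larger} family of candidate decompositions than the one available when we only allow patterns from $\ens A$ (or from $\ens C$). Hence $\omega_{\ens P}(Q)\leqslant\omega_{\ens A}(Q)$ whenever $Q\in\ens B$, and $\omega_{\ens P}(Q)\leqslant\omega_{\ens C}(Q)$ whenever $Q\in\ens D$. Summing over $Q\in\ens Q$ and splitting the sum according to whether $Q$ lies in $\ens B$ or $\ens D$ (a pattern in both contributes at most $\min$ of the two, in particular at most either one, so no double counting is forced — one just picks, for each $Q$, one of the two memberships) yields $\combw{\ens P}{\ens Q}\leqslant\combw{\ens A}{\ens B}+\combw{\ens C}{\ens D}$. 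The depth statement (b) follows the same monotonicity: $\depth$ is a non-decreasing function of $\omega$ by~(\ref{pattern_depth_def_eq}), so $\depth_{\ens P}(Q)\leqslant\depth_{\ens A}(Q)$ for $Q\in\ens B$ and $\leqslant\depth_{\ens C}(Q)$ for $Q\in\ens D$; taking the max over all $Q\in\ens Q$ and splitting gives $\depth_{\ens P}(\ens Q)\leqslant\max(\depth_{\ens A}(\ens B),\depth_{\ens C}(\ens D))$.

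For the equality cases I would argue that the inequality $\omega_{\ens P}(Q)\leqslant\omega_{\ens A}(Q)$ can only be strict if the optimal $\sigma_0\subseteq\ens P$ realizing $\omega_{\ens P}(Q)$ uses some pattern outside $\ens A$, i.e. uses some $C\in\ens C\setminus\ens A$. In case (a), $\supp\ens A\cap\supp\ens C=\varnothing$, so for $Q\in\ens B$ we have $Q\subseteq\supp\ens A$, hence any disjoint decomposition of $Q$ can only involve patterns contained in $\supp\ens A$, which rules out every pattern of $\ens C$ that meets $\supp\ens C$ — i.e. all of them. Thus the optimal decomposition of $Q$ in $\ens P$ uses only $\ens A$ patterns, giving $\omega_{\ens P}(Q)=\omega_{\ens A}(Q)$; symmetrically for $Q\in\ens D$, and since in case (a) $\ens B$ and $\ens D$ are supported on disjoint sets they are themselves disjoint, so the sum splits cleanly into equality. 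In case (b), $\ens A=\ens C$ forces $\ens P=\ens A=\ens C$, so trivially $\omega_{\ens P}(Q)=\omega_{\ens A}(Q)$ for every $Q$, while $\ens B\cap\ens D=\varnothing$ means $\card{\ens Q}$-indexed sum decomposes with no overlap, again yielding equality; the depth equality in both cases follows by the same reasoning applied to $\max$ in place of $\sum$.

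\textbf{Main obstacle.} The only delicate point is the equality in case (a): one must be careful that a minimal decomposition of $Q\in\ens B$ genuinely cannot "cheat" by using a $\ens C$-pattern, and the clean way to see this is the support argument — a subset of $\supp\ens A$ cannot be written as a disjoint union involving a nonempty subset of the disjoint set $\supp\ens C$. I would also take a moment to note that, because patterns are \emph{non-empty} by definition, "uses a pattern of $\ens C$" really does mean "uses a pattern meeting $\supp\ens C$", closing the argument. Everything else is bookkeeping around the definitions~(\ref{set_weight_def}) and~(\ref{pattern_depth_def_eq}).
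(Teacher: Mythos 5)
Your proof is correct, and since the paper offers no proof of this proposition (it is asserted to follow directly from the definitions of combination, weight and depth), your direct verification — monotonicity of the minimum in (\ref{set_weight_def}) under enlarging the source ensemble, subadditivity of the sum over $\ens B\cup\ens D$, and the support/disjointness arguments for the equality cases — is exactly the intended argument. The one delicate point you flag, that in case (a) a non-empty $\ens C$-pattern cannot participate in a disjoint decomposition of a $Q\subseteq\supp\ens A$, is handled correctly.
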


\begin{corollary}\label{ensemble_union_part_corollary}
If $\ens A_i, \ens B_i$ are ensembles on domains $U_i$ and each $\ens A_i\combines\ens B_i$ then for ensembles $\ens A = \bigcup\limits_i\ens A_i, \;\ens B = \bigcup\limits_i\ens B_i$ on domain $U=\bigcup\limits_i U_i$:
\begin{enumerate}
    \item $\ens A\combines\ens B$.
    \item $\combw{\ens A}{\ens B} \leqslant \sum\limits_i \combw{\ens A_i}{\ens B_i}$.
    \item $\depth_{\ens A}(\ens B) \leqslant \max\limits_i\depth_{\ens A_i}(\ens B_i)$.
\end{enumerate}
If (a) domains $U_i$ do not intersect pairwise or (b) all $\ens A_i=\ens A$ and $\ens B = \bigsqcup\limits_i\ens B_i$, then the inequalities transform to equalities.
\end{corollary}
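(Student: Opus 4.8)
The plan is to deduce the corollary from Proposition~\ref{two_unions_decomp_prop} by induction on the number $m$ of indices, which we may assume finite (only finitely many among the $\ens A_i,\ens B_i$ are distinct, and in the equality cases~(a),(b) the index set is forced to be finite, being in bijection with a family of pairwise disjoint non-empty subsets of $U$, or of $2^U$). The base case $m=1$ is trivial. For the inductive step set $\ens R=\bigcup_{i<m}\ens A_i$ and $\ens S=\bigcup_{i<m}\ens B_i$, ensembles on $\bigcup_{i<m}U_i$, so that $\ens A=\ens R\cup\ens A_m$ and $\ens B=\ens S\cup\ens B_m$. The induction hypothesis gives $\ens R\combines\ens S$ with $\combw{\ens R}{\ens S}\leqslant\sum_{i<m}\combw{\ens A_i}{\ens B_i}$ and $\depth_{\ens R}(\ens S)\leqslant\max_{i<m}\depth_{\ens A_i}(\ens B_i)$, while $\ens A_m\combines\ens B_m$ by hypothesis. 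Applying Proposition~\ref{two_unions_decomp_prop} to the pairs $(\ens R,\ens S)$ and $(\ens A_m,\ens B_m)$ gives $\ens A\combines\ens B$, $\combw{\ens A}{\ens B}\leqslant\combw{\ens R}{\ens S}+\combw{\ens A_m}{\ens B_m}$ and $\depth_{\ens A}(\ens B)\leqslant\max\bigl(\depth_{\ens R}(\ens S),\depth_{\ens A_m}(\ens B_m)\bigr)$; combining with the induction hypothesis proves statements 1--3. (Statement~1 is also immediate on its own: any $B\in\ens B$ lies in some $\ens B_i$ and is thus a disjoint union of patterns from $\ens A_i\subseteq\ens A$.)

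To get the equality assertions I would check that hypothesis~(a) or~(b) of Proposition~\ref{two_unions_decomp_prop} is inherited at the inductive step, so that all three of its conclusions become equalities. Under assumption~(a) of the corollary the $U_i$ are pairwise disjoint, hence $\supp\ens R\subseteq\bigcup_{i<m}U_i$ is disjoint from $U_m\supseteq\supp\ens A_m$, so $\supp\ens R\cap\supp\ens A_m=\varnothing$ and Proposition~\ref{two_unions_decomp_prop}(a) applies; the family $\{U_i\}_{i<m}$ is still pairwise disjoint, so the induction hypothesis holds in its equality form. Under assumption~(b) all $\ens A_i$ coincide with $\ens A$, whence $\ens R=\ens A=\ens A_m$, and $\ens B=\bigsqcup_i\ens B_i$ forces the $\ens B_i$ to be pairwise disjoint collections of patterns, so $\ens B_m\cap\ens S=\varnothing$, $\ens S=\bigsqcup_{i<m}\ens B_i$, and Proposition~\ref{two_unions_decomp_prop}(b) applies and the induction hypothesis again holds with equality. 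In either case the equalities propagate: $\combw{\ens A}{\ens B}=\sum_{i\leqslant m}\combw{\ens A_i}{\ens B_i}$ and $\depth_{\ens A}(\ens B)=\max_{i\leqslant m}\depth_{\ens A_i}(\ens B_i)$.

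I do not expect a real obstacle: Proposition~\ref{two_unions_decomp_prop} already carries the full weight of the argument, and the only things needing care are the preliminary reduction to a finite index set and the routine verification that the side conditions (a) and (b) descend to sub-collections of indices — which is exactly what allows the equality statements to survive the induction.
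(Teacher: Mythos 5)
Your proof is correct and follows exactly the route the paper intends: the corollary is stated there without proof as an immediate consequence of Proposition~\ref{two_unions_decomp_prop}, obtained by iterating that two-ensemble statement over the index set, which is precisely your induction. Your extra care about finiteness of the index set and about the side conditions (a), (b) descending to sub-collections is sound but not something the paper bothers with.
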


\noindent The refinement partial order is {\slshape stronger} than the combination relation:
\begin{proposition}\label{refinement_comb_prop}
If $\ens A\finer\ens B$ then $\ens A\combines\ens B$ and $\combw{\ens A}{\ens B} = \card{\ens A} - \card{\ens B}$.
\end{proposition}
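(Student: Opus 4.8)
The plan is to read everything off the fragmentation description already provided by Proposition~\ref{finer_partition_decomp_prop}, together with the uniqueness of subfamily representations inside a partition (Proposition~\ref{partition_unique_prop}). The point is that the refinement hypothesis is rigid enough that for each $B\in\ens B$ there is essentially \emph{no choice} in how to assemble $B$ from $\ens A$-patterns, so the minimum defining $\omega_{\ens A}(B)$ degenerates to a single admissible term.

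First I would note that $\ens A\combines\ens B$ is immediate: since $\ens A\finer\ens B$, Proposition~\ref{finer_partition_decomp_prop} gives, for every $B\in\ens B$, the decomposition $B=\bigsqcup_{A\in\alpha(B)}A$ with $\alpha(B)=\curly{A\in\ens A\mid A\subseteq B}\subseteq\ens A$, which is exactly the combination condition with $\sigma(B)=\alpha(B)$. Since $B\neq\varnothing$ we have $\alpha(B)\neq\varnothing$, so the minimum in~(\ref{set_weight_def}) is taken over a non-empty index set and $\omega_{\ens A}(B)$ is well defined.

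Next comes the key step: $\alpha(B)$ is the \emph{only} subfamily of $\ens A$ whose disjoint union is $B$. Indeed, if $\sigma\subseteq\ens A$ satisfies $\bigsqcup_{A\in\sigma}A=B$, then $\bigcup_{A\in\sigma}A=B$ with $\sigma\subseteq\ens A$, so by Proposition~\ref{partition_unique_prop} this representation is the unique one; comparing with $B=\bigsqcup_{A\in\alpha(B)}A$ forces $\sigma=\alpha(B)$. (Equivalently: each $A\in\sigma$ has $A\subseteq B$, hence $\sigma\subseteq\alpha(B)$, and the reverse inclusion follows because both unions equal $B$ while distinct $\ens A$-patterns are disjoint and non-empty.) Therefore the minimum in~(\ref{set_weight_def}) is attained at the unique admissible choice $\sigma=\alpha(B)$, and $\omega_{\ens A}(B)=\card{\alpha(B)}-1$.

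Finally I would sum over $B\in\ens B$ and use the companion identity~(\ref{finer_A_decomp_equation}), namely $\ens A=\bigsqcup_{B\in\ens B}\alpha(B)$, to obtain
\[
\combw{\ens A}{\ens B}=\sum_{B\in\ens B}\omega_{\ens A}(B)=\sum_{B\in\ens B}\bigl(\card{\alpha(B)}-1\bigr)=\Bigl(\sum_{B\in\ens B}\card{\alpha(B)}\Bigr)-\card{\ens B}=\card{\ens A}-\card{\ens B}.
\]
There is no genuine obstacle here; the only subtlety worth being careful about is invoking uniqueness (Proposition~\ref{partition_unique_prop}) to conclude that $\alpha(B)$ is not merely \emph{some} minimizer but the only $\sigma$ admissible at all — that is precisely what upgrades the weight count from an inequality to an exact equality.
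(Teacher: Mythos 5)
Your proof is correct and follows essentially the same route as the paper: obtain the decomposition $B=\bigsqcup_{A\in\alpha(B)}A$ from Proposition~\ref{finer_partition_decomp_prop}, use the uniqueness in Proposition~\ref{partition_unique_prop} to conclude $\omega_{\ens A}(B)=\card{\alpha(B)}-1$, and sum via~(\ref{finer_A_decomp_equation}). You merely spell out the uniqueness step more explicitly than the paper does.
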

\begin{proof}
$\ens A\combines\ens B$ follows from~(\ref{finer_partition_decomp_equation}). By proposition~\ref{partition_unique_prop} with $\alpha(B)$ from proposition~\ref{finer_partition_decomp_prop}, $\omega_{\ens A}(B) = \card{\alpha(B)}$ - 1 for 
$B\in\ens B$, so using~(\ref{finer_A_decomp_equation}), $\combw{\ens A}{\ens B} = \sum\limits_{B\in\ens B}(\card{\alpha(B)} - 1) = \card{\ens A} - \card{\ens B}$.
\end{proof}

Combination and refinement relations as well as the corresponding weights and depths are preserved by injections:
\begin{proposition}\label{ensemble_graph_prop}
If $f\colon U\to V$ is an injection then
\begin{enumerate}
    \item $\ens A\combines\ens B$ yields $f(\ens A)\combines f(\ens B)$ with $\combw{f(\ens A)}{f(\ens B)} = \combw{\ens A}{\ens B}$ and $\depth_{f(\ens A)}(f(\ens B)) = \depth_{\ens A}(\ens B)$.
    \item If $\ens A$ is a $U$-partition then $f(\ens A)$ is an $f(U)$-partition and $\ens A\finer\ens B$ yields $f(\ens A)\finer f(\ens B)$.
\end{enumerate}
For both cases decomposition structure is preserved, i.e. for every $B\in\ens B$
$$f(B) = \bigsqcup\limits_{A\in\gamma}f(A)\Leftrightarrow B = \bigsqcup\limits_{A\in\gamma}A.$$
\end{proposition}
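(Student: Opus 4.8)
The whole statement is about transporting structure along an injection, so the plan is to reduce everything to Proposition~\ref{injection_prop} and the definitions of $\combines$, $\finer$, $\omega$, and $\depth$. The key observation is that the extension of $f$ to $2^U$ (and to $2^{2^U}$) is again injective, as noted after Proposition~\ref{injection_prop}; this is what makes the argument go through. First I would record the basic translation: for $\sigma\subseteq\ens A$, applying item~1 of Proposition~\ref{injection_prop} repeatedly gives $f\!\left(\bigsqcup_{A\in\sigma}A\right)=\bigsqcup_{A\in\sigma}f(A)$, and this disjoint union is genuinely disjoint because $f$ injective preserves $\sqcup$ (item~1 again). Conversely, if $\bigsqcup_{A\in\sigma}f(A)=f(B)$ then applying $f^{-1}$ on $f(U)$ (which exists since $f$ is injective onto its image) yields $\bigsqcup_{A\in\sigma}A=B$. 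This already proves the final displayed equivalence about preserved decomposition structure, and I would prove that line first since both numbered items lean on it.

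Next I would handle item~1. Given $\ens A\combines\ens B$, take any pattern $f(B)\in f(\ens B)$ with $B\in\ens B$; by definition there is $\sigma(B)\subseteq\ens A$ with $B=\bigsqcup_{A\in\sigma(B)}A$, so by the translation above $f(B)=\bigsqcup_{A\in\sigma(B)}f(A)=\bigsqcup_{A'\in f(\sigma(B))}A'$ with $f(\sigma(B))\subseteq f(\ens A)$, giving $f(\ens A)\combines f(\ens B)$. For the weight equality I would argue that the injectivity of $f$ on $2^U$ sets up a bijection between the families $\sigma\subseteq\ens A$ with $\bigsqcup_{A\in\sigma}A=B$ and the families $\sigma'\subseteq f(\ens A)$ with $\bigsqcup_{A'\in\sigma'}A'=f(B)$, and this bijection preserves cardinality (item~3 of Proposition~\ref{injection_prop}); hence the minima in~(\ref{set_weight_def}) agree, so $\omega_{f(\ens A)}(f(B))=\omega_{\ens A}(B)$ for every $B$. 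Summing over $B\in\ens B$ — using that $f\colon\ens B\to f(\ens B)$ is itself a bijection — gives $\combw{f(\ens A)}{f(\ens B)}=\combw{\ens A}{\ens B}$, and since $\depth$ is defined by~(\ref{pattern_depth_def_eq}) purely as a function of $\omega$, the depth equality is immediate both patternwise and for the max over $\ens B$.

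For item~2, if $\ens A$ is a $U$-partition then $U=\bigsqcup_{A\in\ens A}A$, so applying the translation once more with $\sigma=\ens A$ gives $f(U)=\bigsqcup_{A\in\ens A}f(A)$, i.e. $f(\ens A)$ is an $f(U)$-partition; note $f(\ens A)$ consists of non-empty sets since each $A\neq\varnothing$. For $\ens A\finer\ens B$: each $A\in\ens A$ lies in a unique $B\in\ens B$, and $A\subseteq B\Leftrightarrow f(A)\subseteq f(B)$ by item~2 of Proposition~\ref{injection_prop}, so every $f(A)$ sits inside some $f(B)\in f(\ens B)$, which is exactly $f(\ens A)\finer f(\ens B)$ (that $f(\ens B)$ is a partition of $f(U)$ follows from the first part applied to $\ens B$, since $\ens A\finer\ens B$ forces $\ens B$ to be a partition of the same $U$).

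The proof is essentially bookkeeping; there is no real obstacle. The one point that needs a little care — and the only place I expect to spend a sentence justifying rather than just citing — is that the minimization defining $\omega$ ranges over subsets of the ensemble, so one must invoke injectivity of the \emph{induced} map on $2^U$ (equivalently, on the power set of the ensemble) to get the bijection of minimizing families, not merely injectivity of $f$ on $U$. Everything else is a direct appeal to Proposition~\ref{injection_prop} and the definitions.
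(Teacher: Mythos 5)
Your proof is correct and follows exactly the route the paper takes: the paper's entire proof is ``Follows from Proposition~\ref{injection_prop},'' and your write-up simply fills in the bookkeeping (the transport of disjoint unions, the bijection of minimizing families for $\omega$, and the partition/refinement preservation) that the author leaves implicit. No gaps.
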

\begin{proof}
Follows from proposition~\ref{injection_prop}.
\end{proof}

Obviously, unions of partitions on non-intersecting domains (such unions are of course disjoint) are again partitions and refinement relation is preserved, so from corollary~\ref{ensemble_union_part_corollary} follows
\begin{proposition}\label{partition_union_prop}
If $\ens A_i, \ens B_i$ are $U_i$-partitions each $\ens A_i\finer\ens B_i$ and domains $U_i$ do not intersect pairwise then
\begin{enumerate}
    \item Ensembles $\ens A = \bigsqcup\limits_i \ens A_i$ and $\ens B = \bigsqcup\limits_i \ens B_i$ are $U$-partitions with $U\eqdef\bigsqcup\limits_i U_i$.
    \item $\ens A\finer\ens B$.
    \item $\combw{\ens A}{\ens B} = \sum\limits_i \combw{\ens A_i}{\ens B_i}$.
    \item $\depth_{\ens A}(\ens B) = \max\limits_i\depth_{\ens A_i}(\ens B_i)$.
\end{enumerate}
\end{proposition}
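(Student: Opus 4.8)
The plan is to reduce claims (3) and (4) to the already-proved Corollary~\ref{ensemble_union_part_corollary}, after disposing of the two partition-specific claims (1) and (2) directly from the definitions, since that corollary only speaks about the combination relation $\combines$ and not about $\finer$.

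First I would check claim (1). Since the patterns of each $\ens A_i$ are non-empty subsets of $U_i$ and the domains $U_i$ are pairwise disjoint, no pattern can belong to two different $\ens A_i$, so the union of collections $\bigcup_i \ens A_i$ is automatically disjoint, i.e. $\ens A = \bigsqcup_i \ens A_i$ makes sense (and likewise $\ens B = \bigsqcup_i \ens B_i$). Moreover $\bigsqcup_{A\in\ens A} A = \bigsqcup_i\bigl(\bigsqcup_{A\in\ens A_i} A\bigr) = \bigsqcup_i U_i = U$, where the inner disjoint union is valid because $\ens A_i$ is a $U_i$-partition and the outer one because the $U_i$ are pairwise disjoint; hence $\ens A$ is a $U$-partition, and the same argument gives that $\ens B$ is a $U$-partition. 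For claim (2), take any $A\in\ens A$; it lies in a unique $\ens A_i$, so $A\subseteq U_i$, and since $\ens A_i\finer\ens B_i$ there is $B\in\ens B_i\subseteq\ens B$ with $A\subseteq B$. Thus $\ens A\finer\ens B$.

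Finally, for claims (3) and (4): by Proposition~\ref{refinement_comb_prop} each $\ens A_i\finer\ens B_i$ implies $\ens A_i\combines\ens B_i$, so Corollary~\ref{ensemble_union_part_corollary} applies to the families $\ens A_i,\ens B_i$; since the domains $U_i$ are pairwise disjoint we are precisely in its equality case~(a), which yields $\combw{\ens A}{\ens B} = \sum_i\combw{\ens A_i}{\ens B_i}$ and $\depth_{\ens A}(\ens B) = \max_i\depth_{\ens A_i}(\ens B_i)$, as required.

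I do not expect a genuine obstacle here; the only point needing a moment's care is that the corollary is stated for $\combines$ alone, so the stronger relation $\ens A\finer\ens B$ and the partition property must be verified by hand — but both follow immediately from non-emptiness of patterns and pairwise disjointness of the domains. Everything else is simply invoking the correct equality case of Corollary~\ref{ensemble_union_part_corollary}.
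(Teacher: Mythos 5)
Your proposal is correct and follows essentially the same route as the paper, which likewise treats the partition and refinement claims as immediate from disjointness of the domains and then derives the weight and depth equalities from the equality case (a) of Corollary~\ref{ensemble_union_part_corollary}. Your write-up merely spells out the ``obvious'' steps in more detail, which is fine.
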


\subsection{Ensemble computation complexity}\label{ensemble_comp_compl}

\defstyle{Computation chain} of length $n$ is a sequence $\compchain C \eqdef \ens A_0\combines\ens A_1\combines ... \combines\ens A_{n-1}\combines\ens A_n$. We say that $\compchain C$ \defstyle{computes} $\ens A_n$ \defstyle{from} $\ens A_0$ and write $\ens A_0\computesby{\compchain C}\ens A_n$. Computation chain \defstyle{weight} and \defstyle{depth} are defined as
\begin{equation}\label{chain_weight_def}
\omega(\compchain C)
= \combws{\ens A_0\computesby{\compchain C}\ens A_n}
\eqdef\combws{\ens A_0\combines ...\combines\ens A_n}
\eqdef \sum\limits_{i=0}^{n-1}\combw{\ens A_i}{\ens A_{i+1}},
\end{equation}
$$
\depth(\compchain C) \eqdef \sum\limits_{i=0}^{n-1}\depth_{\ens A_i}(\ens A_{i+1}).
$$
Let us say that $\ens A$ \defstyle{computes} $\ens B$ and write $\ens A\computes\ens B$ if $\ens A\computesby{\compchain C}\ens B$ with some chain $\compchain C$.

There is a natural interpretation of the computation relation hence its name. Suppose $U=\curly{u_1, u_2, ...}$ and associate each $u_i\in U$ with a variable containing, for example, integer values. Consider the task of computing $n$ sums 
\begin{equation}\label{sums_task_equation}
s_j =\!\!\sum\limits_{u_i\in A_j}\!\!u_i,\;\;\;\;\;\;j=1,...,n,
\end{equation}
where patterns $A_j\in\ens A,\; \card{\ens A} = n$. What is the minimal binary addition operations count needed? We may assume that $\supp\ens A=U$ (it does not change operations count because all variables from $ U\setminus\supp\ens A$ would remain unused). Computing~(\ref{sums_task_equation}) directly takes $\sum\limits_{j=1}^n(\card{A_j}-1) = \omega_{\ens U}(\ens A)$ operations which corresponds to the {\slshape trivial} chain $\ens U\combines\ens A$, here $\ens U = U^*$ again as in~(\ref{ustar_def}). If, however, we first once calculate certain $u_i$ combinations ($\ens B$) and then reuse them, we might get a smaller operations count corresponding to chain $\ens U\combines\ens B\combines\ens A$. If we also compute and reuse combinations of $\ens B$ patterns,  we might reduce additions number even further ($\ens U\combines\ens B\combines\ens C\combines\ens A$). The ultimate question is to find the computation chain $\compchain C = [\ens U = \ens A_1 \combines ... \combines A_m = \ens A]$ with minimal $\omega(\compchain C)$, suggesting the following

\begin{definition}\label{compl_def}
Computation complexity of ensemble $\ens B$ with respect to ensemble $\ens A$ is the number
$$\mu_{\ens A}(\ens B) = \combws{\ens A\computes\ens B} \eqdef \min\limits_{\ens A\computesby{\compchain C}\ens B}\omega(\compchain C).$$
If $\ens A$ is omitted, then we presume that $\ens A=(\supp \ens B)^*$, i.e. the (``internal'') computation complexity of ensemble $\ens B$ is
$$\mu(\ens B) \eqdef \combws{(\supp\ens A)^*\computes\ens A}.$$
\end{definition}

\begin{proposition}\label{subpartition_comp_prop}
Suppose $\ens B$ is a $U$-partition and some ensemble $\ens A\combines \ens B$. Then there is a partition $\ens A_0\subseteq\ens A$ such that $\ens A_0\finer\ens B$ and $\combw{\ens A_0}{\ens B} = \combw{\ens A}{\ens B}, \; \depth_{\ens A_0}(\ens B) = \depth_{\ens A}(\ens B).$
\end{proposition}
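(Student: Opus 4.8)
The plan is to start from an arbitrary $\sigma(B) \subseteq \ens A$ witnessing $B = \bigsqcup_{A \in \sigma(B)} A$ for each $B \in \ens B$ (these exist since $\ens A \combines \ens B$), and to show that the union $\ens A_0 \eqdef \bigcup_{B \in \ens B} \sigma(B)$ is already the partition we want — but with the witnessing sets chosen to be $\omega$-minimal, so that the weights match. First I would observe that since $\ens B$ is a $U$-partition, $\supp \ens A_0 = \bigcup_{B \in \ens B} B = U$, and that the patterns in $\ens A_0$ are pairwise disjoint: any $A \in \sigma(B)$ satisfies $A \subseteq B$, so two patterns coming from decompositions of distinct $B_1 \neq B_2 \in \ens B$ lie in disjoint sets and hence are disjoint, while two distinct patterns from the same $\sigma(B)$ are disjoint because the union defining $B$ is disjoint. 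Therefore $\ens A_0$ is a $U$-partition, and by construction every $B \in \ens B$ is a disjoint union of $\ens A_0$-patterns, which (via Proposition~\ref{finer_partition_decomp_prop}, or directly) gives $\ens A_0 \finer \ens B$.

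Next I would handle the weight and depth equalities. Since $\ens A_0 \subseteq \ens A$, any decomposition of a pattern $B$ using $\ens A_0$-patterns is also a decomposition using $\ens A$-patterns, so $\omega_{\ens A}(B) \leqslant \omega_{\ens A_0}(B)$ and likewise $\depth_{\ens A}(B) \leqslant \depth_{\ens A_0}(B)$; summing over $B$ gives $\combw{\ens A}{\ens B} \leqslant \combw{\ens A_0}{\ens B}$ and $\depth_{\ens A}(\ens B) \leqslant \depth_{\ens A_0}(\ens B)$. For the reverse inequality I would invoke the uniqueness in Proposition~\ref{partition_unique_prop}: since $\ens A_0$ is a partition and $B \subseteq U = \supp \ens A_0$, the \emph{only} subset $\sigma \subseteq \ens A_0$ with $\bigsqcup_{A \in \sigma} A = B$ is $\sigma = \alpha(B) = \{A \in \ens A_0 \mid A \subseteq B\}$, so $\omega_{\ens A_0}(B) = \card{\alpha(B)} - 1$. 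If the original witnesses $\sigma(B)$ were chosen to realize $\omega_{\ens A}(B)$, i.e. $\card{\sigma(B)} - 1 = \omega_{\ens A}(B)$, then $\sigma(B) \subseteq \alpha(B)$ and both are $\finer$-decompositions of $B$ inside the partition $\ens A_0$; but the decomposition of $B$ inside $\ens A_0$ is unique, forcing $\sigma(B) = \alpha(B)$ and hence $\omega_{\ens A_0}(B) = \card{\sigma(B)} - 1 = \omega_{\ens A}(B)$. This yields $\combw{\ens A_0}{\ens B} = \combw{\ens A}{\ens B}$, and since the combination weights of each $B$ agree, the depths $\depth_{\ens A_0}(B) = \ceil{\log_2(\omega_{\ens A_0}(B)+1)}$ agree pattern-by-pattern with $\depth_{\ens A}(B)$, giving equality of the ensemble depths as well.

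The one subtlety I expect to be the main obstacle is making sure the $\omega_{\ens A}$-minimizing witness $\sigma(B)$ really does end up inside $\ens A_0$ with the right cardinality — one must take the $\sigma(B)$'s realizing the minimum in~(\ref{set_weight_def}) from the start, build $\ens A_0$ from exactly those, and then argue that enlarging to $\alpha(B)$ cannot happen because $\alpha(B)$, being a disjoint decomposition of $B$ inside the partition $\ens A_0$, coincides with $\sigma(B)$ by Proposition~\ref{partition_unique_prop}; if instead one started with arbitrary (non-minimal) witnesses, $\ens A_0$ could contain redundant finer pieces and the weight would strictly increase. A second minor point is to confirm $\ens A_0$ contains no empty pattern, which is immediate since $\ens A \subseteq 2^U \setminus \{\varnothing\}$. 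Everything else is the routine disjointness bookkeeping sketched above.
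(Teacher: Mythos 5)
Your proposal is correct and follows essentially the same route as the paper: choose, for each $B\in\ens B$, a minimal decomposition realizing $\omega_{\ens A}(B)$ and let $\ens A_0$ be the union of these witnesses, which is a partition refining $\ens B$ because $\ens B$ is itself a partition. The paper states this in two lines and leaves the weight/depth equalities implicit; your verification of them via the uniqueness of decompositions inside a partition (Proposition~\ref{partition_unique_prop}) is exactly the bookkeeping the paper omits.
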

\begin{proof}
According to~(\ref{set_weight_def}), for every $B\in\ens B$ we can choose $\alpha(B)\subseteq\ens A$ such that $\omega_{\ens A}(B) = \card{\alpha(B)} - 1$ and $B = \!\!\!\!\bigsqcup\limits_{A\in\alpha(B)} A$. Since $\ens B$ is a partition, $\ens A_0 = \bigcup\limits_{B\in\ens B}\alpha(B)\subseteq \ens A$ is also a partition and satisfies the desired conditions.
\end{proof}

\begin{proposition}\label{partition_comp_compl_prop}
If partitions $\ens A\finer\ens B$, then $\mu_{\ens A}(\ens B) = \card{\ens A} - \card{\ens B}.$
\end{proposition}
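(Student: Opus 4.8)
The plan is to prove the two inequalities separately. The upper bound $\mu_{\ens A}(\ens B)\leqslant\card{\ens A}-\card{\ens B}$ is immediate from the trivial computation chain $\ens A\combines\ens B$: it computes $\ens B$ from $\ens A$ and, by proposition~\ref{refinement_comb_prop}, has weight exactly $\card{\ens A}-\card{\ens B}$. So all the work is in the matching lower bound: every chain $\compchain C\colon\ens A=\ens A_0\combines\ens A_1\combines\dots\combines\ens A_n=\ens B$ must satisfy $\omega(\compchain C)\geqslant\card{\ens A}-\card{\ens B}$.

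For the lower bound I would ``straighten'' $\compchain C$ into a chain of partitions by processing its steps from the last one backwards and repeatedly invoking proposition~\ref{subpartition_comp_prop}. Since $\ens A_n=\ens B$ is a $U$-partition and $\ens A_{n-1}\combines\ens B$, that proposition produces a partition $\ens A_{n-1}'\subseteq\ens A_{n-1}$ with $\ens A_{n-1}'\finer\ens B$ and the same weight $\combw{\ens A_{n-1}'}{\ens B}=\combw{\ens A_{n-1}}{\ens B}$. Now $\ens A_{n-1}'$ is itself a $U$-partition; it is a sub-ensemble of $\ens A_{n-1}$ and hence combined by $\ens A_{n-2}$ with $\combw{\ens A_{n-2}}{\ens A_{n-1}'}\leqslant\combw{\ens A_{n-2}}{\ens A_{n-1}}$, so proposition~\ref{subpartition_comp_prop} applies again. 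Iterating down to $i=0$ yields partitions $\ens A_i'\subseteq\ens A_i$ (with $\ens A_n'\eqdef\ens B$) such that $\ens A_i'\finer\ens A_{i+1}'$ and $\combw{\ens A_i'}{\ens A_{i+1}'}\leqslant\combw{\ens A_i}{\ens A_{i+1}}$ for all $i$. The point to watch is the very first index: $\ens A_0'$ is a refinement of the $U$-partition $\ens A_1'$, hence a $U$-partition, and a subcollection of the partition $\ens A=\ens A_0$ that still covers $U$ must be all of $\ens A$, so $\ens A_0'=\ens A$.

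Finally I would apply proposition~\ref{refinement_comb_prop} to each adjacent pair $\ens A_i'\finer\ens A_{i+1}'$, obtaining $\combw{\ens A_i'}{\ens A_{i+1}'}=\card{\ens A_i'}-\card{\ens A_{i+1}'}$, and telescope:
$$\omega(\compchain C)=\sum_{i=0}^{n-1}\combw{\ens A_i}{\ens A_{i+1}}\;\geqslant\;\sum_{i=0}^{n-1}\bigl(\card{\ens A_i'}-\card{\ens A_{i+1}'}\bigr)=\card{\ens A_0'}-\card{\ens A_n'}=\card{\ens A}-\card{\ens B}.$$
Taking the minimum over all chains $\compchain C$ gives $\mu_{\ens A}(\ens B)\geqslant\card{\ens A}-\card{\ens B}$, which together with the upper bound is the assertion (the degenerate case $\ens A=\ens B$ is covered since then both sides vanish). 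I expect the only real obstacle to be the bookkeeping in the backward induction — checking at each step that the newly produced $\ens A_i'$ is genuinely a $U$-partition and is still combined by the preceding ensemble, so that proposition~\ref{subpartition_comp_prop} stays applicable; once that is in place, everything reduces to a telescoping sum.
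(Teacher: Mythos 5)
Your proof is correct and takes essentially the same route as the paper: both straighten an arbitrary chain backwards into a chain of partitions via proposition~\ref{subpartition_comp_prop} and then telescope using proposition~\ref{refinement_comb_prop}. The only (cosmetic) difference is that you separate the upper bound (trivial chain) from the lower bound (valid for every chain), whereas the paper runs the same straightening argument on a minimal chain; your explicit check that $\ens A_0'=\ens A$ is a nice touch the paper leaves implicit.
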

\begin{proof}
Suppose $\mu_{\ens A}(\ens B) = \combws{\ens A\computesby{\compchain C}\ens B}$ for computation chain $\compchain C = [\ens A\combines\ens A_1\combines ... \combines \ens A_m = B]$. We will transform $\compchain C$ to $\compchain C'$ without increasing the computation weight so that each $\compchain C'$ element would be a partition refining the next element. We start from the last pair $\ens A_{m-1}\combines \ens A_m$. By proposition~\ref{subpartition_comp_prop} there is a partition $\ens A_{m-1}'\subseteq\ens A_{m-1}$ such that $\ens A_{m-1}'\finer\ens A_m$ and $\combw{\ens A_{m-1}'}{\ens A_m} = \combw{\ens A_{m-1}}{\ens A_m}$.

Obviously, still $\ens A_{m-2}\combines\ens A_{m-1}'$ with $\combw{\ens A_{m-2}}{\ens A_{m-1}'}\leqslant\combw{\ens A_{m-2}}{\ens A_{m-1}}$, so the computation chain $\compchain C_1$ with $A_{m-1}$ changed to $A_{m-1}'$ is still valid and its weight is not greater than that of $\compchain C$. Continuing in this manner by replacing $\ens A_{m-2}$ and etc., we will get computation chain $\compchain C'$ consisting of partitions with $\omega(\compchain C') \leqslant \mu_{\ens A}(\ens B)$. Since $\compchain C$ has {\slshape minimal} possible weight, $\omega(\compchain C') = \mu_{\ens A}(\ens B)$ The statement then follows from~(\ref{chain_weight_def}) and proposition~\ref{refinement_comb_prop}.
\end{proof}

\subsection{The classical ensemble computation problem and circuits complexity}\label{connection_section}

This section briefly establishes a connection between the the concept of computation complexity introduced in definition~\ref{compl_def}, the classical ensemble computation problem and the concept of additive circuit complexity.

The ensemble computation problem is NP-complete and formulated as follows~\cite{gary}. Given a collection $\ens A$ of $U$ subsets and some number $c\in\mathbb N$, is there a sequence $z_1=x_1\sqcup y_1, \;z_2=x_2\sqcup y_2, ...,\; z_n=x_n\sqcup y_c$ of $n\leqslant c$ disjoint unions, where each $x_i, y_i$ is either $\curly u$ for some $u\in U$ or $z_j$ for some $j<i$, and for any $A\in\ens A$ there is some $z_j=A$? In other words, one asks, whether an ensemble of $\ens A$ can be obtained from $\ens U = \curly{\curly u \mid u\in U}$ using not more than $c$ disjoint union operations. 

An equivalent formulation originates from the {\slshape circuits} theory~\cite{wegener}. Suppose $\ens U=\curly{u_i}$ represents a set of number-valued variables, $\ens A=\curly{A_j}\subseteq 2^U$ and we need to compute all sums~(\ref{sums_task_equation}) using a~\defstyle{circuit}, i.e. an acyclic directed graph which has exactly $\card U$ fanin-0 nodes  $u_i$, exactly $\card{\ens A}$ fanout-0 nodes $A_j$ and other nodes representing the $+$ operation (performing summation over its input edges and distributing the result over its output edges). What is the minimal {\slshape size} of such circuit? When restricted to using only fanin-0,1,2 nodes and after defining the \defstyle{circuit size} to be the number of its fanin-2 nodes, we come to the classical ensemble computation problem.

\begin{remark}
This is one of several possible circuit complexity definitions. One may also count the number of edges and consider unlimited fanin nodes and use different binary operations set~\cite{CLB}.
\end{remark}

It is obvious, that if the complexity defined in section~\ref{ensemble_comp_compl} for ensemble $\ens A\;\; \mu(\ens A) = n$, then the classical ensemble computation problem (and, hence, the corresponding minimal circuit complexity) does not exceed $n$. Indeed, take the computation chain $\compchain C=[\ens U=\ens C_0\combines\ens C_1\combines ...\combines C_d=\ens A]$, such that  $\ens U\computesby{\compchain C}\ens A$ and $\omega(\compchain C)=n$. We should simply transform every segment $\ens C_k\combines \ens C_{k+1}$ of the computation chain into a set of $\card{\ens C_{k+1}}$ {\slshape binary combination trees} computing patterns from $\ens C_{k+1}$ (see the paragraph right above~(\ref{pattern_depth_def_eq})) and then join them for $k=1, 2, ..., d$. The resulting circuit will have exactly $n$ disjoint union nodes by $\mu(\ens A)$ definition.

The reverse is also true. Take the minimal circuit computing $\ens A$. We can always transform it so that the distance (number of edges in a connecting path) any output node and all its input (fanin-0) nodes is {\slshape exactly} $d$, where $d$ is the maximal such distance in the original circuit (its {\slshape depth}), and that the number of fanin-2 nodes does not change. We can then define ensembles $\ens C_k$, consisting of nodes which have distance from the input nodes exactly $k$. Obviously, $\ens U=\ens C_0\combines\ens C_1\combines ... \combines \ens C_d=\ens A$ with the weight equal to the number of fanin-2 nodes.

\subsection{Partition trees}

\begin{proposition}\label{two_part_comp_compl_prop}
Suppose $\ens A$ and $\ens B$ are $U$-partitions. Then
\begin{equation}\label{two_part_comp_compl_ineq}
\mu(\ens A\cup\ens B)\leqslant \mu(\ens A) + \mu(\ens B)-\mu(\ens A\comref\ens B)
= \card{\ens U} + \card{\ens A\comref\ens B} - \card{\ens A} - \card{\ens B}.
\end{equation}
\end{proposition}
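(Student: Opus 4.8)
The plan is to route the computation of $\ens A\cup\ens B$ through the coarsest common refinement $\ens C\eqdef\ens A\comref\ens B$, and to handle the claimed equality first as pure bookkeeping. Note that every partition in sight ($\ens A$, $\ens B$, $\ens C$ and $\ens A\cup\ens B$) has support $U$, so all four complexities are measured with respect to the same base partition $\ens U=U^*$. Since $\ens U$ is the finest partition it refines each of $\ens A$, $\ens B$, $\ens C$, hence Proposition~\ref{partition_comp_compl_prop} gives $\mu(\ens A)=\card{\ens U}-\card{\ens A}$, $\mu(\ens B)=\card{\ens U}-\card{\ens B}$ and $\mu(\ens C)=\card{\ens U}-\card{\ens C}$; substituting and cancelling one copy of $\card{\ens U}$ yields $\mu(\ens A)+\mu(\ens B)-\mu(\ens C)=\card{\ens U}+\card{\ens C}-\card{\ens A}-\card{\ens B}$, which is the right-hand equality.

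For the inequality I would construct an explicit chain. Pick a computation chain $\compchain C_1$ with $\ens U\computesby{\compchain C_1}\ens C$ and $\omega(\compchain C_1)=\mu(\ens C)=\card{\ens U}-\card{\ens C}$ (the minimum defining $\mu(\ens C)$ is attained, so such $\compchain C_1$ exists). Because $\ens C\finer\ens A$ and $\ens C\finer\ens B$, Proposition~\ref{refinement_comb_prop} gives $\ens C\combines\ens A$ with $\combw{\ens C}{\ens A}=\card{\ens C}-\card{\ens A}$ and $\ens C\combines\ens B$ with $\combw{\ens C}{\ens B}=\card{\ens C}-\card{\ens B}$; applying Proposition~\ref{two_unions_decomp_prop} in the degenerate form where both ``left'' ensembles equal the present $\ens C$ (so that $\ens P=\ens C\cup\ens C=\ens C$) we obtain $\ens C\combines\ens A\cup\ens B$ with $\combw{\ens C}{\ens A\cup\ens B}\leqslant\combw{\ens C}{\ens A}+\combw{\ens C}{\ens B}=2\card{\ens C}-\card{\ens A}-\card{\ens B}$. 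Appending the single step $\ens C\combines\ens A\cup\ens B$ to $\compchain C_1$ produces a valid chain $\ens U\computes\ens A\cup\ens B$, and by additivity of $\omega$ over the segments (definition~(\ref{chain_weight_def})) its weight is at most $(\card{\ens U}-\card{\ens C})+(2\card{\ens C}-\card{\ens A}-\card{\ens B})=\card{\ens U}+\card{\ens C}-\card{\ens A}-\card{\ens B}$. Since $\supp{(\ens A\cup\ens B)}=U$, the complexity $\mu(\ens A\cup\ens B)$ is by definition $\mu_{\ens U}(\ens A\cup\ens B)$, which is bounded above by the weight of this particular chain; substituting $\ens C=\ens A\comref\ens B$ gives exactly the asserted bound.

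There is no genuine obstacle here: the entire content is the idea of factoring the computation through $\ens C=\ens A\comref\ens B$, and the rest is Propositions~\ref{partition_comp_compl_prop}, \ref{refinement_comb_prop} and \ref{two_unions_decomp_prop} plus arithmetic. The points that need care are checking that all complexities share the base partition $\ens U$ (true by the support computation), invoking Proposition~\ref{two_unions_decomp_prop} with coinciding left ensembles so the union does not grow, and recording that the minimum in $\mu(\ens C)$ is attained so $\compchain C_1$ is available. One could additionally remark that when $\ens A$ and $\ens B$ share no common pattern the equality case of Proposition~\ref{two_unions_decomp_prop} applies to the last step, but since only an upper bound is needed this refinement is unnecessary.
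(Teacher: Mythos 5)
Your proposal is correct and follows essentially the same route as the paper: both factor the computation through the chain $\ens U\combines\ens A\comref\ens B\combines\ens A\cup\ens B$, bound the last step by Proposition~\ref{two_unions_decomp_prop} with coinciding left ensembles, and obtain the right-hand equality from Proposition~\ref{partition_comp_compl_prop}. The only (immaterial) difference is that you take an arbitrary optimal chain for $\ens U\computes\ens A\comref\ens B$ where the paper uses the single refinement step, which has the same weight $\card{\ens U}-\card{\ens A\comref\ens B}$ anyway.
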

\begin{wrapfigure}[4]{R}{0.2\textwidth}
\begin{tikzpicture}
[%
grow=right,%
level distance=1.2cm,
level 2/.style={sibling distance=0.7cm},%
edge from parent/.style={-latex, draw}%
]%
\node {$\ens U$}%
    child { node {$\ens A\comref\ens B$}%
      child { node {$\ens B$}}%
      child { node {$\ens A$}}%
    };%
\end{tikzpicture}
\end{wrapfigure}

\begin{proof}
Consider computation chain $\compchain C = \ens U\combines\ens A\comref\ens B\combines\ens A\cup\ens B$.
By proposition~\ref{two_unions_decomp_prop}, $\combw{\ens A\comref\ens B}{\ens A\cup\ens B}\leqslant \combw{\ens A\comref\ens B}{\ens A} + \combw{\ens A\comref\ens B}{\ens B} = 2\cdot\card{\ens A\comref\ens B} - \card{\ens A} - \card{\ens B},$ so
$$
\omega(\compchain C) = \combw{\ens U}{\ens A\comref\ens B} + \combw{\ens A\comref\ens B}{\ens A\cup\ens B}\leqslant \card{\ens U} - \card{\ens A\comref\ens B} + 2\cdot\card{\ens A\comref\ens B} - \card{\ens A} - \card{\ens B} = \mu(\ens A) + \mu(\ens B)-\mu(\ens A\comref\ens B),
$$
the last equality follows from proposition~\ref{partition_comp_compl_prop}.
\end{proof}

Proposition~\ref{two_part_comp_compl_prop} shows that the cost of simultaneously computing two partitions $\ens A$ and $\ens B$ is at least by $\mu(\ens A\comref\ens B)$ smaller than the weight of the {\slshape trivial} chain $\ens U\combines\ens A\cup\ens B$, so the smaller $\card{\ens A\comref\ens B}$, the better is the two-step chain $\ens U\combines\ens A\comref\ens B\combines\ens A\cup\ens B$. Given some thought, it seems obvious. Indeed, we can efficiently compute $\ens A$ and $\ens B$ when they have many patterns with big intersections -- we first combine the intersections and then assemble the rest.

Suppose now that our computation target is a union of four partitions: $\ens T = \ens A\cup\ens B\cup\ens C\cup\ens D$. In a similar manner we could arrange $\ens U\computes\ens T$ computation in a hierarchical order (recall that notation $\ens A\finerarr\ens B$ is equivalent to $\ens A\finer\ens B$) in many ways, for example as in fig.~\ref{abctree}.
\begin{figure}[ht!]
\centering
\begin{tikzpicture}
[
grow=right,
level distance = 2cm,
level 1/.style={level distance=2cm},
level 2/.style={sibling distance=0.7cm, level distance=2.5cm},
level 3/.style={sibling distance=1.5cm, level distance=1.5cm},
edge from parent/.style={-latex, draw}
]
\begin{scope}
\node {$\ens U$}
    child { node {$\ens A\comref\ens B\comref\ens C\comref\ens D$}
      child { node {$\ens C\comref\ens D$}
        child { node {$\ens D$} }
        child { node {$\ens C$} }
      }
      child { node {$\ens A\comref\ens B$}
        child { node {$\ens B$} }
        child { node {$\ens A$} }
      }
    };
\end{scope}
\end{tikzpicture}
\caption{One of possible combination orders for $\ens U\computes\ens A\cup\ens B\cup\ens C\cup\ens D$.} \label{abctree}
\end{figure}
This tree corresponds to chain $\compchain C = \ens U\combines\ens A\comref\ens B\comref\ens C\comref\ens D\combines(\ens A\comref\ens B)\cup(\ens C\comref\ens D)\combines\ens T$. Using the same technique as above, based on propositions~\ref{two_unions_decomp_prop} and~\ref{partition_comp_compl_prop}, we get the following bound:
\begin{equation}\label{four_parts_ineq_1}
\mu(\ens T)\leqslant\mu(\ens A) + \mu(\ens B) + \mu(\ens C) + \mu(\ens D) - \mu(\ens A\comref\ens B\comref\ens C\comref\ens D) - \mu(\ens A\comref\ens B) - \mu(\ens C\comref\ens D),
\end{equation}
or, using cardinalities,
\begin{equation}\label{four_parts_ineq_2}
\mu(\ens T)\leqslant \card{\ens U} + \card{\ens A\comref\ens B\comref\ens C\comref\ens D} + \card{\ens A\comref\ens B} + \card{\ens C\comref\ens D} - \card{\ens A} - \card{\ens B} - \card{\ens C} - \card{\ens D}.
\end{equation}
It means that this particular order is good when each pair $\ens A, \ens B$ and $\ens C, \ens D$ has a lot of well intersecting patterns. In a different case another order might be more efficient. Inserting the $\ens A\comref\ens B\comref\ens C\comref\ens D$ node between $\ens U$ and the rest cannot spoil anything: when $\ens A\comref\ens B\comref\ens C\comref\ens D=\ens U$ (the worst case), the inequality transforms to 
$$
\mu(\ens T)\leqslant 2\card{\ens U} + \card{\ens A\comref\ens B} + \card{\ens C\comref\ens D} - \card{\ens A} - \card{\ens B} - \card{\ens C} - \card{\ens D},
$$
corresponding to a shorter diagram with $\ens A\comref\ens B\comref\ens C\comref\ens D$ replaced by $\ens U$.

\newcommand\thetree{\compchain T}
\newcommand\thechain{c(\thetree)}
These examples can be generalized using the concept of partition trees. Call $\Lens\eqdef\bigcup\limits_i\lsets i$ a \defstyle{partition-union ensemble} if all $\lsets i$ are partitions. A \defstyle{binary partition tree} $\thetree$ is a binary tree rooted at $\ens U$ where each node is a partition which refines its children.  $\thetree$ \defstyle{computes} partition-union ensemble $\Lens$ if its leaves are {\slshape exactly} $\lsets i$, in this case we write $\ens U\computesby{\thetree}\Lens$. No additional restrictions on tree structure are imposed — some nodes might have one child and leaves may reside at different levels. 
\begin{figure}[ht!]
\centering
\begin{tikzpicture}
[
grow=right,
level 1/.style={level distance=1.5cm, sibling distance=1cm},
level 2/.style={sibling distance=1cm},
level 3/.style={sibling distance=1cm, level distance=1.5cm},
level 4/.style={sibling distance=0.7cm},
edge from parent/.style={-latex, draw}
]
\begin{scope}
\node {$\ens U$}
    child { node {$\ens D_2$}
      child { node {$\lsets3$}
      }
      child { node {$\ens D_1$}
        child { node {$\ens D_0$}
          child { node {$\lsets2$}  }
          child { node {$\lsets1$}  }
        }
        child { node {$\ens S_0$} 
          child { node {$\lsets0$}  }
        }
      }
    }
    child { node {$\lsets4$}
    }
    ;
\end{scope}
\begin{scope}
[xshift=4cm]
\end{scope}
\end{tikzpicture}
\caption{Partition tree of depth $4$ computing five partitions $\lsets i$. Nodes of one type are numbered right to left because we build such trees starting from leaves.} \label{partree}
\end{figure}
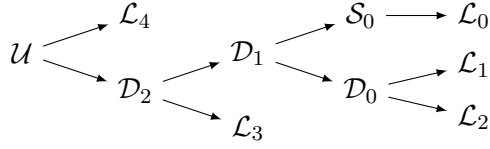

\begin{remark}
Any ensemble $\ens A$ can be extended to a partition-union ensemble by adding $\card{\ens A}$ complement patterns $U\setminus A, \; A\in\ens A$. This number can be reduced if we group non-intersecting patterns of $\ens A$.
\end{remark}

Let us define \defstyle{weight of an edge} $\ens A\finerarr\ens B$ as $\combw{\ens A}{\ens B}= \card{\ens A} - \card{\ens B}$ (proposition~\ref{refinement_comb_prop}) and the \defstyle{tree weight} $\omega(\thetree)$ as the sum of all edges weights:
$$
\omega(\thetree) \eqdef \sum\limits_{\ens A\finerarr\ens B}(\card{\ens A} - \card{\ens B}).
$$

Suppose cumulative level ensemble $\ens R_k$ is a union of all partition nodes up to level $k$, so for the example in fig.~\ref{partree}, $\ens R_0 = \ens U, \; \ens R_1 = \ens U\cup\lsets4\cup\ens D_2, \; \ens R_2 = \ens R_1\cup\ens D_1\cup\lsets3, \; \ens R_3 = \ens R_2\cup\ens S_0\cup\ens D_0, \; \ens R_4 = \ens R_3\cup\lsets0\cup\lsets1\cup\lsets2$. The computation chain $\thechain$ \defstyle{associated} with tree $\thetree$ is
\begin{equation*}
  \thechain\eqdef[\ens U=\ens R_0\combines\ens R_1\combines ... \combines \ens R_{d-1}\combines \Lens],
\end{equation*}
where $d$ is tree \defstyle{depth} (number of edges in a longest path). The chain is valid (indeed a computation chain) by corollary~\ref{ensemble_union_part_corollary} applied to every segment $\ens R_{k-1}\combines\ens R_k$. The last segment $\ens R_{d-1}\combines\Lens$ is valid because $\ens R_{d-1}\combines\ens R_d$ and $\Lens\subseteq\ens R_d$, so $\ens R_{d-1}\combines\Lens$ and $\combw{\ens R_{d-1}}{\Lens}=\combw{\ens R_{d-1}}{\ens R_d}$. By this same corollary,
$$
\omega(\thechain)\leqslant\omega(\thetree),
$$
implying
\begin{equation}\label{mu_tree_weight_ineq}
\mu(\Lens)\leqslant\omega(\thetree).
\end{equation}
These facts make the term ``$\thetree$ computes $\Lens$'' more clear.

We define \defstyle{tree computation depth} (not to be confused with {\slshape ordinary} depth $d$) as the depth of its associated chain:
$$
  \depth(\thetree)\eqdef\depth(c(\thetree)).
$$
The tree computation depth may not necessarily be equal to the depth of the {\slshape longest} chain $\ens U\combines\ens P_1\combines ... \combines \ens P_d$. There may be a depth-1 branch $\ens U\combines\ens P$ which has greater computational depth than any other branch because one of the patterns in $\ens P$ needs very many $\ens U$ components to be constructed.

\begin{proposition} If a binary tree $T$ nodes are labeled with numbers and every edge $\alpha\to\beta$ is assigned weight $\beta-\alpha$, then the sum of all edge weights
\begin{equation}\label{tree_weight_eq}
  \sum\limits_{\alpha\to\beta}(\alpha-\beta) = \rho + \sum\limits_j \delta_j - \sum\limits_i\lambda_i,
\end{equation}
where $\rho$ is the root label, $\lambda_i$ are leaves labels and $\delta_j$ are labels of nodes with exactly two children.
\end{proposition}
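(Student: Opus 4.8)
The plan is to prove the identity by induction on the number of internal nodes (equivalently, on the number of edges) of the binary tree $T$, exploiting the recursive structure of binary trees. The base case is a single-node tree: it has no edges, so the left side is an empty sum equal to $0$; the root is the unique leaf and there are no two-child nodes, so the right side is $\rho - \lambda_1 = \rho - \rho = 0$. For the inductive step I would split on the degree of the root.

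First I would handle the case where the root $\alpha$ has exactly one child, the subtree $T'$ rooted at $\beta$. The edge set of $T$ is the edge $\alpha\to\beta$ together with the edges of $T'$, so the total edge weight equals $(\beta - \alpha) + \sum_{\text{edges of }T'}(\text{weight})$. Applying the inductive hypothesis to $T'$ (whose root label is $\beta$, whose leaves are exactly the leaves of $T$, and whose two-child nodes are exactly those of $T$, since a one-child root is neither a leaf nor a two-child node) turns this into $(\beta - \alpha) + \bigl(\beta + \sum_j\delta_j - \sum_i\lambda_i\bigr)$; wait — I need to be careful with signs. Writing the claimed formula in the form used in the statement, $\sum_{\alpha\to\beta}(\beta-\alpha) = \rho + \sum_j\delta_j - \sum_i\lambda_i$ up to the overall sign convention in~(\ref{tree_weight_eq}), the root term $\beta$ coming from $T'$ cancels against $-\alpha$ only partially, leaving the new root contribution $-\alpha$ plus... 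Rather than risk a sign slip here, the clean bookkeeping is: each non-root, non-leaf node of a binary tree contributes its label with a $+$ once as the head of the edge coming down into it and with a $-$ once or twice as the tail of the edge(s) going out; telescoping, a node of out-degree $2$ nets $+\lambda$ as head minus $2\lambda$ as tails, i.e. $-\lambda$, a node of out-degree $1$ nets $0$, the root (out-degree $1$ or $2$, never a head) nets $-\rho$ or $-2\rho$ plus the internal-node accounting. I would therefore instead organize the proof as a direct counting argument: in $\sum_{\alpha\to\beta}(\alpha-\beta)$ each node $v$ with label $x_v$ appears with coefficient $(\text{out-degree of }v) - (\text{in-degree of }v)$; the root has in-degree $0$, every other node has in-degree $1$, leaves have out-degree $0$, two-child nodes have out-degree $2$, one-child nodes have out-degree $1$. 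Hence the coefficient of the root is its out-degree (which for the convention giving~(\ref{tree_weight_eq}) must be normalized — see below), the coefficient of a leaf is $-1$, the coefficient of a two-child node is $+1$, and the coefficient of a one-child node is $0$, which is exactly~(\ref{tree_weight_eq}) once the root is counted as contributing $+\rho$.

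The one genuine subtlety — and the step I expect to be the main obstacle — is the treatment of the root and the normalization of the coefficient $\rho$: the raw counting argument gives the root coefficient equal to its out-degree (1 or 2), whereas~(\ref{tree_weight_eq}) asserts the coefficient is exactly $1$. This is reconciled by noting that in the intended application (and in the implicit convention of the proposition) the root is regarded as having a single incoming "virtual" edge, or equivalently that a two-child root is also counted among the $\delta_j$; either reading makes the two descriptions agree. I would state this convention explicitly at the start of the proof, verify it is consistent with the base case and with the two inductive cases, and then the identity follows. The remaining details — checking that "leaves of $T$" and "two-child nodes of $T$" decompose correctly across the root split in the induction, or equivalently that the degree sum $\sum_v(\deg^{out}_v - \deg^{in}_v)=0$ is used correctly — are routine and I would not belabor them.
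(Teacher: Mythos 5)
Your final argument is correct, but it is a genuinely different proof from the paper's: the paper proceeds by induction on tree depth, splitting on whether the root has one or two children and merging the leaf-sets and two-child-node-sets of the subtrees, whereas you settle on a direct double-counting argument in which each node's coefficient in $\sum_{\alpha\to\beta}(\alpha-\beta)$ is its out-degree minus its in-degree. Your counting route is shorter and arguably cleaner, since it dispenses with the case analysis entirely; the paper's induction has the minor advantage of reusing the same template for Proposition~\ref{tree_leaves_prop}. Two remarks on your write-up. First, the sign trouble you stumble over in the inductive attempt is not your error: the proposition's prose says the edge is ``assigned weight $\beta-\alpha$'' while the displayed sum is over $(\alpha-\beta)$; given the application (edge $\ens A\finerarr\ens B$ from parent to child with weight $\card{\ens A}-\card{\ens B}$, the tree being rooted at the finest partition), the displayed formula is the intended one and the prose is a typo, so you were right to work with $\alpha-\beta$. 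Second, the ``genuine subtlety'' about normalizing the root coefficient is not actually there: the statement's $\delta_j$ range over \emph{all} nodes with exactly two children, root included, so a two-child root contributes $\rho+\rho$ on the right-hand side, matching its out-degree $2$ on the left with no virtual edge or renormalization needed (and this is exactly the convention the paper's own inductive step uses, via $\curly{\delta_j}=\curly{\rho}\cup\curly{\delta_l'}\cup\curly{\delta_n''}$). Your second proposed reading is therefore the correct one, and with it your coefficient bookkeeping --- leaf $-1$, one-child node $0$, two-child node $+1$, plus an extra $+1$ for the root because its in-degree is $0$ --- gives the identity immediately.
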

\begin{proof}
For convenience we denote nodes and the corresponding labels by the same symbol. 
Let us use induction over tree depth $d$. Induction base $d=0$ is obvious, both sides of the equality are zero. Suppose now it is valid for all trees with depth $d$ and consider a tree with depth $d+1$. If its root $\rho$ has only one child $\rho'$, then 
we apply the induction hypothesis to subtree rooted at $\rho'$. It has the same set of leaves and two-children nodes, so the new weight is expressed as $(\rho-\rho') + (\rho'+\sum_j\delta_j - \sum_i\lambda_i)$ which transforms to~(\ref{tree_weight_eq}) for the new root.

Suppose now that root $\rho$ has two children, $\rho'$ and $\rho''$ and the corresponding subtrees have leaves and two-children nodes $\lambda_k', \delta_l'$ and $\lambda_m'', \delta_n''$ resp. By induction hypothesis the new weight is 
$$
(\rho-\rho') + (\rho' + \sum\limits_l\delta_l' - \sum\limits_k\lambda_k') + (\rho-\rho'') + (\rho'' + \sum\limits_n\delta_n'' - \sum\limits_m\lambda_m'') =
$$
$$
\rho + (\rho+\sum\limits_l\delta_l' + \sum\limits_n\delta_n'') - (\sum\limits_k\lambda_k' + \sum\limits_m\lambda_m'').
$$
After noting that $\curly{\lambda_i} = \curly{\lambda_k'}\cup\curly{\lambda_m''}$ and $\curly{\delta_j} = \curly{\rho}\cup\curly{\delta_l'}\cup\curly{\delta_n''}$ we get~(\ref{tree_weight_eq}).
\end{proof}
\begin{proposition}\label{tree_leaves_prop}
For any binary tree the number of leaves is the number of two-children nodes plus one.
\end{proposition}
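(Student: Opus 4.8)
The plan is to prove this by a double count of the tree's edges. Write $n_0,n_1,n_2$ for the numbers of nodes having, respectively, $0$, $1$ and $2$ children; then the number of leaves is $n_0$, the number of two-children nodes is $n_2$, and the total number of nodes is $N=n_0+n_1+n_2$.

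First I would count edges from the child side: in any rooted tree every node except the root has exactly one parent, hence exactly one incoming edge, so the number of edges is $N-1=n_0+n_1+n_2-1$. Then I would count the same edges from the parent side: a node with $k$ children emits exactly $k$ outgoing edges, so the number of edges also equals $0\cdot n_0+1\cdot n_1+2\cdot n_2=n_1+2n_2$. Equating the two expressions gives $n_0+n_1+n_2-1=n_1+2n_2$, i.e.\ $n_0=n_2+1$, which is the claim.

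An alternative, matching the inductive style of the preceding proposition, is induction on the depth $d$. The base case $d=0$ is a single node, which is a leaf with no two-children node, so $1=0+1$. For the step, if the root has a single child we apply the hypothesis to the subtree rooted at that child — it has the same leaves and the same two-children nodes, and the root itself counts toward neither — and if the root has two children we apply the hypothesis to both subtrees, whose leaf sets are disjoint and whose two-children nodes together with the root exhaust all two-children nodes of the whole tree, so again $n_0=n_2+1$. I do not anticipate a real obstacle; the only point needing care is the one-child case of the induction, where one must check that the root contributes to neither tally — and the edge-counting argument sidesteps even that.
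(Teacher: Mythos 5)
Your proposal is correct, and your primary argument is a genuinely different route from the paper's. The paper proves this proposition by induction on tree depth, explicitly reusing the technique from the preceding proposition on tree weights (equation~(\ref{tree_weight_eq})): it splits on whether the root has one or two children and combines the counts from the subtrees --- exactly your ``alternative'' argument, which you have worked out correctly, including the delicate point that a one-child root contributes to neither tally. Your main argument, the double count of edges ($N-1$ incoming edges versus $n_1+2n_2$ outgoing edges, giving $n_0=n_2+1$), is shorter, avoids any case analysis on the root, and does not depend on the recursive structure of the tree at all --- it works for any finite rooted tree with maximum out-degree two. What the paper's inductive approach buys in exchange is uniformity: the same induction scaffolding proves both the weight identity and the leaf count, so the paper can dispatch this proposition in one sentence by reference. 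Either proof is acceptable; yours is the more self-contained of the two.
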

\begin{proof}Use the same tree depth induction technique as for for~(\ref{tree_weight_eq}).
\end{proof}
\begin{corollary}\label{tree_weight_corollary}
If a binary partition tree $\thetree$ computes partition-union ensemble $\Lens=\bigcup\limits_i \lsets i$ and its nodes (partitions) with exactly two children are $\ens D_j$, then
$$
  \omega(\thetree) = \card{\ens U} + \sum\limits_j \card{\ens D_j} - \sum\limits_i \card{\lsets i} = \sum\limits_i\mu(\lsets i) - \sum\limits_j\mu(\ens D_j).
$$
\end{corollary}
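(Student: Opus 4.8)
The plan is to read off both equalities directly from the two preceding propositions, with no extra work beyond bookkeeping. For the first equality, I would instantiate the edge-weight formula~(\ref{tree_weight_eq}) with the labelling that sends each partition node $\ens P$ of $\thetree$ to the number $\card{\ens P}$. Since the tree is a binary partition tree, every edge has the form $\ens A\finerarr\ens B$ with $\ens A$ the parent and $\ens B$ the child, and by proposition~\ref{refinement_comb_prop} its weight is exactly $\card{\ens A}-\card{\ens B}$, i.e.\ the label of the parent minus the label of the child — precisely the edge weight appearing in~(\ref{tree_weight_eq}). The root label is $\card{\ens U}$, the leaf labels are $\card{\lsets i}$ (because $\thetree$ computes $\Lens$, so its leaves are exactly the $\lsets i$), and the labels of the two-children nodes are $\card{\ens D_j}$. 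Plugging these in yields $\omega(\thetree)=\card{\ens U}+\sum_j\card{\ens D_j}-\sum_i\card{\lsets i}$.

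For the second equality I would first observe that every node of $\thetree$ is a $U$-partition: the root $\ens U$ is, and the relation $\finer$ (which by our convention holds only between partitions of a common domain) propagates this down the tree. Hence for each leaf $\supp\lsets i=U$, so the default base ensemble in $\mu(\lsets i)$ is $(\supp\lsets i)^*=\ens U$, and since $\ens U\finer\lsets i$, proposition~\ref{partition_comp_compl_prop} gives $\mu(\lsets i)=\card{\ens U}-\card{\lsets i}$; likewise $\mu(\ens D_j)=\card{\ens U}-\card{\ens D_j}$. Substituting these into $\sum_i\mu(\lsets i)-\sum_j\mu(\ens D_j)$ produces a term $\bigl(\#\text{leaves}-\#\text{two-children nodes}\bigr)\card{\ens U}$, which equals $\card{\ens U}$ by proposition~\ref{tree_leaves_prop}, and the remaining terms are exactly $\sum_j\card{\ens D_j}-\sum_i\card{\lsets i}$. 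This matches the first expression, completing the chain of equalities.

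There is essentially no hard step here; the statement is a packaging of results already proved. The only point requiring a moment's care is the bookkeeping that makes the $\card{\ens U}$ contributions cancel correctly — this is where proposition~\ref{tree_leaves_prop} is used — together with the (easy but necessary) remark that all nodes are genuine $U$-partitions so that the unsubscripted complexity $\mu(\cdot)$ really is computed against $\ens U$ and proposition~\ref{partition_comp_compl_prop} applies verbatim.
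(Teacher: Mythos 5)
Your proof is correct and follows exactly the paper's route: the first equality is the node-labelling instantiation of~(\ref{tree_weight_eq}) with labels $\card{\ens P}$, and the second combines proposition~\ref{tree_leaves_prop} with proposition~\ref{partition_comp_compl_prop}. The paper states this in two sentences; you have merely filled in the same bookkeeping explicitly.
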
\label{parttree_weight_corollary}
\noindent Due to~(\ref{mu_tree_weight_ineq}), this equality generalizes~(\ref{two_part_comp_compl_ineq}), (\ref{four_parts_ineq_1}) and~(\ref{four_parts_ineq_2}).

\begin{proof}
The first equality directly follows from~(\ref{tree_weight_eq}). The second one follows from propositions~\ref{tree_leaves_prop} and~\ref{partition_comp_compl_prop}.
\end{proof}

\subsection{Algorithms for building trees}

Corollary~\ref{parttree_weight_corollary} naturally suggests a greedy approach for building a binary partition tree (and hence the associated computation chain), see algorithm~\ref{greedy_alg} which computes partition-union ensemble $\Lens = \bigcup\limits_{i=0}^{\ecount-1}\lsets i$ consisting of $\ecount$ partitions and hopefully has weight smaller than the trivial chain $\ens U\combines\Lens$ has.
\begin{center}
\begin{algorithm}[H]
\SetAlgoLined
\DontPrintSemicolon
\KwData{partition-union ensemble $\Lens=\cup\lsets i$}
\KwResult{partition tree $\thetree$ such that $\ens U\computesby{\thetree}\Lens$}
 $Q\assign \curly{\lsets0, ..., \lsets{\ecount-1}}$\;
 \While{$\card{Q}>1$}{
  Take {\slshape distinct} $\ens A, \ens B\in M$ with minimal $\card{\ens A\comref\ens B}$\;
  $\ens C\assign\ens A\comref\ens B$\;
  $Q\assign Q\cup\curly{\ens C}\setminus\curly{\ens A, \ens B}$\;
  Create node $\ens C$ with edges $\ens C\finerarr \ens A$ and $\ens C\finerarr\ens B$\;
 }
 Create root $\ens U$ and edge $\ens U\finerarr\ens A_0$ for the single remaining $\ens A_0\in Q$\;
 \BlankLine
 \BlankLine
 \caption{Greedy construction of a partition tree computing $\Lens$.}
 \label{greedy_alg}
\end{algorithm}
\end{center}

One flaw of this algorithm is a lack of tree depth control. Another flaw is its complexity. The most costly procedure here is the computation of $\card Q-2$ common refinement cardinalities $\card{\ens C\comref\ens X}, \; \ens X\in Q$ after removing $\ens A, \ens B$ but before adding $\ens C$. The loop is executed $\ecount-1$ times, so with a straightforward implementation $\Theta(\ecount^2)$ refinements should be computed in total, which might be too much.

Should this be the case or if some {\slshape a priori} information is known about the initial $\lsets i$ partitions, a fixed order based on this information might be used. A trivial order would be to arrange the $\lsets i$ partitions sequentially, refine the consequent pairs and repeat this procedure until a single partition is left, then connect it with $\ens U$. Precisely this method works with the Hough patterns because with the natural elevation numbering consequent lines have small $\card{\ens A\comref\ens B}$ (see~(\ref{hough_consequent_ineq})). See algorithm~\ref{partree_std_alg} and its sample output in fig.~\ref{ltree}. The total number of partition refinements here is $\ecount-1$ which a lot better than~$\Theta(\ecount^2)$.

\newcommand\levsize{prev\_level\_size}
\begin{center}
\begin{algorithm}[H]
\KwData{partition-union ensemble $\Lens=\cup\lsets i$}
\KwResult{depth $\ceil{\log_2 \ecount}+1$ partition tree $\thetree$ such that $\ens U\computesby{\thetree}\Lens$}
\SetAlgoLined
\DontPrintSemicolon
 \For{$i\assign 0$ \KwTo $\ecount-1$} {
   $\lset 0i\assign\lsets i$\;
 }
 $k\assign 1$ \tcp*[r]{currently constructed level}
 $\levsize\assign \ecount$\;
 \While{$\levsize > 1$} {
   $i\assign 0$\;
   
   \While(\tcp*[f]{new nodes with two children}){$2i+1<\levsize$}{
     $\lset ki\assign \lset{k-1}{2i}\comref\lset{k-1}{2i+1}$\;
     Create node $\lset ki$ with edges $\lset ki\finerarr\lset{k-1}{2i}$ and $\lset ki\finerarr\lset{k-1}{2i+1}$\;
     $i\assign i+1$\;
   }

   \If(\tcp*[f]{the last new node has only one child}){$2i<\levsize$} {
     $\lset ki\assign\lset{k-1}{2i}$\;
     Create node $\lset ki$ with edge $\lset ki\finerarr\lset{k-1}{2i}$\;
   }
   $k\assign k+1$\;
   $\levsize\assign i+1$\;
 }
 Create root $\ens U$ and edge $\ens U\finerarr\lset {k-1}0$\;
 \BlankLine
 \BlankLine
 \caption{Building a depth $\ceil{\log_2 {\ecount+1}}$ partition tree computing $\Lens$ by a predefined order.}
 \label{partree_std_alg}
\end{algorithm}
\end{center}

\begin{figure}[ht!]
\centering
\begin{tikzpicture}
[
grow=right,
level 1/.style={level distance=1.5cm, sibling distance=1cm},
level 2/.style={sibling distance=2.4cm},
level 3/.style={sibling distance=1.7cm, level distance=1.5cm},
level 4/.style={sibling distance=0.8cm, level distance=2cm},
edge from parent/.style={-latex, draw}
]
\begin{scope}
\node {$\ens U$}
  child { node {$\lset 30$}
    child { node {$\lset 21$}
      child { node {$\lset 12$}
        child { node {$\lset 05\!=\!\lsets 5$} }
        child { node {$\lset 04\!=\!\lsets 4$} }
      }
    }
    child { node {$\lset 20$}
      child { node {$\lset 11$}
        child { node {$\lset 03\!=\!\lsets 3$} }
        child { node {$\lset 02\!=\!\lsets 2$} }
      }
      child { node {$\lset 10$}
        child { node {$\lset 01\!=\!\lsets 1$} }
        child { node {$\lset 00\!=\!\lsets 0$} }
      }
    }
  };
\end{scope}
\begin{scope}
[xshift=4cm]
\end{scope}
\end{tikzpicture}
\caption{Partition tree constructed by algorithm~\ref{partree_std_alg} to compute ensemble $\lsets0\cup\lsets1\cup\lsets2\cup\lsets3\cup\lsets4\cup\lsets5$.} \label{ltree}
\end{figure}

All $\lset ki$ reside on the same level, let us denote their number as $\lcount k$ and the number of $\lset ki$ having two children (i.e. created inside the inner loop) as $\glcount k\leqslant \lcount k$. In the beginning of every iteration of the outer loop $\levsize=\lcount{k-1}$ and the inner loop makes $\glcount k$ iterations. The distribution of single-child nodes depends on the binary representation of $\ecount$ but it is guaranteed that for any $k$ it can only be the last node $\lset ki$ (i.e. with the biggest possible $i$).

\begin{proposition}\label{lglcount_ineq_prop}
$\lcount k = \ceil*{\frac \ecount{2^k}}$ and $\glcount k = \floor*{\frac \ecount{2^k}}$.
\end{proposition}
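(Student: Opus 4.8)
The plan is to prove both formulas by induction on the level index $k$, using the recursion that governs how $\lcount{k}$ and $\glcount{k}$ are produced inside algorithm~\ref{partree_std_alg}. First I would record the base case: for $k=0$ the initial loop sets $\lset0i\assign\lsets i$ for $i=0,\dots,\ecount-1$, so $\lcount0=\ecount=\ceil{\ecount/2^0}$; one may also set $\glcount0$ to whatever is convenient (it does not occur in the statement) or simply start the induction at $k=1$. The substance of the argument is the transition from level $k-1$ to level $k$. At the start of the outer-loop iteration that builds level $k$ we have $\levsize=\lcount{k-1}$. The inner \textbf{while} loop runs exactly while $2i+1<\levsize$, producing one two-child node $\lset k i$ for each such $i$; hence $\glcount k=\floor{\lcount{k-1}/2}$. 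The trailing \textbf{if} adds one more (single-child) node precisely when $\levsize$ is odd, so $\lcount k=\ceil{\lcount{k-1}/2}$.

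Next I would feed the inductive hypothesis $\lcount{k-1}=\ceil{\ecount/2^{k-1}}$ into these two recurrences. For $\lcount k$ this is the standard identity $\ceil{\ceil{x}/2}=\ceil{x/2}$ applied to $x=\ecount/2^{k-1}$, which gives $\lcount k=\ceil{\ecount/2^{k}}$; a clean way to see it is to write $\ecount = 2^{k} q + r$ with $0\le r<2^{k}$ and check both sides directly. For $\glcount k$ I would combine $\glcount k=\floor{\lcount{k-1}/2}$ with $\lcount{k-1}=\ceil{\ecount/2^{k-1}}$ and use $\floor{\ceil{x}/2}=\floor{x/2}$ when $x$ has the form $\ecount/2^{k-1}$ (again transparent from the $\ecount=2^{k}q+r$ decomposition, since $\ceil{\ecount/2^{k-1}}=2q+\ceil{r/2^{k-1}}$ and halving-and-flooring kills the correction term). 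This yields $\glcount k=\floor{\ecount/2^{k}}$, completing the induction.

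The only genuine care-point — the ``main obstacle'', such as it is — is making sure the floor/ceiling juggling is valid: the identities $\ceil{\ceil{x}/2}=\ceil{x/2}$ and $\floor{\ceil{x}/2}=\floor{x/2}$ are \emph{not} true for arbitrary real $x$ (e.g. $x=1/2$ breaks the second), and they are being used here only because the relevant arguments are rationals of the special shape $\ecount/2^{j}$ with $\ecount\in\mathbb N$. I would therefore phrase the whole computation in terms of the division-with-remainder representation $\ecount=2^{k}q+r$, $0\le r<2^{k}$, so that every quantity $\lcount{k}$, $\glcount{k}$ becomes an explicit integer expression in $q$ and $r$ and no delicate real-number identity is invoked. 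A brief remark that the single-child node, when it exists, is always the last one (largest $i$) — already asserted in the paragraph preceding the proposition — confirms that the counting of inner-loop iterations versus the final \textbf{if} is exactly as described, so nothing is double-counted.
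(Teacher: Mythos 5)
Your computation of $\lcount k$ is correct and coincides with the paper's: the recurrence $\lcount k=\ceil{\lcount{k-1}/2}$ read off from algorithm~\ref{partree_std_alg} together with the identity $\ceil{\ceil{t}/2}=\ceil{t/2}$, which does hold for every real $t\geqslant 0$. The $\glcount k$ half, however, contains a genuine gap. Your reading of the code is right: the inner loop runs exactly $\floor{\lcount{k-1}/2}$ times, so $\glcount k=\floor{\lcount{k-1}/2}=\floor{\ceil{\ecount/2^{k-1}}/2}$. But the identity you then invoke, $\floor{\ceil{x}/2}=\floor{x/2}$ for $x=\ecount/2^{k-1}$, is false precisely for arguments of this shape: take $\ecount=3$, $k=2$, $x=3/2$; then $\floor{\ceil{3/2}/2}=1$ while $\floor{3/4}=0$. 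In your own decomposition $\ecount=2^kq+r$ the correction term $\ceil{r/2^{k-1}}$ equals $2$ whenever $2^{k-1}<r<2^k$, and halving-and-flooring then contributes $q+1$, not $q$ — it is not ``killed''. Nor is this a presentational slip you could patch, because your recurrence is the faithful one: for $\ecount=3$ the algorithm's level $2$ consists of the single node $\lset20=\lset10\comref\lset11$, which has two children, so $\glcount2=1\neq0=\floor{3/4}$. The same occurs at the top of the tree in fig.~\ref{ltree}, where $\ecount=6$ and the level-$3$ node $\lset30$ has two children although $\floor{6/8}=0$.

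For comparison, the paper's own proof avoids your difficulty only by asserting a different recurrence, $\glcount{k+1}=\floor{\glcount k/2}$, and then applying the always-valid identity $\floor{\floor{t}/2}=\floor{t/2}$; that recurrence does produce $\floor{\ecount/2^k}$, but it is not what the algorithm computes (for $\ecount=6$ it predicts $\glcount3=0$). In other words, your derivation is the more honest one, and the contradiction you would hit on carrying it through is real: the second formula of the proposition should read $\glcount k=\floor{\lcount{k-1}/2}$, which exceeds $\floor{\ecount/2^k}$ by one exactly when the remainder of $\ecount$ modulo $2^k$ lies strictly between $2^{k-1}$ and $2^k$. (The downstream damage is mild — the subsequent lemma only needs $\glcount k\leqslant\ecount/2^k$, and the corrected value still satisfies $\glcount k\leqslant\ecount/2^k+1/2$, which perturbs the weight bound by a lower-order term — but as a proof of the proposition as stated your argument cannot be completed, because the statement itself does not match the algorithm.)
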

\begin{proof}
$\lcount k$ and $\glcount k$ obviously satisfy recurrence relations $\lcount {k+1} = \ceil*{\frac{\lcount k}2}$ and $\glcount {k+1} = \floor*{\frac{\glcount k}2}$.

For any $\mathbb R\ni t\geqslant 0$ holds $\ceil*{ \frac {\ceil{t}} 2 }= \ceil*{ \frac t2 }$. Indeed, case $t\in\mathbb Z$ is trivial, for other $t$ use representation $t=\floor t + \lbrace t\rbrace,\;0<\lbrace t\rbrace<1$ and consider cases of $\floor{t}$ being odd or even. The first formula follows from this fact and the recurrence relation by induction. The same proof works for the second one using the equality $\floor*{ \frac {\floor{t}} 2 }= \floor*{ \frac t2 }$.
\end{proof}

\begin{corollary}
The partition tree constructed by algorithm~\ref{partree_std_alg} has depth $\ceil{\log_2\ecount} + 1$.
\end{corollary}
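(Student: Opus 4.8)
The plan is to track how many levels algorithm~\ref{partree_std_alg} produces before the root is attached, and then to reduce the claim to an elementary computation with $\ceil{\cdot}$ and $\log_2$ on top of proposition~\ref{lglcount_ineq_prop}. First I would fix notation matching the algorithm: level $0$ is the given family $\lset 0i = \lsets i$ for $i = 0, \dots, \ecount-1$, and for $k\geqslant 1$ the outer loop produces level $k$ — the nodes $\lset ki$ — precisely while the preceding level has more than one node, i.e. while $\lcount{k-1} > 1$ (the loop variable $\levsize$ equals $\lcount{k-1}$ at the start of the iteration building level $k$, and is reset to $\lcount k$ at its end); once a single-node level is reached, the loop stops and the edge $\ens U\finerarr\lset{\cdot}0$ is added. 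By proposition~\ref{lglcount_ineq_prop}, $\lcount k = \ceil{\ecount/2^k}$, which is non-increasing in $k$ and eventually equals $1$, so there is a well-defined least index $m$ with $\lcount m = 1$ (with $m = 0$ exactly when $\ecount = 1$, in which case the loop never runs and $\ens U$ is joined directly to $\lset 00$). When $m\geqslant 1$, minimality gives $\lcount{m-1}\geqslant 2$, so the loop produces exactly levels $1,\dots,m$: it builds level $m$ and does not build level $m+1$. Thus $\ens U$ sits one step above level $m$.

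Next I would show that every leaf of $\thetree$ sits at depth $m+1$ (depth meaning the number of edges on a longest root-to-leaf path). The leaves are exactly the original partitions $\lset 0i$, since every $\lset ki$ with $k\geqslant 1$ is created with at least one outgoing edge down to level $k-1$. From the way edges are created — $\lset ki\finerarr\lset{k-1}{2i}$ and, when present, $\lset ki\finerarr\lset{k-1}{2i+1}$, together with $\lset ki\finerarr\lset{k-1}{2i}$ for the terminal single-child node — the parent of $\lset{k-1}{j}$ is always $\lset k{\floor{j/2}}$, uniformly. Iterating this, the level-$k$ ancestor of $\lset 0i$ is $\lset k{\floor{i/2^k}}$; at $k=m$ this is $\lset m{\floor{i/2^m}} = \lset m0$ because $i < \ecount\leqslant 2^m$, and the parent of $\lset m0$ is $\ens U$. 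Hence each $\lset 0i$ is joined to the root by a path of exactly $m+1$ edges, and since these are the only leaves, $\thetree$ has depth $m+1$.

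Finally, the arithmetic step: $\lcount m = \ceil{\ecount/2^m} = 1$ holds iff $\ecount\leqslant 2^m$ (the lower bound $\ecount/2^m > 0$ being automatic from $\ecount\geqslant 1$), so the least integer $m$ with this property is $m = \ceil{\log_2\ecount}$, giving depth $\ceil{\log_2\ecount}+1$. I do not expect a genuine obstacle here; the single point requiring care is the claim that the ``odd-one-out'' single-child relay nodes do not shorten any root-to-leaf path, and this is exactly what the uniform parent formula $\lset{k-1}{j}\mapsto\lset k{\floor{j/2}}$ settles, so the remainder is just bookkeeping on top of proposition~\ref{lglcount_ineq_prop}.
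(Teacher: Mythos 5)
Your proof is correct and follows the same route the paper intends: the corollary is stated without proof as a direct consequence of proposition~\ref{lglcount_ineq_prop}, namely that the level sizes are $\lcount k = \ceil{\ecount/2^k}$, so the loop terminates after producing level $m=\ceil{\log_2\ecount}$ (the least $k$ with $\lcount k=1$) and the root adds one more edge. Your extra care with the uniform parent map $\lset{k-1}{j}\mapsto\lset{k}{\floor{j/2}}$, confirming that every leaf sits at the same depth $m+1$, is a reasonable bit of bookkeeping the paper leaves implicit.
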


\noindent We can finally formulate the important statement for estimating the Hough patterns complexity.
\begin{lemma}
Suppose the cardinalities of one level nodes $\lset ki$ of the partition tree $\thetree$ constructed by algorithm~\ref{partree_std_alg} for ensemble $\Lens=\bigcup\limits_{i=0}^{\ecount-1}\lsets i$ are bounded by a sequence $a_k$:
$$
  \max\limits_i\card{\lset ki}\leqslant a_k, \;\;\;\;\;\;\;\; k=0,1,..., \ceil{\log_2\ecount},
$$
Then the tree weight
\begin{equation}\label{main_tree_weight_ineq}
\omega(\thetree)\leqslant \card{\ens U} + \ecount \sum\limits_{k=1}^{\ceil{\log_2\ecount}} \frac{a_k}{2^k} -\sum\limits_{i=0}^{\ecount-1}\card{\lsets i},
\end{equation}
and the computational depth
\begin{equation}\label{main_tree_depth_ineq}
\depth(\thetree) \leqslant \log_2\card{U} + \sum\limits_{k=1}^{\ceil{\log_2\ecount}} \log_2 a_k.
\end{equation}
\end{lemma}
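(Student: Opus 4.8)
The statement bundles together two essentially independent estimates — the weight bound~(\ref{main_tree_weight_ineq}) and the depth bound~(\ref{main_tree_depth_ineq}) — and I would prove each in turn, both by unwinding the construction performed by algorithm~\ref{partree_std_alg}.

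For the weight, the plan is just to supply corollary~\ref{tree_weight_corollary} with the relevant combinatorial data. That corollary already gives $\omega(\thetree) = \card{\ens U} + \sum_j\card{\ens D_j} - \sum_i\card{\lsets i}$, where the $\ens D_j$ are precisely the tree nodes with two children, so the whole task reduces to bounding $\sum_j\card{\ens D_j}$. I would observe that the two-children nodes at level $k$ are exactly the $\lset ki$ produced inside the inner loop of algorithm~\ref{partree_std_alg}, that their number is $\glcount k=\floor{\ecount/2^k}$ by proposition~\ref{lglcount_ineq_prop}, and that the only levels containing such nodes are $k=1,\dots,\ceil{\log_2\ecount}$. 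Grouping the sum by level and invoking the hypothesis $\card{\lset ki}\leqslant a_k$ gives $\sum_j\card{\ens D_j}\leqslant\sum_{k=1}^{\ceil{\log_2\ecount}}\glcount k\,a_k\leqslant\ecount\sum_{k=1}^{\ceil{\log_2\ecount}}a_k/2^k$; substituting into the corollary's formula produces~(\ref{main_tree_weight_ineq}). This half is routine.

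For the depth, the plan is to walk through the associated chain $c(\thetree)=[\ens U=\ens R_0\combines\dots\combines\ens R_{d-1}\combines\Lens]$, with $d=\ceil{\log_2\ecount}+1$, and bound each step, since by definition $\depth(\thetree)=\depth(c(\thetree))=\sum_{j=1}^{d-1}\depth_{\ens R_{j-1}}(\ens R_j)+\depth_{\ens R_{d-1}}(\Lens)$. For a single step I would argue: every pattern of $\ens R_j$ that is not already in $\ens R_{j-1}$ lies in one of the partition nodes at tree level $j$; such a node $\ens P$ has, by the definition of a partition tree, a parent $\ens P'$ at level $j-1$ with $\ens P'\finer\ens P$, and $\ens P'\subseteq\ens R_{j-1}$, so by proposition~\ref{finer_partition_decomp_prop} each $B\in\ens P$ is a disjoint union of at most $\card{\ens P'}$ patterns of $\ens P'$, whence $\omega_{\ens R_{j-1}}(B)\leqslant\card{\ens P'}-1$ and $\depth_{\ens R_{j-1}}(B)\leqslant\ceil{\log_2\card{\ens P'}}$. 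The parent $\ens P'$ is the root $\ens U$ when $j=1$ (contributing $\ceil{\log_2\card U}$), a node $\lset1{i'}$ on the last step, and a node $\lset{d-j+1}{i'}$ of cardinality $\leqslant a_{d-j+1}$ otherwise; adding the per-step bounds and re-indexing the levels turns the total into $\ceil{\log_2\card U}+\sum_{k=1}^{\ceil{\log_2\ecount}}\ceil{\log_2 a_k}$, which is~(\ref{main_tree_depth_ineq}) up to the outer ceilings, which are harmlessly discarded.

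There is nothing deep to overcome; the one real point of care — and the natural hiding place for an off-by-one error — is the clash of the two level conventions. Algorithm~\ref{partree_std_alg} numbers $\lset ki$ from the leaves upward ($k=0$ at the leaves, $k=\ceil{\log_2\ecount}$ directly below $\ens U$), whereas $c(\thetree)$ and the cumulative ensembles $\ens R_j$ are numbered from the root downward, so the parent of a tree-level-$j$ node lives at algorithm-level $d-j+1$; tracking this correspondence correctly is exactly what makes the index in the final sums run from $1$ to $\ceil{\log_2\ecount}$. One should also check that algorithm~\ref{partree_std_alg} puts all leaves $\lsets i$ at the single deepest level (so $\Lens$ is disjoint from $\ens R_{d-1}$ and the last step has the same shape as the others), and note that single-child nodes require no separate treatment — they are not among the $\ens D_j$, and for the depth estimate they too have a parent one level above.
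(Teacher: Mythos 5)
Your argument is correct and is essentially the paper's own proof: the weight half is exactly corollary~\ref{tree_weight_corollary} plus the count $\glcount k=\floor{\ecount/2^k}\leqslant\ecount/2^k$ of two-children nodes from proposition~\ref{lglcount_ineq_prop}, and the depth half is the per-edge estimate $\depth_{\lset ki}(\lset{k-1}{j})\leqslant\ceil{\log_2\card{\lset ki}}\leqslant\ceil{\log_2 a_k}$ (proposition~\ref{combw_depth_generic_bound_prop}) assembled level by level along the associated chain via corollary~\ref{ensemble_union_part_corollary}, which you merely spell out with the leaf-up versus root-down index translation made explicit. The one slip is the final remark that the ceilings are ``harmlessly discarded'': since $\ceil{x}\geqslant x$, dropping them \emph{strengthens} the claim, so strictly speaking your derivation (and the paper's, which inherits the same ceilings from proposition~\ref{combw_depth_generic_bound_prop}) only establishes $\depth(\thetree)\leqslant\ceil{\log_2\card{U}}+\sum_{k}\ceil{\log_2 a_k}$ rather than the ceiling-free form of~(\ref{main_tree_depth_ineq}) — an imprecision of the stated lemma itself, not a gap you introduced.
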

\begin{proof}
The first statement follows from corollary~\ref{tree_weight_corollary} and proposition~\ref{lglcount_ineq_prop} after noting that $\glcount k\leqslant \frac{\card{\ecount}}{2^k}$.
The second statement follows from proposition~\ref{combw_depth_generic_bound_prop} applied to all edges $\lset ki\finerarr\lset{k-1}j$ and corollary~\ref{ensemble_union_part_corollary} applied sequentially to consecutive unions of single-level partitions.
% TODO: in fact need more thorough proof -- with assessing associated tree depth
\end{proof}

\section{Ensembles and partitions on images}\label{image_section}
\subsection{Image, shifts and spans}

\defstyle{Image} of width $\w\in\mathbb N$ and height $h\in\mathbb N$ is a set $\image\eqdef X\times Y = \curly{ p_{x,y}\eqdef(x,y)\mid x\in\X, y\in\Y}\subset\mathbb R^2$ of $\card \image = \w \cdot h$ elements $p_{x,y}$ called \defstyle{pixels}, where $\X\eqdef\wrange, \;\Y\eqdef\curly{0,1,...,\hh}$. We consider numbers $w$ and $h$ fixed. Image subsets are also called \defstyle{patterns}.
Here and further \defstyle{projection} is a function $\pi\colon\image\to\X, \;\pi(p_{x,y}) \eqdef x$. The finest image partition $\ens I$ of course consists of pixel-singletons: $\ens I\eqdef\curly{\curly{p}\mid p\in\image}$.

Additive commutative group $\mathbb Z$ acts on $\image$ by (vertical) shifts according to the rule $p_{x,y} + s \eqdef p_{x, \modfunc{y+s}}$ for \defstyle{shift} $s\in\mathbb Z$. This action also induces action on patterns: $$P+s \eqdef \curly{p+s\mid p\in P},$$ supposing $\varnothing + s = \varnothing$, and functions $F\colon X\to\image$:
$$
(F+s)(x)\eqdef F(x)+s.
$$
Projection function $\pi$ is obviously shift-invariant: for any $P$ and $s\in\mathbb Z$, $\pi(P+s) = \pi(P).$ For both cases we denote shift \defstyle{orbit} as $$\orbit z \eqdef \curly{z + s\mid s\in\mathbb Z} = \curly{z + s\mid s\in Y}.$$ Shift \defstyle{span} of ensemble $\ens A$ on $\image$ is the set of $A\in\ens A$ patterns shifted to all possible positions (i.e. the union of pattern orbits):
$$
\shiftspan{\ens A}\eqdef\bigcup\limits_{A\in\ens A} \orbit{A} = \curly{A+s\mid A\in\ens A, \;s\in\Y}.
$$
Obviously, $\shiftspan{\curly A} = \orbit A$ for any pattern $A$.

\begin{remark}
As any group action, shifts on $\image$ and $2^{\image}$ are bijections, hence injections and proposition~\ref{injection_prop} statements also hold for shifts (with $f(x) = x + s$).
\end{remark}

\subsection{Function equality sets}

\begin{definition}
For any two functions $f,g\colon X\to Y$ and $n\in\mathbb Z$, $n$-equality set $\eqset fgn$ is defined as
$$
\eqset fgn\eqdef\curly{x\in X\mid \grf g(x) = \grf f(x) + n} = \curly{x\in X\mid\modfunc{g(x)-f(x)-n} = 0}.
$$
$f,g$-equality partition is ensemble 
$$
\eqsetens fg\eqdef \curly{\eqset fgn\mid n\in\dind fg}
$$
with the $f,g$-equality index set
$$
\dind fg\eqdef \curly{n\in Y\mid \eqset fgn \neq \varnothing}.
$$
\end{definition}
This definition can be visualized as cutting graph $\grf g(X)$ by consequent slices $\grf f(X), \; \grf f(X)+1, \;\grf f(X)+2, ...$ and taking projections. By projection shift-invariance property, $\eqset {f+s}{g+s}n=\eqset fgn$. The $\eqsetens fg$ definition is consistent by the following
\begin{proposition}\label{eqsetens_is_partition_prop}
$\eqsetens fg$ is always an $\X$-partition, i.e.
$$
\X = \!\!\!\!\bigsqcup\limits_{n\in\dind fg}\!\!\!\!\eqset fgn.
$$
\end{proposition}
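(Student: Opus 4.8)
The plan is to show that the sets $\eqset fgn$ for $n\in\dind fg$ are pairwise disjoint and cover $\X$. Fix $x\in\X$. The key observation is that the single quantity $m(x)\eqdef\modfunc{g(x)-f(x)}$ is a well-defined element of $Y=\curly{0,1,\dots,\hh}$, since $\modfunc{\cdot}$ always returns a value in $\curly{0,\dots,\hh}$. I claim $x\in\eqset fg{m(x)}$ and that this is the \emph{only} value of $n\in Y$ for which $x\in\eqset fgn$. The membership $x\in\eqset fg{m(x)}$ is immediate from the second description of the equality set: $\modfunc{g(x)-f(x)-m(x)}=\modfunc{0}=0$ because $g(x)-f(x)-m(x)$ is a multiple of $h$ by the definition of $m(x)$. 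For uniqueness, suppose $x\in\eqset fgn$ and $x\in\eqset fg{n'}$ with $n,n'\in Y$; then $h\mid (g(x)-f(x)-n)$ and $h\mid(g(x)-f(x)-n')$, hence $h\mid (n-n')$, and since $n,n'\in\curly{0,\dots,\hh}$ their difference lies strictly between $-h$ and $h$, forcing $n=n'$.

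From this, everything follows. First, $x$ belongs to at least one $\eqset fgn$ with $n\in Y$, namely $n=m(x)$; since $\eqset fg{m(x)}\neq\varnothing$ we have $m(x)\in\dind fg$ by definition of the index set, so the sets $\curly{\eqset fgn\mid n\in\dind fg}$ do cover $\X$. Second, the uniqueness statement shows that distinct $n,n'\in\dind fg\subseteq Y$ give disjoint sets $\eqset fgn\cap\eqset fg{n'}=\varnothing$. Together these give $\X=\bigsqcup_{n\in\dind fg}\eqset fgn$, which is exactly the assertion; note also that $\eqsetens fg$ is a legitimate ensemble since by construction each $\eqset fgn$ with $n\in\dind fg$ is non-empty, and $\X\neq\varnothing$ guarantees the collection itself is non-empty.

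There is essentially no obstacle here — the statement is a bookkeeping consequence of the fact that two integers congruent modulo $h$ and both lying in a window of length $h$ must be equal. The only point requiring a little care is to make sure the equality set is indexed over $Y$ rather than all of $\mathbb Z$: the definition restricts $n$ to $Y$ in $\dind fg$, and the reduction $\modfunc{g(x)-f(x)}$ lands in $Y$ precisely, so the representative $m(x)$ is available within the allowed index range. I would present the argument in the two-description form given in the definition, using the congruence description for the disjointness/uniqueness half and either description for the covering half.
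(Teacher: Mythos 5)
Your proof is correct and follows essentially the same route as the paper's: covering via the representative $n=\modfunc{g(x)-f(x)}\in Y$, and disjointness from the fact that two elements of $Y$ congruent modulo $h$ coincide (the paper phrases this as $\grf f(x)+n_1=\grf f(x)+n_2$ forcing $n_1=n_2$, which is the same congruence argument in the shift-action language). The extra bookkeeping about non-emptiness of the patterns and of the collection is fine but not needed beyond what the definitions already guarantee.
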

\begin{proof}
For any $x\in X \;\;\grf g(x) = \grf f(x) + n$ with $n=\modfunc{g(x)-f(x)}$, so by definition $x\in\eqset fgn$ implying $\supp \eqsetens fg=X$. Suppose now $x\in\eqset fg{n_1}\cap\eqset fg{n_2}, \; n_1, n_2\in Y$. Then by definition, $\grf f(x)+n_1 = \grf f(x)+n_2$ which is possible only when $n_1=n_2$ if both $n_1, n_2\in Y$.
\end{proof}

\begin{proposition}\label{lfunc_restriction_prop}
Restrictions $g\restriction_B=(f+n)\restriction_B$ for any $B\subseteq\eqset fgn$.
\end{proposition}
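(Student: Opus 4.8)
The plan is to read the claim off directly from the definition of the equality set $\eqset fgn$. First I would fix an arbitrary $x\in B$; since $B\subseteq\eqset fgn$, the defining condition gives $\grf g(x)=\grf f(x)+n$ as points of $\image$, equivalently $\modfunc{g(x)-f(x)-n}=0$.

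Then I would extract the $\Y$-coordinate. Writing $\grf g(x)=(x,g(x))$ and $\grf f(x)+n=(x,f(x))+n=p_{x,\modfunc{f(x)+n}}$, equality of these two pixels forces $g(x)=\modfunc{f(x)+n}$, which is exactly $(f+n)(x)$, the pointwise shift of the $\Y$-values of $f$. Since $x\in B$ was arbitrary, this yields $g\restriction_B=(f+n)\restriction_B$, and the same computation simultaneously gives $\grf g\restriction_B=(\grf f+n)\restriction_B$.

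There is essentially no obstacle here: the only point worth a word is that the congruence $g(x)\equiv f(x)+n\pmod h$ upgrades to a genuine equality because $g$ takes values in the canonical residue range $\Y=\curly{0,1,...,\hh}$, so $g(x)$ coincides with $\modfunc{f(x)+n}$ on the nose rather than merely modulo $h$. The statement is thus just the membership condition $x\in\eqset fgn$ rephrased uniformly over all of $B$.
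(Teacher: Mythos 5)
Your proposal is correct and matches the paper's proof, which simply states that the claim follows immediately from the definition of $\eqset fgn$; you have merely spelled out the one-line computation $\grf g(x)=\grf f(x)+n \Rightarrow g(x)=\modfunc{f(x)+n}=(f+n)(x)$ for each $x\in B$. The remark that the congruence upgrades to a genuine equality because $g$ takes values in the canonical range $\Y$ is a correct and worthwhile clarification, but it does not change the argument.
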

\begin{proof}
Immediately follows from $\eqset fgn$ definition.
\end{proof}

\begin{proposition}
For any set $A\subseteq X$
$$g(A) = \!\!\!\!\!\bigsqcup\limits_{n\in\dind fg}\!\!\!f(A\cap\eqset fgn) + n.$$
\end{proposition}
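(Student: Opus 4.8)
The plan is to decompose the domain $X$ using the equality partition $\eqsetens fg$ from Proposition~\ref{eqsetens_is_partition_prop}, push it through $g$, and then convert $g$ restricted to each piece into a shifted copy of $f$ via Proposition~\ref{lfunc_restriction_prop}. The starting point is the disjoint decomposition $X = \bigsqcup_{n\in\dind fg}\eqset fgn$, which holds by Proposition~\ref{eqsetens_is_partition_prop}. Intersecting with an arbitrary $A\subseteq X$ gives $A = \bigsqcup_{n\in\dind fg}(A\cap\eqset fgn)$, still a disjoint union.

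First I would apply $g$ to both sides. Since $g$ need not be injective on all of $X$, I cannot blindly use Proposition~\ref{injection_prop} to preserve disjointness; instead I observe that on the piece $A\cap\eqset fgn$ we have $g = (f+n)$ by Proposition~\ref{lfunc_restriction_prop}, so $g(A\cap\eqset fgn) = (f+n)(A\cap\eqset fgn) = f(A\cap\eqset fgn)+n$. This immediately yields the right-hand side of the claimed identity as a (a priori not obviously disjoint) union: $g(A) = \bigcup_{n\in\dind fg} \bigl(f(A\cap\eqset fgn)+n\bigr)$, because image of a union is the union of images for any function.

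The remaining point is to justify that this union is in fact \emph{disjoint}, i.e. that the $\sqcup$ symbol is warranted. Suppose a value $y$ lies in both $f(A\cap\eqset fg{n_1})+n_1$ and $f(A\cap\eqset fg{n_2})+n_2$ with $n_1,n_2\in\dind fg\subseteq Y$. Then there are $x_1\in A\cap\eqset fg{n_1}$ and $x_2\in A\cap\eqset fg{n_2}$ with $\modfunc{f(x_1)+n_1} = y = \modfunc{f(x_2)+n_2}$; but by definition of the equality sets this says $\grf g(x_1)$ and $\grf g(x_2)$ both have second coordinate $y$, and more importantly $g(x_i) = \modfunc{f(x_i)+n_i}$ identifies the shift class: since $x_i\in\eqset fg{n_i}$ we get $n_i = \modfunc{g(x_i)-f(x_i)}$, so it suffices to rule out collisions. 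Actually the clean way is: a pixel of $g(A)$ comes from some $x\in A$, and $x$ lies in exactly one block $\eqset fgn$ (Proposition~\ref{eqsetens_is_partition_prop}); thus the contribution of $x$ to $g(A)$, namely the pixel $g(x) = f(x)+n$, lands in the $n$-th term and no other. Hence every element of $g(A)$ is accounted for exactly once across the terms, giving disjointness.

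\textbf{Main obstacle.} The only subtlety — and the step I would be most careful about — is the disjointness of the union on the right: $f$ and $g$ can each be non-injective, so one must argue via the partition structure of $\eqsetens fg$ (each $x$ in a unique block) rather than via injectivity of $f$ or $g$, and one must use that the indices $n$ range over $Y$ so that $\modfunc{\cdot}$ does not wrap two distinct indices onto the same shift. Everything else is a direct substitution using Propositions~\ref{eqsetens_is_partition_prop} and~\ref{lfunc_restriction_prop}.
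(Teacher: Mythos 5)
Your overall route is exactly the paper's: decompose $X$ (hence $A$) by the equality partition $\eqsetens fg$ of Proposition~\ref{eqsetens_is_partition_prop}, apply $g$, and rewrite each piece via Proposition~\ref{lfunc_restriction_prop}; the paper's proof is literally the chain $g(A)=g\bigl(A\cap\bigsqcup_n\eqset fgn\bigr)=\bigsqcup_n g(A\cap\eqset fgn)=\bigsqcup_n\bigl(f(A\cap\eqset fgn)+n\bigr)$. So there is no difference in approach, and your substitution steps are fine.

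The one step where you go beyond the paper --- justifying the $\sqcup$ --- is also the one step where your argument does not quite hold together as written. Your ``clean way'' (each $x$ lies in exactly one block, so its contribution lands in exactly one term, hence the terms are disjoint) proves a statement about preimages, not about images: two distinct $x_1,x_2$ in \emph{different} blocks could still produce the same value, which would then lie in two terms even though each $x_i$ feeds only one. And if one insists on reading $f,g$ as $Y$-valued maps, this is not a repairable gap but an actual failure: for $g\equiv 0$ and $f(x)=\modfunc{x}$ with $w\geqslant h\geqslant 2$, every nonempty term $f(A\cap\eqset fgn)+n$ equals $\curly{0}$, so the union is not disjoint. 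The statement is meant at the level of graphs --- the shift $+n$ is only defined on $\image$ and on functions $X\to\image$, and the corollary that follows has $\grf g(X)$ on its left-hand side --- so $f$ and $g$ here stand for the injective embeddings $\grf f,\grf g\colon X\to\image$. Under that reading disjointness is immediate from Proposition~\ref{injection_prop} (injections carry disjoint unions to disjoint unions), and your ``accounted for exactly once'' phrasing becomes legitimate precisely because a pixel $(x,g(x))$ determines $x$. In short: correct in substance and identical in strategy to the paper, but the disjointness should be credited to the injectivity of $\grf g$, not to the uniqueness of the block containing each $x$.
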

\begin{proof}
Follows from decomposition
$g(A) = g(A\cap\X) = g(A\cap\bigsqcup\limits_{n\in\dinds}\eqset fgn) = \bigsqcup\limits_{n\in\dinds}g(A\cap\eqset fgn)$ by proposition~\ref{lfunc_restriction_prop}, $\dinds=\dind fg$.
\end{proof}
\begin{corollary}
$\grf g(X) = \bigsqcup\limits_{n\in\dind fg} (f(\eqset fgn) + n).$
\end{corollary}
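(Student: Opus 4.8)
The plan is to recognize the corollary as the $A=\X$ case of the proposition just stated, interpreting $f$ and $g$ throughout via their graph embeddings $\grf f,\grf g$, which is what makes both sides of the identity live in $\image$ (the hat already present on the left-hand side of the corollary signals this). Since every equality set satisfies $\eqset fgn\subseteq\X$, one has $\X\cap\eqset fgn=\eqset fgn$ for each $n\in\dind fg$, so substituting $A=\X$ into the proposition's formula gives the claim at once.

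I would also record the direct argument, which is short and makes the disjointness of the right-hand side transparent. First, proposition~\ref{eqsetens_is_partition_prop} furnishes the partition $\X=\bigsqcup_{n\in\dind fg}\eqset fgn$. Next, push this through the graph embedding $\grf g\colon\X\to\image$: being injective it carries disjoint unions to disjoint unions by proposition~\ref{injection_prop}, so $\grf g(\X)=\bigsqcup_{n\in\dind fg}\grf g(\eqset fgn)$. Finally, for a fixed $n$ the very definition of $\eqset fgn$ says $\grf g(x)=\grf f(x)+n$ for every $x\in\eqset fgn$, and since the shift $z\mapsto z+n$ is a bijection of $\image$ (the remark on shifts being bijections) it commutes with forming images, whence $\grf g(\eqset fgn)=\grf f(\eqset fgn)+n$ as subsets of $\image$. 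Substituting into the previous display yields $\grf g(\X)=\bigsqcup_{n\in\dind fg}(\grf f(\eqset fgn)+n)$, and the summands are pairwise disjoint because the $\grf g(\eqset fgn)$ are.

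There is no real obstacle here: the statement is exactly the partition property of $\eqsetens fg$ transported along the injective map $\grf g$, together with the fact that the injective shift $+n$ turns $\grf g(\eqset fgn)$ into $\grf f(\eqset fgn)+n$. The only thing demanding any care is the standard overloading of notation among a function, its induced map on subsets, and its graph embedding; once that is pinned down, every line above is a one-step consequence of propositions~\ref{injection_prop} and~\ref{eqsetens_is_partition_prop} and the bijectivity of shifts.
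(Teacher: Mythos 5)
Your proposal is correct and matches the paper's route: the corollary is simply the $A=X$ instance of the preceding proposition, and your ``direct argument'' is precisely that proposition's own proof specialized to $A=X$ (partition $X$ into the equality sets via proposition~\ref{eqsetens_is_partition_prop}, push through the injective graph embedding, and replace $\grf g$ by $\grf f+n$ on each piece using proposition~\ref{lfunc_restriction_prop}). Your observation that $f$ on the right-hand side must be read as the graph embedding $\grf f$ for the shift $+n$ and the disjointness to make sense is a fair point about the paper's notational overloading, not a gap.
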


\subsection{Partition spans}

We will now investigate how span operation interacts with domain $X$-partitions via functions $X\to Y$. The first obvious property allows to ``span'' $\image$-partitions from $X$-partitions using a graph function:
\begin{proposition}\label{single_span_func_part_prop}
If $f\colon\X\to\Y$ and $\ens A$ is an $\X$-partition, then 
\begin{enumerate}
\item $\shiftspan{\grf f(\ens A)}$ is an $\image$-partition.
\item $\pi(\shiftspan{\grf f(\ens A)}) = \ens A$.
\item $\card{\shiftspan{\grf f(\ens A)}} = h\cdot\card{\ens A}$.
\end{enumerate}
$\shiftspan{\grf f(\ens A)}$ is a partition span or a span partition and $\grf f$ is a spanning function.
\end{proposition}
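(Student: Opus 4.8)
The plan is to verify each of the three claims directly from the definitions, using the fact (recorded in the remark just above the definition of equality sets) that vertical shifts act as injections on $\image$ and on $2^{\image}$, so proposition~\ref{injection_prop} and proposition~\ref{ensemble_graph_prop} apply. Throughout, fix $f\colon\X\to\Y$ and an $\X$-partition $\ens A$, and write $G_A\eqdef\grf f(A)$ for $A\in\ens A$; since $\grf f$ is an injective embedding, proposition~\ref{ensemble_graph_prop}(2) already tells us $\grf f(\ens A)$ is a $\grf f(\X)$-partition, and by shift-invariance of $\pi$ we have $\pi(G_A+s)=\pi(G_A)=A$ for every $s$.

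First I would prove (1). Every orbit element $A+s$ with $A\in\ens A$, $s\in\Y$ is non-empty since $A\neq\varnothing$, so $\shiftspan{\grf f(\ens A)}$ is a legitimate ensemble. For the covering property, take any pixel $p_{x,y}\in\image$. Because $\ens A$ is an $\X$-partition there is a unique $A\in\ens A$ with $x\in A$, hence $\grf f(x)=p_{x,f(x)}\in G_A$, and choosing $s\eqdef\modfunc{y-f(x)}$ gives $p_{x,y}=p_{x,f(x)}+s\in G_A+s$. So $\image=\bigcup_{A,s}(G_A+s)$. For disjointness, suppose $p\in (G_A+s)\cap(G_{A'}+s')$. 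Applying $\pi$ and using shift-invariance gives $A=\pi(G_A+s)=\pi(G_{A'}+s')=A'$, so $A=A'$; then $p-s$ and $p-s'$ both lie in $G_A=\grf f(A)$, and since $\grf f$ is the graph of a function, two of its points with the same $x$-coordinate coincide, forcing $\modfunc{s}=\modfunc{s'}$, i.e. the same orbit representative. Hence the union is disjoint and $\shiftspan{\grf f(\ens A)}$ is an $\image$-partition.

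Claim (2) is then almost immediate: $\pi(\shiftspan{\grf f(\ens A)})=\{\pi(G_A+s)\mid A\in\ens A, s\in\Y\}=\{\pi(G_A)\mid A\in\ens A\}=\{A\mid A\in\ens A\}=\ens A$, the middle equality again being $\pi$'s shift-invariance and $\pi(\grf f(A))=A$. For claim (3), note that for a fixed $A$ the orbit $\orbit{G_A}=\{G_A+s\mid s\in\Y\}$ has exactly $h$ distinct elements: the map $s\mapsto G_A+s$ on $\Y$ is injective because, picking any $x\in A$, the $x$-column of $G_A+s$ is the single pixel $p_{x,\modfunc{f(x)+s}}$, and the $h$ values $\modfunc{f(x)+s}$ for $s\in\Y$ are all distinct. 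Moreover orbits of distinct $A,A'\in\ens A$ are disjoint, since any common element would project to both $A$ and $A'$ by the argument in (1); so $\shiftspan{\grf f(\ens A)}=\bigsqcup_{A\in\ens A}\orbit{G_A}$ and $\card{\shiftspan{\grf f(\ens A)}}=\sum_{A\in\ens A}h=h\cdot\card{\ens A}$.

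I do not expect a genuine obstacle here — every step is a routine unwinding of the $\modfunc{\cdot}$ shift action plus the graph-of-a-function observation that two points of $\grf f(\X)$ sharing an $x$-coordinate are equal. The one place to be a little careful is the disjointness in (1) and the injectivity count in (3): both hinge on the fact that distinct orbits never meet, which must be argued via $\pi$ rather than taken for granted, and on restricting shift representatives to $s\in\Y$ (equivalently working modulo $h$) so that "same orbit" really does mean "equal as a set." Once those two points are made explicit, the three assertions follow.
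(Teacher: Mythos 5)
Your proof is correct, but it takes a different (transposed) decomposition from the paper's. You organize the span by \emph{orbits}: for each fixed $A\in\ens A$ you show the $h$ shifts $\grf f(A)+s$, $s\in\Y$, are pairwise distinct, and that orbits of distinct patterns are disjoint, verifying covering and disjointness pixel by pixel and counting $h\cdot\card{\ens A}$ directly. The paper instead organizes by \emph{slices}: for each fixed $s\in\Y$ it considers $\ens Q_s=\grf f(\ens A)+s$, which by proposition~\ref{ensemble_graph_prop} is a partition of its support $\grf f(\X)+s$; these $h$ supports are pairwise disjoint and tile $\image$ (each column of the image meets each slice in exactly one pixel), so statements~1 and~3 drop out of proposition~\ref{partition_union_prop} and statement~2 from shift-invariance of $\pi$. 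The paper's route is shorter because it delegates all the set-theoretic bookkeeping to the already-proved machinery about injections and unions of partitions on disjoint domains; your route is more self-contained and makes explicit the two points the paper leaves implicit, namely that distinct orbits never meet (argued via $\pi$) and that shift representatives must be reduced modulo $h$ before comparing patterns as sets. Both arguments are sound; the only stylistic remark is that your hands-on disjointness argument essentially re-proves the special case of proposition~\ref{injection_prop} that the paper simply cites.
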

\begin{proof} 
By proposition~\ref{ensemble_graph_prop}, \defstyle{slices} $\ens Q_s = \grf f(\ens A)+s, \; s\in Y$ are partitions in $Q_s = \supp \ens Q_s = f(X) + s$. Supports $Q_s$ obviously do not intersect pairwise, so $\bigsqcup\limits_{s\in Y}Q_s = \image$ and statement~1 and 3 follow from proposition~\ref{partition_union_prop}.
Statement~2 follows from $\pi(\ens Q_0) = \pi(\grf f(\ens A)) = \ens A$, shift-invariance of projection and $\ens Q = \bigcup\limits_{s\in Y}\ens Q_s$.
\end{proof}

\begin{proposition}\label{finer_span_part_prop}
If $f\colon\X\to\Y$ and $\X$-partitions $\ens A\finer\ens B$, then $\shiftspan{\grf f(\ens A)}\finer\shiftspan{\grf f(\ens B)}$ and
\begin{enumerate}
\item Decomposition structure on every slice is the same as in $\ens A\finer\ens B$, i.e. $B=\bigsqcup\limits_{A\in\alpha}A$ corresponds to $\grf f(B) + s = \bigsqcup\limits_{A\in\alpha}\grf f(A) + s$ for any shift $s$, $\alpha=\alpha(B)\subseteq\ens A$ is the unique $B$ decomposition from proposition~\ref{finer_partition_decomp_prop}.
\item $\combw{\shiftspan{\grf f(\ens A)}}{\shiftspan{\grf f(\ens B)}} = h\cdot(\card{\ens A} - \card{\ens B}).$
\item $\depth_{\shiftspan{\grf f(\ens A)}}(\shiftspan{\grf f(\ens B)}) = \depth_{\ens A}(\ens B).$
\end{enumerate}
\end{proposition}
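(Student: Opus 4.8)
The plan is to deduce all three claims from two tools already available: injections preserve the whole refinement/combination structure together with weights and depths (proposition~\ref{ensemble_graph_prop}), and disjoint unions of partitions over pairwise non-intersecting domains are additive in weight and take the maximum in depth (proposition~\ref{partition_union_prop}). So the statement should reduce to a ``slice by slice, then glue'' argument and will require essentially no new computation.

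First I would fix a shift $s\in\Y$ and look at the composite map $g_s\colon\X\to\image$, $g_s(x)\eqdef(x,\modfunc{f(x)+s})$. Its first coordinate is $x$, so $g_s$ is an injection, and by construction $\grf f(\ens A)+s=g_s(\ens A)$ and $\grf f(\ens B)+s=g_s(\ens B)$. Proposition~\ref{ensemble_graph_prop} then applies to $g_s$ directly: from $\ens A\finer\ens B$ we get $g_s(\ens A)\finer g_s(\ens B)$ with $\combw{g_s(\ens A)}{g_s(\ens B)}=\combw{\ens A}{\ens B}$, $\depth_{g_s(\ens A)}(g_s(\ens B))=\depth_{\ens A}(\ens B)$, and the decomposition structure transported, i.e. $B=\bigsqcup_{A\in\alpha}A$ iff $\grf f(B)+s=\bigsqcup_{A\in\alpha}(\grf f(A)+s)$. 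Applied to the unique decomposition $\alpha=\alpha(B)$ of proposition~\ref{finer_partition_decomp_prop}, this is exactly statement~1.

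Next I would glue the slices. As noted in the proof of proposition~\ref{single_span_func_part_prop}, the supports $\grf f(\X)+s$, $s\in\Y$, are pairwise non-intersecting with union $\image$, and $\shiftspan{\grf f(\ens A)}=\bigsqcup_{s\in\Y}g_s(\ens A)$, $\shiftspan{\grf f(\ens B)}=\bigsqcup_{s\in\Y}g_s(\ens B)$. Proposition~\ref{partition_union_prop} then yields at once: $\shiftspan{\grf f(\ens A)}\finer\shiftspan{\grf f(\ens B)}$; the weight equals $\sum_{s\in\Y}\combw{g_s(\ens A)}{g_s(\ens B)}=h\cdot\combw{\ens A}{\ens B}$, which is $h(\card{\ens A}-\card{\ens B})$ by proposition~\ref{refinement_comb_prop}, giving statement~2; and the depth equals $\max_{s\in\Y}\depth_{g_s(\ens A)}(g_s(\ens B))=\depth_{\ens A}(\ens B)$, giving statement~3.

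I do not expect a real obstacle, since this is a repackaging of earlier results; the only point needing a moment of care is the implicit uniqueness in statement~1 — that the $\alpha$ produced above is genuinely \emph{the} decomposition of $\grf f(B)+s$ inside $\shiftspan{\grf f(\ens A)}$, i.e. no pattern $\grf f(A')+s'$ from another slice ($s'\neq s$) or with $A'\not\subseteq B$ can fit inside $\grf f(B)+s$. This follows either from the uniqueness of decompositions in a partition (propositions~\ref{single_span_func_part_prop} and~\ref{finer_partition_decomp_prop}) or, directly, from applying the shift-invariant projection $\pi$ and comparing second coordinates.
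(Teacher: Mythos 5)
Your proof is correct and takes essentially the same route as the paper: the paper likewise decomposes each shifted pattern $\grf f(B)+s$ slice by slice using injectivity of $\grf f$ (and of the shift) and obtains the weight and depth equalities from propositions~\ref{refinement_comb_prop}, \ref{single_span_func_part_prop} and~\ref{partition_union_prop}. Your version merely packages the per-slice step more explicitly as proposition~\ref{ensemble_graph_prop} applied to the injection $g_s$, which is a harmless reorganization of the same argument.
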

\begin{proof}
The fact that $\shiftspan{\grf f(\ens B)}$ is an $X$-partition and the computation weight equality follow from propositions~\ref{refinement_comb_prop} and~\ref{single_span_func_part_prop}. 
Suppose $P\in\shiftspan{\grf f(\ens B)}$, so $P = \grf f(B) + s$ for some $B\in\ens B$ and shift $s$. Since $\ens A\finer\ens B$, we decompose $B = \bigsqcup\limits_{A\in\alpha}A$, so $P = \grf f(\bigsqcup\limits_{A\in\alpha}A) + s = \bigsqcup\limits_{A\in\alpha}(\grf f(A) + s)$, where each component is an element of $\shiftspan{\grf f(\ens A)}$. This proves $\shiftspan{\grf f(\ens A)}\finer\shiftspan{\grf f(\ens B)}$ and provides decomposition structure on slices.
\end{proof}

\begin{proposition}
For any shifts $s_1, s_2$ and sets $A, B\subseteq X$
$$
(\grf f+s_1)(A)\cap(\grf g+s_2)(B) = (\grf f+s_1)(C) = (\grf g+s_2)(C),
$$
where $C = A\cap B\cap\eqset{f}{g}{s_1-s_2}$.
\end{proposition}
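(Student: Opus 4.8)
The plan is to reduce the statement to Proposition~\ref{common_proj_prop} by recognizing each \emph{shifted} graph embedding as an ordinary graph embedding of a suitable $Y$-valued function. The point is that shifting lands back in $\image$ precisely because the second coordinate is reduced modulo $h$, so $\grf f+s_1$ is itself a graph map, and the general ``common projection'' lemma then applies verbatim; all that is left afterwards is to match the two descriptions of the relevant function-equality set.

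Concretely, first I would define $f_1\colon X\to Y$ by $f_1(x)\eqdef\modfunc{f(x)+s_1}$ and $g_1\colon X\to Y$ by $g_1(x)\eqdef\modfunc{g(x)+s_2}$, and observe that $(\grf f+s_1)(x)=\grf f(x)+s_1=p_{x,\,\modfunc{f(x)+s_1}}=\grf{f_1}(x)$, so $\grf f+s_1=\grf{f_1}$ as maps $X\to\image$ (hence also on patterns, since power-set extensions of equal maps agree), and likewise $\grf g+s_2=\grf{g_1}$. Second, I would apply Proposition~\ref{common_proj_prop} to $f_1,g_1\colon X\to Y$ and the sets $A,B$: writing $G\eqdef\grf{f_1}(A)\cap\grf{g_1}(B)$, it yields $G=\grf{f_1}(\pi(G))=\grf{g_1}(\pi(G))$ and $\pi(G)=A\cap B\cap X^{f_1,g_1}$ with $X^{f_1,g_1}=\curly{x\in X\mid f_1(x)=g_1(x)}$. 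Third, I would identify $X^{f_1,g_1}$ with $\eqset fg{s_1-s_2}$: the condition $f_1(x)=g_1(x)$ means $\modfunc{f(x)+s_1}=\modfunc{g(x)+s_2}$, i.e. $f(x)+s_1\equiv g(x)+s_2\pmod h$, i.e. $\modfunc{g(x)-f(x)-(s_1-s_2)}=0$, which is exactly $x\in\eqset fg{s_1-s_2}$. Hence $\pi(G)=A\cap B\cap\eqset fg{s_1-s_2}=C$, and substituting back gives $G=(\grf f+s_1)(C)=(\grf g+s_2)(C)$, as claimed.

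I do not expect any genuine obstacle here; the only things requiring care are (i) checking that the shifted graph really is the graph of a $Y$-valued function — this is what legitimizes applying Proposition~\ref{common_proj_prop} unchanged — and (ii) the routine modular bookkeeping in step three. As an alternative one can bypass the detour entirely and verify the set equality directly: any point of $(\grf f+s_1)(A)\cap(\grf g+s_2)(B)$ has, by shift-invariance of $\pi$, first coordinate $x\in A\cap B$ and second coordinate forcing $\modfunc{f(x)+s_1}=\modfunc{g(x)+s_2}$, i.e. $x\in C$; conversely every such point lies in both images and, by Proposition~\ref{lfunc_restriction_prop} applied on $C$, the two image descriptions coincide. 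The remark that shifts are injections keeps all the set-level manipulations sound (Proposition~\ref{injection_prop}).
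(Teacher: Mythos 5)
Your proof is correct and takes essentially the same route as the paper's: both reduce the claim to Proposition~\ref{common_proj_prop} after identifying $\eqset fg{s_1-s_2}=\curly{x\in X\mid (\grf f+s_1)(x)=(\grf g+s_2)(x)}$. Your only addition is the explicit check that $\grf f+s_1$ is itself the graph map of the function $x\mapsto\modfunc{f(x)+s_1}$ (which is what licenses applying that proposition verbatim) — a detail the paper leaves implicit.
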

\begin{proof}
It immediately follows from proposition~\ref{common_proj_prop} after noticing that $\eqset{f}{g}{s_1-s_2} = \curly{x\in X\mid (\grf f+s_1)(x)=(\grf g+s_2)(x)}$.
\end{proof}

\begin{corollary}\label{part_span_mix_corollary}
For any $X$-partitions $\ens A$ and $\ens B$ 
$$
\shiftspan{\grf f(\ens A)}\comref\shiftspan{\grf g(\ens B)} = \shiftspan{\grf f(\ens C)} = \shiftspan{\grf g(\ens C)},
$$
where $\ens C = \pi(\shiftspan{\grf f(\ens A)}\comref\shiftspan{\grf g(\ens B)}) = \ens A\comref\ens B\comref\eqsetens fg.$
\end{corollary}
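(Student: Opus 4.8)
The plan is to first establish the span identity $\shiftspan{\grf f(\ens A)}\comref\shiftspan{\grf g(\ens B)} = \shiftspan{\grf f(\ens D)}$, where $\ens D\eqdef\ens A\comref\ens B\comref\eqsetens fg$ stands in place of $\ens C$, and then recover $\ens C = \ens D$ from it. First I would note that $\ens D$ is a genuine $\X$-partition: $\eqsetens fg$ is one by proposition~\ref{eqsetens_is_partition_prop}, $\ens A$ and $\ens B$ are by hypothesis, and a common refinement of $\X$-partitions is again an $\X$-partition; its elements are precisely the non-empty sets $D = A\cap B\cap\eqset fgn$ with $A\in\ens A$, $B\in\ens B$, $n\in\dind fg$.

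Next I would unwind the definitions of $\comref$ and $\shiftspan{\cdot}$: the patterns of $\shiftspan{\grf f(\ens A)}\comref\shiftspan{\grf g(\ens B)}$ are exactly the non-empty sets $(\grf f(A)+s_1)\cap(\grf g(B)+s_2)$ with $A\in\ens A$, $B\in\ens B$, $s_1,s_2\in\Y$. Since $\grf f(A)+s_1 = (\grf f+s_1)(A)$ (and likewise for $g$), the proposition immediately preceding this corollary rewrites such an intersection as $\grf f(C_0)+s_1$, where $C_0 = A\cap B\cap\eqset fg{s_1-s_2}$. For the inclusion ``$\subseteq$'' I would observe that $\eqset fg{s_1-s_2} = \eqset fg{\modfunc{s_1-s_2}}$, that non-emptiness of the pattern forces $C_0\neq\varnothing$ and hence $\modfunc{s_1-s_2}\in\dind fg$, so that $C_0\in\ens D$ and the pattern lies in $\shiftspan{\grf f(\ens D)}$. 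For ``$\supseteq$'' I would, given $D = A\cap B\cap\eqset fgn\in\ens D$ and an arbitrary shift $t\in\Y$, set $s_2\eqdef\modfunc{t-n}$ so that $t-s_2\equiv n\pmod h$; the same proposition then gives $(\grf f(A)+t)\cap(\grf g(B)+s_2) = \grf f(D)+t$, which is non-empty and hence belongs to $\shiftspan{\grf f(\ens A)}\comref\shiftspan{\grf g(\ens B)}$, so that every pattern $\grf f(D)+t$ of $\shiftspan{\grf f(\ens D)}$ is reached.

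With the span identity in hand the corollary follows at once. Since $\ens D$ is an $\X$-partition, proposition~\ref{single_span_func_part_prop} gives $\ens C = \pi(\shiftspan{\grf f(\ens A)}\comref\shiftspan{\grf g(\ens B)}) = \pi(\shiftspan{\grf f(\ens D)}) = \ens D = \ens A\comref\ens B\comref\eqsetens fg$, hence $\shiftspan{\grf f(\ens A)}\comref\shiftspan{\grf g(\ens B)} = \shiftspan{\grf f(\ens C)}$. Finally $\shiftspan{\grf f(\ens C)} = \shiftspan{\grf g(\ens C)}$: each $D\in\ens D$ lies inside some $\eqset fgn$, so proposition~\ref{lfunc_restriction_prop} gives $\grf g(D) = \grf f(D)+n$ and therefore $\orbit{\grf g(D)} = \orbit{\grf f(D)}$, and taking the union over $D\in\ens D$ yields $\shiftspan{\grf g(\ens D)} = \shiftspan{\grf f(\ens D)}$ (this is also already contained in the cited proposition, which asserts $(\grf f+s_1)(C_0) = (\grf g+s_2)(C_0)$).

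The main obstacle --- really the only delicate point --- is the bookkeeping of shifts modulo $h$: one must check that an arbitrary integer difference $s_1-s_2$ of two slice shifts selects an honest member $\eqset fgn$ with $n\in\dind fg\subseteq\Y$ of the partition $\eqsetens fg$, and conversely that each pair consisting of a target shift $t$ and an index $n$ of a wanted pattern $\grf f(D)+t$ of $\shiftspan{\grf f(\ens D)}$ is realized by the choice $s_1 = t$, $s_2 = \modfunc{t-n}$. Everything else is a routine unwinding of the definitions of $\comref$ and $\shiftspan{\cdot}$ together with invocations of the proposition above and of propositions~\ref{eqsetens_is_partition_prop}, \ref{lfunc_restriction_prop} and~\ref{single_span_func_part_prop}.
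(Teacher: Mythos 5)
Your proof is correct and follows exactly the route the paper intends: the corollary is stated without proof as an immediate consequence of the preceding proposition on $(\grf f+s_1)(A)\cap(\grf g+s_2)(B)$, and you supply precisely that derivation, including the only nontrivial bookkeeping (reducing $s_1-s_2$ modulo $h$ into $\dind fg$ and realizing each index $n$ by the choice $s_2=\modfunc{t-n}$). The final identification $\ens C=\ens D$ via proposition~\ref{single_span_func_part_prop} and the equality $\shiftspan{\grf f(\ens C)}=\shiftspan{\grf g(\ens C)}$ via proposition~\ref{lfunc_restriction_prop} are both handled correctly.
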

This corollary says that the common refinement of two span-partitions is also a span-partition with its spanning function being either of the two spanning functions used. It also shows that to construct this common refinement one may perform a one-dimensional $\comref$ procedure with mixing in an additional component, equality set of the span functions. This is much easier than building a refinement directly in~$\image$. 

The next corollary helps visualize it as well as better understand the nature of $\eqsetens fg$:
\begin{corollary}
$$\orbit{\grf f(X)}\comref\orbit{\grf g(X)} = \shiftspan{\grf f(\eqsetens fg)} = \shiftspan{\grf g(\eqsetens fg)}$$
and
$$
\eqsetens fg = \pi(\orbit{\grf f(X)}\comref\orbit{\grf g(X)}).
$$
\end{corollary}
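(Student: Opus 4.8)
The plan is to obtain both identities as the special case $\ens A = \ens B = \curly{X}$ of Corollary~\ref{part_span_mix_corollary}, where $\curly{X}$ denotes the trivial (coarsest) $X$-partition whose only pattern is $X$ itself. This is legitimate because $X = \bigsqcup_{A\in\curly{X}}A$, so $\curly{X}$ really is an $X$-partition, and $f,g\colon X\to Y$ are exactly the spanning functions the corollary expects.

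First I would rewrite the two orbits on the left-hand side as span partitions of $\curly{X}$. By the set-level extension convention for the embedding $\grf f$ we have $\grf f(\curly{X}) = \curly{\grf f(X)}$, and combining this with the identity $\shiftspan{\curly A} = \orbit A$ (valid for any pattern $A$) gives $\shiftspan{\grf f(\curly{X})} = \orbit{\grf f(X)}$, and likewise $\shiftspan{\grf g(\curly{X})} = \orbit{\grf g(X)}$. Hence the left-hand side of the first claim is precisely $\shiftspan{\grf f(\curly{X})}\comref\shiftspan{\grf g(\curly{X})}$, a form to which Corollary~\ref{part_span_mix_corollary} applies verbatim.

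Next I would identify the partition $\ens C$ that the corollary produces. It is $\ens C = \curly{X}\comref\curly{X}\comref\eqsetens fg$; since $\curly{X}\comref\curly{X} = \curly{X}$ and every partition refines the coarsest one, so $\eqsetens fg \finer \curly{X}$, the rule $\ens P\finer\ens Q \Rightarrow \ens P\comref\ens Q = \ens P$ collapses this to $\ens C = \eqsetens fg$. Substituting back into Corollary~\ref{part_span_mix_corollary} gives $\orbit{\grf f(X)}\comref\orbit{\grf g(X)} = \shiftspan{\grf f(\eqsetens fg)} = \shiftspan{\grf g(\eqsetens fg)}$, which is the first statement. For the second statement I would simply take $\pi$ of this equality: Corollary~\ref{part_span_mix_corollary} already records $\ens C = \pi(\shiftspan{\grf f(\ens A)}\comref\shiftspan{\grf g(\ens B)})$, so in our case $\eqsetens fg = \pi(\orbit{\grf f(X)}\comref\orbit{\grf g(X)})$; alternatively one could invoke Proposition~\ref{single_span_func_part_prop}(2), namely $\pi(\shiftspan{\grf f(\ens A)}) = \ens A$, with $\ens A = \eqsetens fg$.

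I expect no genuine obstacle here, since Corollary~\ref{part_span_mix_corollary} does essentially all the work; this final corollary is really just the degenerate instance where the two $X$-partitions being spanned are trivial. The only points needing a moment of care are confirming that $\curly{X}$ is a valid $X$-partition and that the set-level extension of $\grf f$ sends $\curly{X}$ to $\curly{\grf f(X)}$ — both routine bookkeeping rather than a real difficulty.
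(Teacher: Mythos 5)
Your proof is correct and is exactly the route the paper intends: the corollary is stated without proof precisely because it is the specialization $\ens A=\ens B=\curly{X}$ of Corollary~\ref{part_span_mix_corollary}, with $\curly{X}\comref\curly{X}\comref\eqsetens fg=\eqsetens fg$ and $\shiftspan{\curly{\grf f(X)}}=\orbit{\grf f(X)}$ doing the bookkeeping you describe. Nothing is missing.
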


\section{The Hough ensemble}\label{hough_section}
\subsection{Definition and basic properties}

We now arrived to the primary target of our research -- the Hough patterns. Consider the following \defstyle{base functions} $f_e\colon X\to Y$ with the number $e\in E\eqdef\curly{0,1,2,...,\card E-1}$ called \defstyle{elevation}:
\begin{equation}\label{hough_cyclic_eq}
f_e(x) \eqdef \modfuncf{\round*{\frac e{w-1}x}}.
\end{equation}
The term ``elevation'' originates from the fact that without the modulo-operation the $f_e$ graphs would be {\slshape elevated} by $e$ pixels at the image border comparing to the origin, i.e. they would pass through points $(0,0)$ and $(w-1,e)$.

\begin{definition}
The Hough ensemble or the ensemble of discrete lines is the  partition-union ensemble $\Lens\eqdef \shiftspan{\grf f_e(\curly{X})\mid e\in E} = \bigcup\limits_{e\in E}\orbit{\grf f_e(X)}$.
\end{definition}
By proposition~\ref{single_span_func_part_prop}, every
$$
\lsets e\eqdef \orbit{\grf f_e(X)}
= \shiftspan{\grf f_e(\curly{X})}
$$
is an $\image$-partition of size $h$, consisting of parallel lines with the same elevation. We say that any pattern $\grf f_e(X)+s\in\lsets e$ is a \defstyle{line} with elevation $e$ and \defstyle{shift} $s$. If $\lsets e$ do not intersect pairwise, i.e. all line patterns with various elevations are distinct, then $$\card\Lens = \card E\cdot h.$$ This is always the case when $\card E\leqslant h$ since every value $f_e(w-1)$ would be unique. Anyway, in all cases $\card\Lens \leqslant \card E\cdot h$.

Our point of interest is to find a non-trivial bound for $\mu(\Lens)$. The trivial bound follows from $\ens I\combines\Lens$ by corollary~\ref{ensemble_union_part_corollary} and propositions~\ref{refinement_comb_prop}, \ref{single_span_func_part_prop}:
$$
\mu(\ens L)\leqslant \card E\cdot h\cdot(w-1).
$$
To get a better result we will use one nice property of the Hough patterns -- as elevations difference decreases, lines similarity grows. 

\begin{proposition}
For any two elevations $e_1, e_2$
$$
\card{\eqsetens{e_1}{e_2}}\leqslant |e_1-e_2+1|,
$$
here and further we for short use notations
$$
\eqsetens{e_1}{e_2}\eqdef \eqsetens{f_{e_1}}{f_{e_2}},
\;\;\;\;\;\;\;\eqset{e_1}{e_2}n\eqdef\eqset{f_{e_1}}{f_{e_2}}n.
$$
\end{proposition}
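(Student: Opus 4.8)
The plan is to bound $\card{\eqsetens{e_1}{e_2}}$ by the number of distinct values taken by the integer‑valued function $\varphi(x)\eqdef\round{\frac{e_2}{w-1}x}-\round{\frac{e_1}{w-1}x}$ on $X=\curly{0,1,\dots,w-1}$, and then to control that range by an elementary estimate of the rounding error.

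First I would record that $\card{\eqsetens{e_1}{e_2}}=\card{\dind{f_{e_1}}{f_{e_2}}}$: for every $n$ in the index set the pattern $\eqset{e_1}{e_2}{n}$ is non‑empty by definition, and by proposition~\ref{eqsetens_is_partition_prop} these patterns are pairwise disjoint, hence pairwise distinct, so $n\mapsto\eqset{e_1}{e_2}{n}$ is a bijection from $\dind{f_{e_1}}{f_{e_2}}$ onto $\eqsetens{e_1}{e_2}$. Unwinding the definitions, $\dind{f_{e_1}}{f_{e_2}}=\curly{\modfunc{f_{e_2}(x)-f_{e_1}(x)}\mid x\in X}$, and since $f_e(x)=\modfuncf{\round{\frac e{w-1}x}}$ one has $f_{e_2}(x)-f_{e_1}(x)\equiv\varphi(x)\pmod h$ with $\varphi(x)\in\mathbb Z$. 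Hence $\dind{f_{e_1}}{f_{e_2}}=\curly{\modfunc{\varphi(x)}\mid x\in X}$, and because passing to $\modfunc{\cdot}$ cannot increase the number of distinct values, $\card{\eqsetens{e_1}{e_2}}\leqslant\card{\varphi(X)}$.

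Next I would bound $\card{\varphi(X)}$. Writing $\round t=t+\varepsilon(t)$ with $\varepsilon(t)=\round t-t$, the formula $\round t=\floor{t+\frac12}$ gives $\varepsilon(t)\in(-\frac12,\frac12]$, so $\varepsilon(\frac{e_2}{w-1}x)-\varepsilon(\frac{e_1}{w-1}x)$ lies in the \emph{open} interval $(-1,1)$. Thus $\varphi(x)=\frac{e_2-e_1}{w-1}\,x+\bigl(\varepsilon(\frac{e_2}{w-1}x)-\varepsilon(\frac{e_1}{w-1}x)\bigr)$, and for $x\in\curly{0,1,\dots,w-1}$ the term $\frac{e_2-e_1}{w-1}x$ stays in the closed interval with endpoints $0$ (attained at $x=0$) and $e_2-e_1$ (attained at $x=w-1$). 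Consequently every integer $\varphi(x)$ lies strictly between $\min(0,e_2-e_1)-1$ and $\max(0,e_2-e_1)+1$, i.e.\ $\varphi(X)\subseteq\curly{\min(0,e_2-e_1),\dots,\max(0,e_2-e_1)}$, a set of $|e_1-e_2|+1$ integers. Combining this with the previous step yields $\card{\eqsetens{e_1}{e_2}}\leqslant|e_1-e_2|+1$, which is the asserted bound (when $e_1\geqslant e_2$ the right‑hand side equals $|e_1-e_2+1|$, and $\card{\eqsetens{e_1}{e_2}}$ is symmetric in $e_1,e_2$ since $n\mapsto\modfunc{-n}$ permutes $Y$).

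The argument is almost entirely routine; the one place deserving care is the rounding bookkeeping — checking $\varepsilon(t)\in(-\frac12,\frac12]$, which is exactly where the convention $\round t=\floor{t+\frac12}$ is used, so that a difference of two such errors is confined to the open interval $(-1,1)$; this is precisely what pins the integer $\varphi(x)$ to an interval of length $|e_1-e_2|$ rather than $|e_1-e_2|+1$. I would also note that the proof uses nothing about the elevations beyond $e_1,e_2\in\mathbb Z$ (no bound such as $e\leqslant w-1$ is needed), and that the endpoint evaluations $\varphi(0)=0$ and $\varphi(w-1)=e_2-e_1$ already realize the extreme values of $\varphi$, so the estimate cannot be sharpened in general.
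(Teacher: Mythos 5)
Your proof is correct and follows essentially the same route as the paper's: both arguments reduce the claim to showing that the integer $\round{\frac{e_2}{w-1}x}-\round{\frac{e_1}{w-1}x}$ ranges over at most $|e_2-e_1|+1$ values as $x$ runs over $X$ — the paper via monotonicity of $\round{\cdot}$ and $\round{t+s}=\round{t}+s$ after assuming $e_2>e_1$, you via the explicit rounding-error bound $\varepsilon(t)\in(-\tfrac12,\tfrac12]$. Your closing observation is also apt: what both proofs actually establish is the bound $|e_1-e_2|+1$, so the displayed right-hand side $|e_1-e_2+1|$ must be read as a typo for that.
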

\begin{proof}
Assume that $e_2>e_1$ (always $\card{\eqsetens fg}=\card{\eqsetens gf}$) and denote linear $X\to\mathbb R$ functions $g_e(x) = \frac e \ww x$, so $f_e(x) = \modfuncf{\round{g_e(x)}}$. One can easily see that for any $x\in\X$
$$
  g_{e_1}(x)\leqslant g_{e_2}(x) \leqslant g_{e_1}(x) + e_2 - e_1.
$$

Since $t\leqslant t'$ yields $\round{t} \leqslant \round{t'}$ and $\round{t + s} = \round t + s$ for $t, t'\in\mathbb R, s\in\mathbb Z$ we get
$$
  \round{g_{e_1}(x)} \leqslant \round{g_{e_2}(x)} \leqslant \round{g_{e_1}(x)} + e_2 - e_1.
$$
This implies that for any $x_0\in X, \;\;\round{g_{e_2}(x_0)} = [g_{e_1}(x_0)] + n(x_0)$ and, hence, $f_{e_2}(x_0) = \modfunc{f_{e_1}(x_0)+ n(x_0)}$ where $n(x_0)\in\curly{0, 1, ..., e_2-e_1}$, which by definition means that $x_0\in\eqset{e_1}{e_2}{n(x_0)}$. The proposition statement follows from proposition~\ref{eqsetens_is_partition_prop} and the fact that $n(x_0)$ takes $e_2-e_1+1$ values.
\end{proof}
\begin{corollary}
For any elevation $e>0$ the Hough ensemble has
\begin{equation}\label{hough_consequent_ineq}
\card{\eqsetens{e-1}e\leqslant 2}.
\end{equation}
\end{corollary}
This corollary is illustrated in fig.~\ref{eqsetens_fig}: elevation-5 line (dark circles) is contained in two consecutive elevation-4 lines (variously shaded cells), generating the two-element $X$-partition $\eqsetens45=\curly{\eqset450, \eqset451}$.
\begin{figure}[ht!]
\centering
\begin{tikzpicture}
[scale=0.3]
\def\w{15}
\def\e{4}

\pgfmathsetmacro{\q}{\w-1}
\pgfmathsetmacro{\h}{\e+3}

\draw[help lines,step=1] (0,0) grid (\w,\h-1);

\draw[help lines,xstep=1, ystep=0.5] (0, -1) grid (\w,-0.5);

\def\sqpix[#1](#2,#3) {
\fill[color=black,fill=#1] (#2,#3) ++(-0.4,-0.4) rectangle +(0.8,0.8);
}

\def\cipix[#1](#2,#3) {
\fill[color=#1] (#2,#3) circle (0.17);
}

%\ifodd\w
%  \draw[gray, thick] (\q/2,0) -- (\q/2,\h);
%  \draw[gray, thick] (\q/2+1,0) -- (\q/2+1,\h);
%\fi

\begin{scope}
[xshift=0.5cm, yshift=0.5cm]

\begin{scope}
[yshift=-1.5cm,xshift=1cm]
\draw (1,-1) node {$\eqsetens45:$};
\draw[gray!50,fill=gray!50,xshift=4cm] (-0.4,-1.5+0.1) rectangle (0.4,-1-0.1) node[black,right] {$\eqset450$};

\draw[gray,xshift=9cm] (-0.5,-1.5) rectangle (0.5,-1-0.1) node[black,right] {$\eqset451$};

%\draw (2,-3) node[text width=7cm] {Small array below: $\eqsetens 45$ partitions $X$ into two parts, white and shaded.};
\end{scope}

\begin{scope}
[yshift=-0.8cm]
\sqpix[gray!50](16,5)
\draw(17,5) node[right] { $\grf f_4(X)$: elevation 4, shift 0};

\sqpix[gray!30](16,3)
\draw(17,3) node[right] { $\grf f_4(X)+1$: elevation 4, shift 1};

\cipix[black!50](16,1)
\draw(17,1) node[right] { $\grf f_5(X)$: elevation 5, shift 0};
\end{scope}

\foreach \x in {0,...,\q}
{
  \pgfmathtruncatemacro{\y}{round(\e * \x / \q) }
  \pgfmathtruncatemacro{\yy}{round((\e+1) * (\x) / \q) }

  \ifnum \y=\yy
    \sqpix[gray!50](\x, \y)
    \sqpix[gray!20](\x, \y+1)

    \fill[gray!50] (\x-0.4,-1.5+0.1) rectangle (\x+0.4,-1-0.1);

  \else
    \sqpix[gray!50](\x, \y)
    \sqpix[gray!20](\x, \y+1)
  \fi
}

\foreach \x in {0,...,\q}
{
  \pgfmathsetmacro{\y}{round((\e+1) * \x / \q) }
  \pgfmathsetmacro{\ynext}{round((\e+1) * (\x+1) / \q) }
%  \cipix[black!20](\x, \y-1)
  
  \ifnum\x=\q
  \else
    \draw[black!40,thick] (\x, \y) -- (\x+1, \ynext);
%    \draw[black!20,thick] (\x, \y-1) -- (\x+1, \ynext-1);
  \fi
  \cipix[black!50](\x, \y)
}
\end{scope}

\end{tikzpicture}
\caption{Equality partition for elevations 4 and 5 on an image of width 15.}
\label{eqsetens_fig}
\end{figure}

\subsection{Building the partition tree and $X$ domain reduction}\label{hough_tree_build}
By definition, Hough patterns $\Lens$ is a partition-union ensemble consisting of $\image$-partitions $\lsets e$ and we can apply algorithm~\ref{partree_std_alg} to build the corresponding partition tree. Let's assume that its produced nodes are again denoted as $\lset ki, \;\;k=0, 1,..., \ceil*{\log_2\card E}, \;\; i = 0,1,..., \lcount k = \ceil*{\frac{\card E}{2^k}}$.

We say that $\lset ki$ \defstyle{covers} elevation~$e$ if node $\lsets e=\lset 0e$ resides in the subtree rooted at $\lset ki$. Obviously, for a fixed $k$ nodes $\lset ki$ cover consequent intervals each consisting of $2^k$ consequent numbers, except maybe the last $i=\lcount k-1$ which may cover fewer elevations (only one in the worst case for $\card E=2^n+1,\; n\in\mathbb N$). If we denote the covering set of node $\lset ki$ as $C(k,i)$, then
$$
C(k,i)\eqdef\curly{i2^k,\; i2^k+1,\; i2^k+2, ...,\; i2^k+2^k-1}\cap E.
$$
Of course, for $\lset ki$ having two children, $C(k,i) = C(k-1,2i)\cup C(k-1,2i+1)$ and if $\lset ki$ has one child, then $C(k,i) = C(k-1,2i)$.

By design of the algorithm and the $\comref$ operation associativity and commutativity,
\begin{equation}\label{multiple_vee_presentation}
\lset ki = \!\!\!\!\!\!\bigvee\limits_{e\in C(k,i)}\!\!\!\!\!\lsets e
= \!\!\!\!\!\!\bigvee\limits_{e\in C(k,i)}\!\!\!\shiftspan{\lfuncs e(\curly{X})} = \!\!\!\!\!\!\bigvee\limits_{e\in C(k,i)}\!\!\!\orbit{\lfuncs e(X)}.
\end{equation}
\begin{proposition}\label{pset_lset_connection_prop}
All $\lset ki$ are span-partitions with span functions $\lfuncs e$ for any $e\in C(k,i)$:
$$
\lset ki = \shiftspan{\lfuncs e(\pset ki)},
$$
where
$$
\pset ki\eqdef \pi(\lset ki).
$$
\end{proposition}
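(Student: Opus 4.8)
The plan is to prove this by induction on the tree level~$k$, running from the leaves $k=0$ up to the root, with Corollary~\ref{part_span_mix_corollary} doing all the work. Recall from~(\ref{multiple_vee_presentation}) that every node is an iterated common refinement $\lset ki=\bigvee_{e\in C(k,i)}\orbit{\lfuncs e(X)}$ of the single-elevation partitions $\orbit{\lfuncs e(X)}=\shiftspan{\lfuncs e(\curly X)}$, each of which is a span-partition with spanning function $\lfuncs e$ by Proposition~\ref{single_span_func_part_prop}. Corollary~\ref{part_span_mix_corollary} says that one such common refinement is again a span-partition and may be written with \emph{either} of the two spanning functions; the whole argument is just propagating this along the tree while keeping track of which spanning functions stay admissible.

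For the base case $k=0$ one has $\lset 0i=\lsets i=\orbit{\lfuncs i(X)}=\shiftspan{\lfuncs i(\curly X)}$ and $C(0,i)=\curly i$, so $\pset 0i=\pi(\lset 0i)=\curly X$ (projection is shift-invariant, hence constant on an orbit, and maps $\lfuncs i(X)$ to $X$) and the claim holds with the single admissible span function $\lfuncs i$. When $\lset ki$ has just one child $\lset{k-1}{2i}$ we have $\lset ki=\lset{k-1}{2i}$, $\pset ki=\pset{k-1}{2i}$ and $C(k,i)=C(k-1,2i)$, so the inductive hypothesis for $\lset{k-1}{2i}$ transfers verbatim.

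The main case is $\lset ki=\lset{k-1}{2i}\comref\lset{k-1}{2i+1}$ with $C(k,i)=C(k-1,2i)\sqcup C(k-1,2i+1)$. Here I would fix arbitrary $e_1\in C(k-1,2i)$ and $e_2\in C(k-1,2i+1)$; the inductive hypothesis gives $\lset{k-1}{2i}=\shiftspan{\lfuncs{e_1}(\pset{k-1}{2i})}$ and $\lset{k-1}{2i+1}=\shiftspan{\lfuncs{e_2}(\pset{k-1}{2i+1})}$, and applying Corollary~\ref{part_span_mix_corollary} to these two span-partitions yields
$$
\lset ki=\shiftspan{\lfuncs{e_1}(\ens C)}=\shiftspan{\lfuncs{e_2}(\ens C)},\qquad \ens C=\pi(\lset ki)=\pset ki .
$$
Since $\pset ki$ does not depend on the chosen $e_1,e_2$, letting $e_1$ range over $C(k-1,2i)$ and $e_2$ over $C(k-1,2i+1)$ and taking the union $C(k,i)=C(k-1,2i)\cup C(k-1,2i+1)$ gives $\lset ki=\shiftspan{\lfuncs e(\pset ki)}$ for every $e\in C(k,i)$, closing the induction.

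I expect the only point needing real care to be the clause ``for any $e\in C(k,i)$'': the induction goes through precisely because Corollary~\ref{part_span_mix_corollary} returns the refinement as a span of \emph{either} parent's spanning function, so the set of admissible span functions grows by union up the tree instead of collapsing. Everything else is routine bookkeeping; in particular $\pset ki=\pi(\lset ki)$ is automatically an $X$-partition, equal to $\pset{k-1}{2i}\comref\pset{k-1}{2i+1}\comref\eqsetens{e_1}{e_2}$ by the same corollary, so each invocation of Corollary~\ref{part_span_mix_corollary} is legitimate and the statement is essentially a tree-indexed repackaging of it.
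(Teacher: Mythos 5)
Your proposal is correct and is essentially the paper's own argument: the paper proves this in one line by ``applying Corollary~\ref{part_span_mix_corollary} multiple times'' to the iterated refinement~(\ref{multiple_vee_presentation}), and your level-by-level induction on $k$ is exactly the natural way to make that iteration precise, including the key observation that the corollary lets either parent's spanning function survive so the admissible set of span functions accumulates to all of $C(k,i)$.
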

\begin{proof}
It follows from~(\ref{multiple_vee_presentation}) by applying corollary~\ref{part_span_mix_corollary} multiple times.
\end{proof}

\begin{proposition}\label{pset_recursive_proposition}
For any $\lset ki$ having to children (i.e. when $i < \glcount k$) and any $e_1\in C(k-1, 2i), \; e_2\in C(k-1, 2i+1)$ holds
\begin{equation}\label{pset_recursive_def_two}
\pset ki = \pi(\lset ki) = \pset{k-1}{2i}\comref\pset{k-1}{2i+1}\comref\eqsetens{e_1}{e_2}.
\end{equation}
If $\lset ki$ had no children then 
\begin{equation}\label{pset_recursive_def_one}
\pset ki=\pset{k-1}{2i}.
\end{equation}
\end{proposition}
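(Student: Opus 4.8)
The plan is to deduce both identities directly from the span-partition machinery of section~\ref{image_section}; no fresh computation is needed, only a correct bookkeeping of which results apply.

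First I would dispatch the one-child case~(\ref{pset_recursive_def_one}). When $\lset ki$ has a single child, algorithm~\ref{partree_std_alg} literally sets $\lset ki\assign\lset{k-1}{2i}$, so these two sets are equal as $\image$-patterns. Applying the projection $\pi$ and using $\pset ki\eqdef\pi(\lset ki)$ gives $\pset ki=\pi(\lset ki)=\pi(\lset{k-1}{2i})=\pset{k-1}{2i}$.

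For the two-children case~(\ref{pset_recursive_def_two}), the key observation is that at such a node algorithm~\ref{partree_std_alg} produces $\lset ki=\lset{k-1}{2i}\comref\lset{k-1}{2i+1}$, i.e. the common refinement of exactly two span-partitions, which is precisely the situation covered by corollary~\ref{part_span_mix_corollary}. I would invoke proposition~\ref{pset_lset_connection_prop} on each child: since $e_1\in C(k-1,2i)$ the child $\lset{k-1}{2i}$ is the span-partition spanned by the graph of $f_{e_1}$ over $\pset{k-1}{2i}$, and since $e_2\in C(k-1,2i+1)$ the child $\lset{k-1}{2i+1}$ is spanned by the graph of $f_{e_2}$ over $\pset{k-1}{2i+1}$; moreover both $\pset{k-1}{2i}$ and $\pset{k-1}{2i+1}$ are $X$-partitions by that same proposition, so the hypotheses of corollary~\ref{part_span_mix_corollary} are met. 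Feeding these into corollary~\ref{part_span_mix_corollary} with spanning functions $\grf f_{e_1}$ and $\grf f_{e_2}$ and $X$-partitions $\pset{k-1}{2i}$, $\pset{k-1}{2i+1}$ yields
$$
\lset ki=\shiftspan{\grf f_{e_1}(\ens C)}=\shiftspan{\grf f_{e_2}(\ens C)},\qquad \ens C=\pi(\lset ki)=\pset{k-1}{2i}\comref\pset{k-1}{2i+1}\comref\eqsetens{e_1}{e_2},
$$
and since $\pset ki=\pi(\lset ki)$ by definition we get $\ens C=\pset ki$, which is exactly~(\ref{pset_recursive_def_two}).

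There is no real obstacle here — the statement is a corollary of material already proved. The only two points worth a line of justification are: (i) that the hypotheses of corollary~\ref{part_span_mix_corollary} (two genuine span-partitions over $X$-partitions) are satisfied, which is precisely the content of proposition~\ref{pset_lset_connection_prop} applied to the two children; and (ii) the legitimacy of choosing $e_1,e_2$ anywhere in the respective covering sets — proposition~\ref{pset_lset_connection_prop} guarantees each child admits the graph of $f_e$ as a spanning function for \emph{every} covered elevation $e$, so corollary~\ref{part_span_mix_corollary} applies to each admissible pair, and because its left-hand side $\lset ki$ is fixed, the right-hand side $\pset{k-1}{2i}\comref\pset{k-1}{2i+1}\comref\eqsetens{e_1}{e_2}$ must equal $\pset ki$ regardless of the pair, even though $\eqsetens{e_1}{e_2}$ itself varies with the choice.
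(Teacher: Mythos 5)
Your proof is correct and follows essentially the same route as the paper: the one-child case is immediate from the algorithm's assignment, and the two-child case rewrites each child as a span-partition via proposition~\ref{pset_lset_connection_prop} and then applies corollary~\ref{part_span_mix_corollary}. Your extra remarks — checking that the hypotheses of the corollary are genuinely met and noting why the right-hand side is independent of the choice of $e_1, e_2$ — are points the paper leaves implicit, and they are worth stating.
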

\begin{proof}
The second part of the statement is obvious, so let's concentrate on the first one. By two-children assumption, $\lset ki = \lset{k-1}{2i}\comref\lset{k-1}{2i+1} = \shiftspan{\lfuncs {e_1}(\pset{k-1}{2i})}\comref\shiftspan{\lfuncs {e_2}(\pset{k-1}{2i+1})}$ for any $e_1\in C(k-1, 2i),\; e_2\in C(k-1, 2i+1)$ (proposition~\ref{pset_lset_connection_prop}). The first statement then immediately follows from corollary~\ref{part_span_mix_corollary}.
\end{proof}

This proposition means that domain $X$ projection partitions $\pset ki$ can be retrieved independently of $\lset ki$ by assuming $\pset 0e=\curly X$ for all $e\in E$ and using~(\ref{pset_recursive_def_two}) or~(\ref{pset_recursive_def_one}). $\pset ki$ also form a partition tree with the structure identical to the one generated by algorithm~\ref{partree_std_alg}. The $\lset ki$ partitions can in turn be obtained from $\pset ki$ using the spanning procedure. It effectively reduces the complex task of getting a common refinement of two $\lset ki$-sets  to constructing the common refinement of the two corresponding $\pset ki$ sets with one additional component, an equality set, mixed in. The arbitrary choice of $e_1$ and $e_2$ allows us to get a convenient bound on $\card{\pset ki}$.

\begin{proposition}
For all $k$ and $i$
\begin{equation}\label{pset_card_ineq}
\card{\pset ki}\leqslant\min(2^{2^k-1}, w).
\end{equation}
\end{proposition}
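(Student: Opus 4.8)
The plan is to establish the two bounds hidden in the minimum separately. The bound $\card{\pset ki}\leqslant\w$ is essentially free: by proposition~\ref{pset_recursive_proposition} every $\pset ki$ is built out of copies of $\curly X$ by iterated common refinements with equality partitions $\eqsetens{e_1}{e_2}$, each of which is an $X$-partition (proposition~\ref{eqsetens_is_partition_prop}), and a common refinement of $X$-partitions is again an $X$-partition; since $\card X=\w$, a partition of $X$ has at most $\w$ blocks. (Alternatively, invoke proposition~\ref{single_span_func_part_prop}: $\pset ki=\pi(\lset ki)$ is the projection of a span-partition and hence an $X$-partition.)

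For the bound $\card{\pset ki}\leqslant 2^{2^k-1}$ I would induct on the level $k$. The base case $k=0$ is immediate because $\pset 0i=\curly X$, so $\card{\pset 0i}=1=2^{2^0-1}$. For the inductive step, assume the bound at level $k-1$. If $\lset ki$ has a single child, then $\pset ki=\pset{k-1}{2i}$ by~(\ref{pset_recursive_def_one}), hence $\card{\pset ki}\leqslant 2^{2^{k-1}-1}\leqslant 2^{2^k-1}$. If $\lset ki$ has two children, I choose $e_1$ to be the largest elevation in $C(k-1,2i)$ and $e_2$ the smallest elevation in $C(k-1,2i+1)$. Since $2i$ is not the last index of level $k-1$, the covering set $C(k-1,2i)$ is full, so $e_1=(2i+1)2^{k-1}-1$ and $e_2=(2i+1)2^{k-1}=e_1+1$ are consecutive integers, whence $\card{\eqsetens{e_1}{e_2}}\leqslant 2$ by~(\ref{hough_consequent_ineq}). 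Now~(\ref{pset_recursive_def_two}) together with~(\ref{comref_card_ineq}) applied twice gives
$$
\card{\pset ki}\leqslant\card{\pset{k-1}{2i}}\cdot\card{\pset{k-1}{2i+1}}\cdot\card{\eqsetens{e_1}{e_2}}\leqslant 2^{2^{k-1}-1}\cdot 2^{2^{k-1}-1}\cdot 2=2^{2^k-1},
$$
which closes the induction. Combining this with $\card{\pset ki}\leqslant\w$ yields $\card{\pset ki}\leqslant\min(2^{2^k-1},\w)$.

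The one place needing a little care is the claim that the chosen $e_1,e_2$ are legitimate and consecutive; this rests on the structural observation already recorded in the text preceding the proposition, namely that at each level only the last node can have a non-full covering set, so $C(k-1,2i)$ is full and its maximum sits exactly one below the minimum of $C(k-1,2i+1)$. Beyond that, the argument is a routine combination of the recursion in proposition~\ref{pset_recursive_proposition}, the equality-set size bound, and the product inequality~(\ref{comref_card_ineq}), and I do not anticipate any serious obstacle. One could equally well drop the appeal to~(\ref{hough_consequent_ineq}) and use directly $\card{\eqsetens{e_1}{e_2}}\leqslant|e_1-e_2+1|=2$ from the preceding proposition, which makes the freedom in choosing $e_1,e_2$ explicit and explains why an exponential-in-$2^k$ bound is the natural one: each refinement step can at worst square the block count and multiply it by $2$.
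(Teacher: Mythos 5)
Your proposal is correct and follows essentially the same route as the paper: the $w$ bound comes from $\pset ki$ being an $X$-partition, and the $2^{2^k-1}$ bound comes from choosing $e_1$ as the maximum of $C(k-1,2i)$ and $e_2=e_1+1$ as the minimum of $C(k-1,2i+1)$, then applying~(\ref{comref_card_ineq}) together with $\card{\eqsetens{e_1}{e_2}}\leqslant 2$. The only cosmetic difference is that the paper phrases the induction as the recurrence $c_0=1$, $c_k=2c_{k-1}^2$ and verifies $c_k=2^{2^k-1}$, whereas you carry the closed form through the induction directly.
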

\begin{proof}
By proposition~\ref{pset_recursive_proposition}, either
$
\pset ki = \pset{k-1}{2i}\comref\pset{k-1}{2i+1}\comref\eqsetens{e_1}{e_2},
$
with $e_1\in C(k-1, 2i), \; e_2\in C(k-1, 2i+1)$, or
$\pset ki = \pset{k-1}{2i}.$ For the first case we may always choose $e_1$ and $e_2$ so that $e_2=e_1+1$, explicitely $e_1=2i\cdot2^{k-1} + 2^{k-1}-1$ is the greatest element of $C(k-1, 2i)$ and $e_2 = e_1+1$ is the smallest element of $C(k-1, 2i+1)$. With such choice, using~(\ref{comref_card_ineq}), we have $\card{\pset ki}\leqslant2(\max\limits_j\card{\pset{k-1}j})^2$, which is of course also true for the second case $\pset ki = \pset{k-1}{2i}$. Since $\card{\pset0i}=\card{\curly{X}}=1$ for all $i$ and $\eqsetens{e_1}{e_2}\leqslant 2$ by ~(\ref{hough_consequent_ineq}), $\;\card{\pset ki}$ is bounded by sequence $c_k$ defined as $c_0=1, \; c_k=2c_{k-1}^2$. One can verify that $c_k=2^{2^k-1}$. Inequality $\card{\pset ki}\leqslant w$ is trivial, as $\pset ki$ is an $X$-partition.
\end{proof}

\noindent Using proposition~\ref{single_span_func_part_prop} we deduce
\begin{corollary}\label{lset_card_corollary}
For all $k$ and $i$
$$
  \card{\lset ki}\leqslant h\cdot\min(2^{2^k-1}, w).
$$
\end{corollary}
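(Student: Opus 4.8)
The plan is to chain together the three ingredients that have just been assembled, so this should be a short deduction rather than a fresh argument. First I would invoke Proposition~\ref{pset_lset_connection_prop}: fixing any elevation $e\in C(k,i)$, it gives $\lset ki = \shiftspan{\lfuncs e(\pset ki)}$ with $\pset ki = \pi(\lset ki)$. That proposition (via corollary~\ref{part_span_mix_corollary}) also guarantees that $\pset ki$ is an honest $X$-partition and that $\grf{\lfuncs e}$ is a legitimate spanning function for each such $e$, which is exactly what the next step needs.

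Next I would apply part~3 of Proposition~\ref{single_span_func_part_prop} to the spanning function $\grf{\lfuncs e}$ and the $X$-partition $\pset ki$, obtaining the exact count $\card{\lset ki} = \card{\shiftspan{\lfuncs e(\pset ki)}} = h\cdot\card{\pset ki}$. Finally I would substitute the bound $\card{\pset ki}\leqslant\min(2^{2^k-1},w)$ from inequality~(\ref{pset_card_ineq}), which yields $\card{\lset ki}\leqslant h\cdot\min(2^{2^k-1},w)$, the claimed inequality.

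Since every step is a direct citation of an already-proved statement, there is no genuine obstacle here; the only point requiring a moment's care is verifying that the hypotheses of Proposition~\ref{single_span_func_part_prop} are actually in force — that $\pset ki$ is an $X$-partition and that $\lfuncs e$ spans $\lset ki$ for every $e\in C(k,i)$ — but both facts are precisely the content of Proposition~\ref{pset_lset_connection_prop}, so the corollary follows immediately.
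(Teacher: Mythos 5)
Your proposal is correct and follows exactly the paper's route: the paper derives the corollary in one line from Proposition~\ref{single_span_func_part_prop} (cardinality of a span partition is $h$ times the cardinality of the underlying $X$-partition), combined with Proposition~\ref{pset_lset_connection_prop} and the bound~(\ref{pset_card_ineq}) on $\card{\pset ki}$. Your write-up simply makes explicit the hypothesis-checking that the paper leaves implicit; there is nothing to add.
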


\subsection{Complexity bound}

\begin{lemma}
For any $n\in\mathbb N$
\begin{equation}\label{powerpowersum_ineq}
\sum\limits_{k=1}^n 2^{2^k-k-1}<2^{2^n-n-1}+2^{2^{n-1}-n+1}.
\end{equation}
\end{lemma}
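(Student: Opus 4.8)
The plan is to prove the bound $\sum_{k=1}^n 2^{2^k-k-1} < 2^{2^n-n-1} + 2^{2^{n-1}-n+1}$ by isolating the top term $k=n$ and showing that the remaining sum $\sum_{k=1}^{n-1} 2^{2^k-k-1}$ is dominated by $2^{2^{n-1}-n+1}$, since the $k=n$ term contributes exactly $2^{2^n-n-1}$. So the real content is the estimate
\begin{equation}\label{tailsum_goal}
\sum_{k=1}^{n-1} 2^{2^k-k-1} < 2^{2^{n-1}-n+1}.
\end{equation}

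First I would handle the small cases $n=1$ (empty sum, and $0 < 2^{-1}$ is clear, or rather the whole statement reads $2^{0} < 2^{0} + 2^{1}$) and $n=2$ directly, so that in the inductive step I may assume $n-1\geqslant 2$. Then I would prove \eqref{tailsum_goal} by induction on $n$. In the step, split the sum for $n$ as $\sum_{k=1}^{n-1} 2^{2^k-k-1} = \left(\sum_{k=1}^{n-2} 2^{2^k-k-1}\right) + 2^{2^{n-1}-(n-1)-1}$, apply the induction hypothesis to bound the bracket by $2^{2^{n-2}-(n-1)+1} = 2^{2^{n-2}-n+2}$, and observe the second term equals $2^{2^{n-1}-n}$. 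It then suffices to show $2^{2^{n-2}-n+2} + 2^{2^{n-1}-n} \leqslant 2^{2^{n-1}-n+1}$, i.e. after dividing by $2^{2^{n-1}-n}$, that $2^{2^{n-2}-2^{n-1}+2} + 1 \leqslant 2$, i.e. $2^{2-2^{n-2}} \leqslant 1$, which holds as soon as $2^{n-2}\geqslant 2$, i.e. $n\geqslant 4$. The residual cases $n=3$ (and the base $n=2$ already done) are checked by hand.

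An alternative, cleaner route avoids induction entirely: bound the tail sum by noting that for $1\leqslant k\leqslant n-1$ the exponent satisfies $2^k-k-1 \leqslant 2^{n-1}-(n-1)-1 - (n-1-k)$ would be too lossy, so instead compare consecutive terms: the ratio of the $k$-th term to the $(k{+}1)$-th is $2^{(2^k-k-1)-(2^{k+1}-k-2)} = 2^{1-2^k} \leqslant 2^{-1}$ for $k\geqslant 1$. Hence $\sum_{k=1}^{n-1} 2^{2^k-k-1}$ is bounded by a geometric series with ratio $\leqslant 1/2$ and largest term $2^{2^{n-1}-(n-1)-1} = 2^{2^{n-1}-n}$, giving $\sum_{k=1}^{n-1} 2^{2^k-k-1} < 2\cdot 2^{2^{n-1}-n} = 2^{2^{n-1}-n+1}$, which is exactly \eqref{tailsum_goal}. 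Adding the $k=n$ term $2^{2^n-n-1}$ to both sides yields the claim. I expect the main (really only) obstacle to be making the geometric-series comparison rigorous at the endpoint — that the strict inequality $r\leqslant 1/2$ for the ratio combined with a finite sum gives a strict bound by the infinite geometric sum — and double-checking the exponent arithmetic $2^{k+1}-(k+1)-1 - (2^k-k-1) = 2^k - 1$, so the ratio is $2^{-(2^k-1)}\leqslant 2^{-1}$; this is routine but is where a sign or off-by-one slip would hide, so I would write it out carefully.
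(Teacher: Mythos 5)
Your proposal is correct, and your second route is genuinely different from (and cleaner than) the paper's argument. The paper proves the inequality by direct induction on $n$: it verifies $n=1,2,3$ by hand and, for $m>3$, writes $s_m=s_{m-1}+2^{2^m-m-1}$, applies the hypothesis to $s_{m-1}$, and absorbs the smaller of the two resulting terms using $2^{2^{m-2}-m+2}<2^{2^{m-1}-m}$ — this is essentially your first route, just organized as induction on the full inequality rather than on the tail estimate \eqref{tailsum_goal}. Your geometric-series argument is the real added value: the consecutive-term ratio $a_k/a_{k+1}=2^{-(2^k-1)}\leqslant\frac12$ for $k\geqslant1$ gives $a_k\leqslant a_{n-1}2^{-(n-1-k)}$, hence $\sum_{k=1}^{n-1}a_k\leqslant a_{n-1}\bigl(2-2^{-(n-2)}\bigr)<2^{2^{n-1}-n+1}$ with strict inequality coming from the finiteness of the geometric sum, and adding the exact top term $2^{2^n-n-1}$ finishes the claim with no case analysis and no induction. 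Your exponent arithmetic checks out. One trivial slip in the inductive route: $2^{n-2}\geqslant2$ is equivalent to $n\geqslant3$, not $n\geqslant4$; this is harmless since you verify $n=2,3$ by hand anyway, so all cases are covered.
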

\begin{proof}
We prove this by induction. For $n=1,2,3$ the inequality holds. Suppose that $m>3$ and it is valid for all $n<m$. After denoting the sum on the left side of~(\ref{powerpowersum_ineq}) as $s_n$, the induction hypothesis implies
$$
s_m= s_{m-1} + 2^{2^m-m-1} < 2^{2^{m-1}-m}+2^{2^{m-2}-m+2} + 2^{2^m-m-1}.
$$
Since the second term $2^{2^{m-2}-m+2}<2^{2^{m-1}-m}$ for $m>3$ we get
$$
s_m<2^{2^{m-1}-m}+2^{2^{m-1}-m} + 2^{2^m-m-1} = 2^{2^m-m-1} + 2^{2^{m-1}-m+1}.
$$
\end{proof}

At last we can state the main theorem estimating the Hough ensemble complexity.
\begin{theorem}\label{hough_compl_theorem}
The complexity of the Hough ensemble $\Lens$ produced by $E$ base lines on an image with width $w$ and height $h$ is bounded by the following inequality:
\begin{equation}\label{hough_complexity_ineq}
\mu(\Lens)< \frac{4whE}{\log_2w+1}
\left(1+\sqrt{\frac2w}\right) + h(w-E).
\end{equation}
The corresponding computation chain has depth at most $\log_2 w\cdot(\ceil{\log_2 E} + 1).$
\end{theorem}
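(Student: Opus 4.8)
The plan is to plug Corollary~\ref{lset_card_corollary} into the partition-tree weight estimate~\eqref{main_tree_weight_ineq}, and then reduce the resulting double sum to a bounded ``head'' plus a geometric ``tail'' by locating the index where the minimum in $\min(2^{2^k-1},w)$ switches branches and invoking the doubly-exponential sum inequality~\eqref{powerpowersum_ineq}.

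First I would apply the lemma preceding~\eqref{main_tree_weight_ineq} with $a_k \eqdef h\cdot\min(2^{2^k-1},w)$, which is legitimate by Corollary~\ref{lset_card_corollary}. Since the domain is the image we have $\card{\ens U}=\card\image = wh$, and since each $\lsets e$ is an $\image$-partition of size $h$ we have $\sum_{i=0}^{E-1}\card{\lsets i}=Eh$. Combining with $\mu(\Lens)\leqslant\omega(\thetree)$ from~\eqref{mu_tree_weight_ineq} gives
$$
\mu(\Lens)\leqslant h(w-E) + hE\cdot S,\qquad S\eqdef\sum_{k=1}^{\ceil{\log_2 E}}\frac{\min(2^{2^k-1},w)}{2^k},
$$
so everything reduces to showing $S < \frac{4w}{\log_2 w+1}\bigl(1+\sqrt{2/w}\bigr)$. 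To bound $S$, let $m$ be the largest integer with $2^{2^m-1}\leqslant w$ (so $m\geqslant 0$, and $m\geqslant 1$ once $w\geqslant 2$). For $k\leqslant m$ the summand equals $2^{2^k-k-1}$ and for $k>m$ it equals $w/2^k$; hence the tail is at most $w/2^m$ and, by~\eqref{powerpowersum_ineq}, the head is $<2^{2^m-m-1}+2^{2^{m-1}-m+1}$, so $S< 2^{2^m-m-1}+2^{2^{m-1}-m+1}+\tfrac{w}{2^m}$ (when $m\geqslant\ceil{\log_2 E}$ the tail is empty, which only helps). The defining inequalities $2^{2^m-1}\leqslant w< 2^{2^{m+1}-1}$ give $2^m>\tfrac12(\log_2 w+1)$, hence $1/2^m<2/(\log_2 w+1)$; together with $2^{2^m-m-1}=2^{2^m-1}/2^m\leqslant w/2^m$ and $2^{2^{m-1}}=(2^{2^m})^{1/2}\leqslant\sqrt{2w}$ (so $2^{2^{m-1}-m+1}\leqslant 2\sqrt{2w}/2^m$) this yields $S<\frac{2w+2w+4\sqrt{2w}}{\log_2 w+1}=\frac{4w}{\log_2 w+1}\bigl(1+\sqrt{2/w}\bigr)$, which after substitution is exactly~\eqref{hough_complexity_ineq} (the edge case $w=1$, where $m=0$, is checked directly against the right-hand side).

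For the depth claim I would argue directly rather than through~\eqref{main_tree_depth_ineq}, since the latter carries a spurious factor of $\log_2 h$. The associated chain $c(\thetree)$ has $\ceil{\log_2 E}+1$ segments (the tree depth, by the corollary to Proposition~\ref{lglcount_ineq_prop}); by Corollary~\ref{ensemble_union_part_corollary} the combination depth of a segment is the maximum over its constituent edges $\lset ki\finerarr\lset{k-1}j$ together with the root edge $\ens U\finerarr\lset{\ceil{\log_2 E}}0$. By Proposition~\ref{pset_lset_connection_prop} and Proposition~\ref{finer_span_part_prop}(c) the depth of the span-partition edge $\lset ki\finerarr\lset{k-1}j$ equals that of the underlying $X$-partition edge $\pset ki\finerarr\pset{k-1}j$, which by Proposition~\ref{combw_depth_generic_bound_prop} is at most $\ceil{\log_2\card{\pset ki}}\leqslant\ceil{\log_2 w}$ since $\pset ki$ is an $X$-partition; the root edge builds $w$-pixel lines from pixel singletons and also has depth at most $\ceil{\log_2 w}$. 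Summing the $\ceil{\log_2 E}+1$ contributions gives the stated bound.

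The main obstacle is the estimate $S<\frac{4w}{\log_2 w+1}\bigl(1+\sqrt{2/w}\bigr)$: one must correctly identify the crossover index $m$, track the two boundary terms produced by~\eqref{powerpowersum_ineq} (the $2^{2^{m-1}-m+1}$ term is precisely what generates the $\sqrt{2/w}$ correction factor), and verify the degenerate cases $w=1$ and $m\geqslant\ceil{\log_2 E}$. Everything else is routine bookkeeping with the partition-tree machinery already assembled, and the depth bound requires only the observation that the span operation does not increase combination depth (Proposition~\ref{finer_span_part_prop}), keeping the per-layer cost at $\ceil{\log_2 w}$ rather than $\ceil{\log_2(wh)}$.
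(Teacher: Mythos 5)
Your proof is correct and follows essentially the same route as the paper's: both substitute Corollary~\ref{lset_card_corollary} into~(\ref{main_tree_weight_ineq}), split the resulting sum at the index where $2^{2^k-1}$ reaches $w$ (the paper fixes $k_0=\ceil{\log_2(\log_2 w+1)}-1$ while you take the exact crossover $m$ of the $\min$, which is the same point), bound the head via~(\ref{powerpowersum_ineq}) and the tail as a geometric series, and arrive at the identical arithmetic $\frac{2w+4\sqrt{2w}+2w}{\log_2 w+1}$. Your depth argument is also the paper's, merely unpacked: the paper likewise avoids the $\log_2 h$ factor by transferring each edge's combination depth to the projected $X$-domain tree via Proposition~\ref{finer_span_part_prop} before invoking~(\ref{main_tree_depth_ineq}).
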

\begin{proof}
Suppose we have built the partition tree computing $\Lens$ as described in section~\ref{hough_tree_build}. Because of~(\ref{mu_tree_weight_ineq}) it is enough to estimate the weight of the constructed tree $\thetree$.
\newcommand\tv{\log_2(\log_2 w + 1)}
Denote $t=\tv$ and let $k_0=\ceil t-1$, in this case
$t-1\leqslant k_0 < t$. Taking into account that for the Hough ensemble $\card{\lsets i} = h$ and $\card{\ens I} = w\cdot h$, using corollary~\ref{lset_card_corollary} we can rewrite~(\ref{main_tree_weight_ineq}) as
\begin{equation}\label{tree_assessment_proof_ineq}
\omega(\thetree)\leqslant w\cdot h + E\cdot h(\sum\limits_{k=1}^{k_0}\frac{2^{2^k-1}}{2^k} + 
\sum\limits_{k=k_0+1}^{\ceil{\log_2E}}\frac w{2^k}) - E\cdot h. 
\end{equation}

Using~(\ref{powerpowersum_ineq}) the first sum is bounded by
\begin{equation*}
2^{2^{k_0}-k_0-1} + 2^{2^{k_0-1}-k_0+1}
< 2^{2^t-(t-1)-1} + 2^{2^{t-1}-(t-1)+1} 
= \frac{2^{2^t}}{2^t} + \frac{4\cdot2^{2^{t-1}}}{2^t} =
\frac{2w + 4\sqrt{2w}}{\log_2 w + 1}.
\end{equation*}

The second sum is bounded by 
$$
\sum\limits_{k=k_0+1}^{\infty}\frac w{2^k}=\frac w{2^{k_0}}\leqslant\frac w{2^{t-1}} = \frac{2w}{\log_2w+1}.
$$
After substitution of these terms to~(\ref{tree_assessment_proof_ineq}) we obtain~(\ref{hough_complexity_ineq}).

To prove the depth inequality note that by proposition~\ref{finer_span_part_prop} depth $\depth_{\lset ki}(\lset{k-1}{2i}) = \depth_{\pset ki}(\pset{k-1}{2i})$ (same for $\lset{k-1}{2i+1}$). The bound then follows from~(\ref{main_tree_depth_ineq}) for the $\pset ki$ partition tree (obviously $\card{\pset ki}\leqslant w$).
\end{proof}

\begin{corollary}
On a square $n\times n$ image ($n>1$) the Hough ensemble $\Lens$ generated from $n$ base lines has complexity
\begin{equation}\label{hough_asympt_ineq}
\mu(\Lens) < \frac{4n^3}{\log_2n}\left(1-\frac1{\log_2n+1} + \sqrt{\frac2n}\right).
\end{equation}
The corresponding computation chain has depth at most $\log_2 n\cdot(\log_2 n + 1).$
\end{corollary}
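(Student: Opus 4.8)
The plan is to obtain~(\ref{hough_asympt_ineq}) as the special case $w=h=E=n$ of Theorem~\ref{hough_compl_theorem}, followed by an elementary rearrangement of the right-hand side. First I would substitute $w=h=E=n$ into~(\ref{hough_complexity_ineq}): the correction term $h(w-E)$ becomes $n(n-n)=0$ and disappears, while $\frac{4whE}{\log_2w+1}$ becomes $\frac{4n^3}{\log_2 n+1}$, so the theorem gives
$$
\mu(\Lens)<\frac{4n^3}{\log_2 n+1}\left(1+\sqrt{\frac2n}\right).
$$

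Next I would recast this into the stated form. The key identity is $\frac1{\log_2 n+1}=\frac1{\log_2 n}\left(1-\frac1{\log_2 n+1}\right)$, verified by clearing denominators; multiplying by $4n^3$ turns the ``constant'' summand $\frac{4n^3}{\log_2 n+1}$ into $\frac{4n^3}{\log_2 n}\left(1-\frac1{\log_2 n+1}\right)$. For the other summand I would use that $n>1$ forces $0<\log_2 n<\log_2 n+1$, whence $\frac{4n^3}{\log_2 n+1}\sqrt{\frac2n}<\frac{4n^3}{\log_2 n}\sqrt{\frac2n}$. Adding the two estimates produces exactly $\frac{4n^3}{\log_2 n}\left(1-\frac1{\log_2 n+1}+\sqrt{\frac2n}\right)$, which is~(\ref{hough_asympt_ineq}). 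The depth claim I would get by setting $w=E=n$ in the chain-depth bound $\log_2 w\cdot(\ceil{\log_2 E}+1)$ supplied by Theorem~\ref{hough_compl_theorem}.

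I do not expect a genuine obstacle here: the corollary is a reformulation of the theorem in the square case. The only points that need care are the exactness of the algebraic identity above and the explicit use of the hypothesis $n>1$, which guarantees $\log_2 n>0$ and thereby makes the passage from the denominator $\log_2 n+1$ to the smaller $\log_2 n$ a legitimate one-sided weakening of the bound rather than an error.
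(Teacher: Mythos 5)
Your proposal is correct and matches the paper's (implicit) derivation: the corollary is stated without proof immediately after Theorem~\ref{hough_compl_theorem}, and the intended argument is exactly the substitution $w=h=E=n$ followed by the algebraic rewriting $\frac1{\log_2 n+1}=\frac1{\log_2 n}\left(1-\frac1{\log_2 n+1}\right)$ and the one-sided weakening of the $\sqrt{2/n}$ term that you give. The only caveat, which is an imprecision in the paper's statement rather than in your argument, is that the theorem's depth bound yields $\log_2 n\cdot(\ceil{\log_2 n}+1)$, which exceeds the corollary's claimed $\log_2 n\cdot(\log_2 n+1)$ unless $n$ is a power of two.
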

%TODO: enhance depth.

\begin{remark}
Inequalities~(\ref{hough_complexity_ineq}) and~(\ref{hough_asympt_ineq}) hold for any span generated patterns ensemble satisfying~(\ref{hough_consequent_ineq}), because we used only this property to prove the theorem. In fact a weaker condition like $\card{\eqsetens{e}{e+1}}<C$ would also work with the same asymptotics but a different constant.
\end{remark}

\section{Discussion}\label{discussion_section}

\begin{wrapfigure}[6]{R}{0.23\textwidth}
\begin{tikzpicture}
[scale=0.25]
\def\w{15}
\def\e{6}
\pgfmathsetmacro{\q}{\w-1}
\pgfmathsetmacro{\h}{6}

\def\clr{gray!50}

\draw[help lines, dashed, step=1] (0,\h) grid +(\w,2);
\draw[help lines, step=1] (0,0) grid (\w,\h);

\begin{scope}
[xshift=0.5cm, yshift=0.5cm]

\def\sqpix(#1,#2) {
\draw[color=black,fill=\clr] (#1,#2) ++(-0.4,-0.4) rectangle +(0.8,0.8);
}

\foreach \x in {0,...,\q}
{
  \pgfmathtruncatemacro{\y}{round(\e * \x / \q)+1}
  \def\clr{gray!20}
  \sqpix(\x, \y)
}

\foreach \x in {0,...,10}
{
  \def\clr{gray!50}
  \pgfmathtruncatemacro{\y}{mod(round(\e * \x / \q)+1, \h) }
  \sqpix(\x, \y)
}

\foreach \x in {11,...,\q}
{
  \def\clr{gray!20}
  \pgfmathtruncatemacro{\y}{mod(round(\e * \x / \q)+1, \h) }
  \sqpix(\x, \y)
}

\end{scope}
\end{tikzpicture}
\end{wrapfigure}
In the introduction we initially defined the classical Hough transform not for ``cyclic'' (wrapping over image border) lines~(\ref{hough_cyclic_eq}) but for digital line segments~(\ref{intro_lines_eq}). The latter case can always be reduced to the former one by extending the original image vertically (for mostly horizontal lines) and padding it with zero values which changes the computation asymptotics at most by a fixed factor. One can also then prune the redundant (constant zero) input pixels and all the nodes which depend only on them from the generated computation circuit. This justifies the ``cyclic'' $\modfunc{\cdot}$ approach used in sections~\ref{image_section} and~\ref{hough_section}.

The Hough ensemble complexity bound in theorem~\ref{hough_compl_theorem} is {\slshape constructive} as its proof uses an explicit binary partition tree and thus the associated computation chain which is straightforwardly transformed into a computation circuit (section~\ref{connection_section}). Of course, in practice one should not try to directly compute $\lset {k-1}{2i}\comref\lset{k-1}{2i+1}$ in algorithm~\ref{partree_std_alg} but rather use recurrence relation~(\ref{pset_recursive_def_two}) for the $X$-domain.

It is worth noting that applying the proposed algorithm to the fast Hough transform patterns would produce the $n^2 \log_2 n$ asymptotics -- precisely the FHT algorithm complexity! Indeed, consequent FHT patterns (fig.~\ref{fht_fig}) for $w=h=E=n=2^m$ overlap either on $[0,\frac n2)$ or on $[\frac n2, n)$, blocks of patterns with four consequent elevations overlap on segments $[0,\frac n4), \;[\frac n4, \frac n2), \;[\frac n2, \frac{3n}4)$ or $[\frac{3n}4, n)$ and so on, so $\card{\pset ki} = 2^k$, which by~(\ref{main_tree_weight_ineq}) in the same manner as~(\ref{tree_assessment_proof_ineq}) gives
$$
\omega(\thetree)\leqslant n^2 + n^2\sum\limits_{k=1}^m \frac{2^k}{2^k} - n^2 = n^2\log_2 n.
$$
\begin{figure}[ht!]
\centering
\begin{tikzpicture}
[scale=0.33]
\def\w{8}
\def\h{11}

\def\clr{gray!50}
\def\sqpix(#1,#2) {
\fill[fill=\clr] (#1,#2) ++(-0.4,-0.4) rectangle +(0.8,0.8);
}
\def\drawhlines{
    \draw[thick] (4,0)--(4,\h);
    \draw[thin] (2,0)--(2,\h);
    \draw[thin] (6,0)--(6,\h);
}

\begin{scope}
\draw[help lines,step=1] (0,0) grid (\w,\h);
\draw[dotted] (4,0)--(4,-1);
\draw[dotted] (2,0)--(2,-1);
\draw[dotted] (6,0)--(6,-1);
\drawhlines

\begin{scope}
[xshift=0.5cm, yshift=0.5cm]

\foreach \x in {0,...,7}
{
  \sqpix(\x,0)
}

%---------------------
\def\clr{gray!30}
\foreach \x in {0,...,3}
{
  \sqpix(\x,4)
}
\foreach \x in {4,...,7}
{
  \sqpix(\x,5)
}

\end{scope}
\end{scope}

% ====================================

\begin{scope}
[yshift=-12cm]
\draw[help lines,step=1] (0,0) grid (\w,\h);
\drawhlines

\begin{scope}
[xshift=0.5cm, yshift=0.5cm]

%---------------------
\sqpix(0,0)
\sqpix(1,0)
\sqpix(2,1)
\sqpix(3,1)
\sqpix(4,1)
\sqpix(5,1)
\sqpix(6,2)
\sqpix(7,2)

%---------------------
\def\clr{gray!30}
\sqpix(0,5)
\sqpix(1,5)
\sqpix(2,6)
\sqpix(3,6)
\sqpix(4,7)
\sqpix(5,7)
\sqpix(6,8)
\sqpix(7,8)

\end{scope}
\end{scope}

%=================================

\begin{scope}
[xshift=10cm]
\draw[help lines,step=1] (0,0) grid (\w,\h);
\drawhlines
\draw[dotted] (4,0)--(4,-1);
\draw[dotted] (2,0)--(2,-1);
\draw[dotted] (6,0)--(6,-1);

\begin{scope}
[xshift=0.5cm, yshift=0.5cm]

\sqpix(0,0)
\sqpix(1,1)
\sqpix(2,1)
\sqpix(3,2)
\sqpix(4,2)
\sqpix(5,3)
\sqpix(6,3)
\sqpix(7,4)

%---------------------
\def\clr{gray!30}

\sqpix(0,5)
\sqpix(1,6)
\sqpix(2,6)
\sqpix(3,7)
\sqpix(4,8)
\sqpix(5,9)
\sqpix(6,9)
\sqpix(7,10)

\end{scope}
\end{scope}

% ===============================
\begin{scope}
[xshift=10cm, yshift=-12cm]
\draw[help lines,step=1] (0,0) grid (\w,\h);
\drawhlines

\begin{scope}
[xshift=0.5cm, yshift=0.5cm]

\sqpix(0,0)
\sqpix(1,1)
\sqpix(2,2)
\sqpix(3,3)
\sqpix(4,3)
\sqpix(5,4)
\sqpix(6,5)
\sqpix(7,6)

%---------------------
\def\clr{gray!30}

\sqpix(0,3)
\sqpix(1,4)
\sqpix(2,5)
\sqpix(3,6)
\sqpix(4,7)
\sqpix(5,8)
\sqpix(6,9)
\sqpix(7,10)

\end{scope}
\end{scope}

\end{tikzpicture}
\caption{Pairs of FHT patterns with consecutive elevations: 0-1, 2-3, 4-5, 6-7 (top to bottom, left to right). Vertical lines help distinguish common components. Image width is 8.}
\label{fht_fig}
\end{figure}
It gives hope that in practice Hough ensemble circuits might yield a much better result than~(\ref{hough_complexity_ineq}). Indeed, the cardinality bound in~(\ref{pset_card_ineq}) is very rough, in practice the $\pset ki$ sets might grow by far not that fast.
Performing the necessary computational experiments as well as assessing the number of operations and memory requirements of algorithm~\ref{partree_std_alg} applied to the Hough ensemble is the plan of our next research.

Another area of interest is generalizing or modifying the suggested approach to handle {\slshape shift-invariant} ensembles consisting of patterns which are not function graphs so their pattern orbits do not per se produce image partitioning.

\end{document}